\newcommand{\wrt}{w.\,r.\,t.}
\newcommand{\eg}{e.\,g.}
\newcommand{\ie}{i.\,e.}
\newcommand{\cf}{cf.}
\newcommand{\formComma}{\,\text{,}}
\newcommand{\formPeriod}{\,\text{.}}
\newcommand{\surf}{\mathcal{S}}
\newcommand{\surfDiscrete}{\surf_h}
\newcommand{\triangulation}{\mathcal{T}}
\newcommand{\feSpace}{\mathcal{V}(\surfDiscrete)}
\newcommand{\weak}[2]{\left( #1 \ , \ #2 \right)}
\newcommand{\dS}{\textup{d}\surf}
\newcommand{\R}{\mathbb{R}}
\newcommand{\tanPenP}{\omega_t^{\dirf}}
\newcommand{\tanPenQ}{\omega_t^{\qten}}
\newcommand{\pnorm}{\omega_{\text{n}}}
\newcommand{\ddt}{\frac{\textup{d}}{\textup{d}t}}
\newcommand{\dup}{\textup{d}}
\newcommand{\para}{\boldsymbol{X}}
\newcommand{\dirfC}{p}
\newcommand{\dirf}{\boldsymbol{\dirfC}}
\newcommand{\DirfC}{P}
\newcommand{\Dirf}{\boldsymbol{\DirfC}}
\newcommand{\qtenC}{q}
\newcommand{\qten}{\boldsymbol{\qtenC}}
\newcommand{\QtenC}{Q}
\newcommand{\Qten}{\boldsymbol{\QtenC}}
\newcommand{\rtenC}{r}
\newcommand{\rten}{\boldsymbol{\rtenC}}
\newcommand{\vnor}{v}
\newcommand{\normal}{\boldsymbol{\nu}}
\newcommand{\orderp}{S}
\newcommand{\xvar}{\xi}
\newcommand{\dxvar}{\delta\xvar}
\newcommand{\shop}{\boldsymbol{B}}
\newcommand{\shopC}{B}
\newcommand{\meanc}{\mathcal{H}}
\newcommand{\gaussc}{\mathcal{K}}
\newcommand{\g}{\boldsymbol{g}}
\newcommand{\LC}{\boldsymbol{E}}
\newcommand{\LCC}{E}
\newcommand{\proj}{\Pi}
\newcommand{\Id}{\operatorname{Id}}
\newcommand{\Tr}{\operatorname{Tr}}
\newcommand{\Lie}{\mathcal{L}}
\newcommand{\dsurf}{\delta_{\surf}}
\newcommand{\tangent}{\operatorname{T}\!}
\newcommand{\tangentT}[1]{\tensor{\tangent}{#1}}
\newcommand{\qspace}{\mathcal{Q}\surf}
\newcommand{\PotE}{\mathcal{U}}
\newcommand{\hatPotE}{{\hat{\PotE}}}
\newcommand{\hatPotEP}{\hatPotE^{\dirf}}
\newcommand{\hatPotEQ}{\hatPotE^{\Qten}}
\newcommand{\PotEP}{\PotE^{\dirf}}
\newcommand{\PotEQ}{\PotE^{\Qten}}
\newcommand{\helfrichEnergy}{\PotE_{\text{H}}}
\newcommand{\frankOseenEnergy}{\PotE_{\text{FO}}}
\newcommand{\landaudegennesEnergy}{\PotE_{\text{LdG}}}
\newcommand{\normalizationEnergy}{\PotE_{\text{n}}}
\newcommand{\bulkEnergy}{\PotE_{\text{B}}}
\newcommand{\elasticEnergy}{\PotE_{\text{El}}}
\newcommand{\areaEnergy}{\PotE_{\text{a}}}
\newcommand{\helfrich}{Helfrich}
\newcommand{\frankOseen}{Frank-Oseen}
\newcommand{\landauDeGennes}{Landau-de Gennes}
\newcommand{\frankOseenHelfrich}{\frankOseen-\helfrich}
\newcommand{\landauDeGennesHelfrich}{\landauDeGennes-\helfrich}
\newcommand{\kinConst}{k}
\newcommand{\kinConstP}{\kinConst^{\dirf}}
\newcommand{\kinConstQ}{\kinConst^{\qten}}
\newcommand{\areaPen}{\omega_{\text{a}}}
\newcommand{\area}{A}
\newcommand{\areaZero}{\area_0}
\newcommand{\inputTikzPic}[1]{\ifthenelse{\boolean{plotTikzPics}}{\input{#1}}{\fbox{\centering\begin{minipage}[t][5cm][t]{0.9\textwidth}\centering\textbf{\color{red}enable plotTikzPics}\end{minipage}}}}
\newcommand{\Grad}{\nabla}
\newcommand{\Div}{\operatorname{div}}%
\newcommand{\BigRot}{\operatorname{Rot}}%
\newcommand{\DivSurf}{\Div_{\!\surf}}%
\newcommand{\GradSurf}{\Grad_{\!\surf}}
\newcommand{\BigRotSurf}{\BigRot_{\surf}}
\newcommand{\vecNabla}{\boldsymbol{\nabla}}
\newcommand{\levicivitaPlain}{\vecNabla}
\newcommand{\levicivita}[2]{\vecNabla_{\!\!#1}{#2}}
\newcommand{\laplaceBeltrami}{\Delta_{\surf}}
\newcommand{\vecLaplace}{\boldsymbol{\Delta}}
\newcommand{\laplaceBochner}{\vecLaplace^{\!\textup{B}}}
\newcommand{\insertColorbarVertical}[5]{
	\begin{minipage}{1.5cm}
		\begin{flushleft}
			\begin{tikzpicture}
				\node (colorbar) at (0,0) {\includegraphics[width=0.6cm]{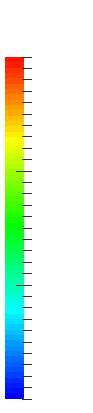}};
				\draw (0.01,1.4) node {\scriptsize #1};
				\draw (-0.15,0.965) node[anchor=west] {\scriptsize #5};
				\draw (-0.15,0.22666667) node[anchor=west] {\scriptsize #4};
				\draw (-0.15,-0.5116667) node[anchor=west] {\scriptsize #3};
				\draw (-0.15,-1.25) node[anchor=west] {\scriptsize #2};
			\end{tikzpicture}
		\end{flushleft}
	\end{minipage}
}
\newtheorem{theorem}{\bf Theorem}[section]
\newtheorem{proposition}{\bf Proposition}[section]
\begin{document}

\title{Liquid Crystals on Deformable Surfaces}

\author{
Ingo Nitschke$^{1}$, Sebastian Reuther$^{1}$ and Axel Voigt$^{1,2,3,4}$}

\ifthenelse{\boolean{forArxiv}}
{
	\affiliation{$^{1}$Institute of Scientific Computing, TU Dresden, 01062 Dresden, Germany\\
	$^{2}$Dresden Center for Computational Materials Science (DCMS), TU Dresden, 01062 Dresden, Germany\\
	$^{3}$Center for Systems Biology Dresden (CSBD), Pfotenhauerstra{\ss}e 108, 01307 Dresden, Germany\\
	$^{4}$Cluster of Excellence Physics of Life (PoL), 01062 Dresden, Germany}
}
{
	\address{$^{1}$Institute of Scientific Computing, TU Dresden, 01062 Dresden, Germany\\
	$^{2}$Dresden Center for Computational Materials Science (DCMS), TU Dresden, 01062 Dresden, Germany\\
	$^{3}$Center for Systems Biology Dresden (CSBD), Pfotenhauerstra{\ss}e 108, 01307 Dresden, Germany\\
	$^{4}$Cluster of Excellence Physics of Life (PoL), 01062 Dresden, Germany}
}

\ifthenelse{\boolean{forArxiv}}
{
}
{
\subject{biophysics, materials science, mathematical modelling}
}

\keywords{liquid crystals, deformable surface, gradient flow}

\ifthenelse{\boolean{forArxiv}}
{
}
{
\corres{Sebastian Reuther\\
\email{sebastian.reuther@tu-dresden.de}}
}

\begin{abstract}
Liquid crystals with molecules constrained to the tangent bundle of a curved surface 
show interesting phenomena resulting from the tight coupling of the elastic and bulk 
free energies of the liquid crystal with geometric properties of the surface. We derive thermodynamically consistent \frankOseenHelfrich\ 
and \landauDeGennesHelfrich\ models which consider the simultaneous relaxation of the director/Q-tensor fields 
and the surface. The resulting systems of vector- or tensor-valued surface partial differential equation 
and geometric evolution laws are numerically solved to tackle the rich dynamics of these systems and to compute the
resulting equilibrium shapes. The results strongly depend on the intrinsic and extrinsic curvature 
contributions and can lead to unexpected asymmetric shapes.
\end{abstract}


\ifthenelse{\boolean{forArxiv}}
{
	\maketitle
	
	\section{Introduction}

	In-plane order on two-dimensional manifolds has been the subject of much research elucidating the intimate relation between topological defects and the geometry of the manifold. Depending on the topology of the manifold these defects can not only be energetically favourable, but topologically necessary. They can play fundamental roles leading to striking results and structures in the ground state that would be highly suppressed in flat systems. Such properties are summarized in \cite{Bowicketal_AP_2009,Serra_LC_2016} for positional and orientational order, respectively. Colloidal crystals assembled on spherical or toroidal surfaces provide good examples to study the role of curvature on positional ordering. Here the number of defects can deviate from the minimal topologically required defect set, leading to scars and pleats in the ground state, see \eg\ \cite{Bauchetal_Science_2003, Irvineetal_Nature_2010}. Orientational ordering has been analysed in the context of liquid crystals. Due to the close relation between principle curvature and director field the coupling between geometric properties and defects becomes even tighter, leading to \eg\ non-required defects on a torus \cite{Jeseneketal_SM_2015}. On a sphere the ground state has been identified as a tetrahedral configuration \cite{Lubenskyetal_JPII_1992} and coated spherical colloids with functionalized defects have been proposed as building blocks for colloidal crystals with a tetrahedral structure \cite{Nelson_NL_2002}.
}
{
	\begin{fmtext}
	\section{Introduction}

	In-plane order on two-dimensional manifolds has been the subject of much research elucidating the intimate relation between topological defects and the geometry of the manifold. Depending on the topology of the manifold these defects can not only be energetically favourable, but topologically necessary. They can play fundamental roles leading to striking results and structures in the ground state that would be highly suppressed in flat systems. Such properties are summarized in \cite{Bowicketal_AP_2009,Serra_LC_2016} for positional and orientational order, respectively. Colloidal crystals assembled on spherical or toroidal surfaces provide good examples to study the role of curvature on positional ordering. Here the number of defects can deviate from {\color{white} blablablabla}
	\end{fmtext}

	\maketitle

	\noindent the minimal topologically required defect set, leading to scars and pleats in the ground state, see \eg\ \cite{Bauchetal_Science_2003, Irvineetal_Nature_2010}. Orientational ordering has been analysed in the context of liquid crystals. Due to the close relation between principle curvature and director field the coupling between geometric properties and defects becomes even tighter, leading to \eg\ non-required defects on a torus \cite{Jeseneketal_SM_2015}. On a sphere the ground state has been identified as a tetrahedral configuration \cite{Lubenskyetal_JPII_1992} and coated spherical colloids with functionalized defects have been proposed as building blocks for colloidal crystals with a tetrahedral structure \cite{Nelson_NL_2002}.  
}

In contrast to these studies on fixed surfaces, we allow the manifold to change. 
This additional relaxation mechanism provides a path to locally overcome geometric frustration.
The tendency to accommodate some preferential in-plane order is the mechanism behind the buckling of crystalline sheets \cite{Seungetal_PRA_1988} and also the origin of the icosahedral shape of various viral capsides \cite{Lidmaretal_PRE_2003}. 
While the interplay of crystalline order and shape relaxation has been studied theoretically and computationally, see \eg\ \cite{Seungetal_PRA_1988,Lidmaretal_PRE_2003,Bruinsmaetal_PRL_2003,Kohyamaetal_PRL_2007,Alandetal_MMS_2012}, the situation for polar and nematic order is less explored. 
Defects in polar and nematic shells are intensively studied on fixed geometries, such as a sphere \cite{Nelson_PRB_1983,Lubenskyetal_JPII_1992,Dzubiellaetal_PRE_2000,Batesetal_SM_2010,Shinetal_PRL_2008,Lopez-Leonetal_NP_2011,Dhakaletal_PRE_2012} and under more complicated constraints, see \eg\ \cite{Prinsenetal_PRE_2003,Selingeretal_JPCB_2011,Nguyenetal_SM_2013,Martinezetal_NM_2014,Segattietal_PRE_2014,Alaimoetal_SR_2017}. 
Most of these studies use particle methods. 
However, also field theoretical descriptions exist, see \eg\ \cite{Kraljetal_SM_2011,Napolietal_PRE_2012,Napolietal_PRL_2012,Golovatyetal_JNS_2017,Nestleretal_JNS_2018,Nitschkeetal_PRSA_2018}. 
These models differ in details and strongly depend on the assumptions made in the derivation. While most of them only focus on the steady state, some also account for dynamics via gradient flows, see \eg\ \cite{Nestleretal_JNS_2018,Nitschkeetal_PRSA_2018}. 
These approaches are well suited to be extended towards changing manifolds. 
An attempt in this direction can be found in \cite{Nitschkeetal_PRF_2019} for polar order, where the model in \cite{Nestleretal_JNS_2018} is generalized to manifolds with a prescribed evolution. 
We are interested in the resulting equilibrium shapes if both the manifold and the orientational order (polar and nematic) on it are allowed to relax. 
Analytical results within a simplified phenomenological mean-field theory \cite{Parketal_EPL_1992} suggest shape changes from spherical to ellipsoidal and tetrahedral, for polar and nematic order, respectively. 
We will show that the situation can become more complex if not only intrinsic but also extrinsic curvature contributions are considered. 
At least for weak bending forces extrinsic curvature contributions can break the symmetry, making the tetrahedral defect configuration unstable. The change from a tetrahedral to a planar defect arrangement has strong implications for the aforementioned applications. For polar order the changes are less dramatic. However, also in this case extrinsic curvature contributions have an influence and e.g. lead to stronger distortions of the shape in the vicinity of the defects.  
Such shape changes are also relevant in the understanding of morphological changes during development and the design of bio-inspired materials that are capable of self-organization. 
Defect dynamics and corresponding morphological transitions have been observed by restricting suspensions of microtubule filaments driven by kinesin motors to the membrane of vesicles \cite{Keberetal_Science_2014}. 
The observed shape changes are tunable and lead to ring-shaped, spindel-shaped and motil droplets with filipodia-like protrusions. 
To fully understand these complex out-of-equilibrium systems, first an understanding of systems without active components is necessary. 
We will therefore only focus on passive systems and leave the investigation of additional active terms for future research.  

The outline of the paper is as follows: 
We start in Section \ref{sec:model} with the derivation of thermodynamically consistent models resulting from \frankOseenHelfrich\ and \landauDeGennesHelfrich\ energies and discuss their relations. 
Section \ref{sec:numerics} described the numerical approach to solve the resulting geometric and vector- and tensor-valued surface partial differential equations. 
The equations are solved in Section \ref{sec:results} for specialized situations with increasing complexity, including the evolution of the director/Q-tensor field on surfaces with a prescribed normal velocity, the surface response to a stationary director/Q-tensor field and the fully coupled system. 
We further discuss the results and explain the observed phenomena as a result of the tight coupling of intrinsic and extrinsic curvature with the director/Q-tensor field and the corresponding topological defects.
All modeling and numerical details are provided in the Appendix.

\section{Notation and Model Derivation} \label{sec:model}
To obtain the desired equilibrium shapes we derive steepest decent models for the \frankOseenHelfrich\ energy $ \hatPotEP = \hatPotEP(\surf,\dirf) $ and the \landauDeGennesHelfrich\ energy $ \hatPotEQ = \hatPotEQ(\surf,\qten) $, which are specified below.
Thereby, $ \dirf\in\tangentT{^1}\surf $ is a tangential surface director field, with $ \| \dirf \| = 1 $ and $ \qten\in\qspace$ is a tangential surface Q tensor, with $\qspace = \{\mathbf{r} \subset\tangentT{^2}\surf \, : \, \Tr \mathbf{r} = 0, \mathbf{r} = \mathbf{r}^T \}$.
Furthermore, $ \surf = \surf(t) $ is a time dependent surface without boundary, which evolves only in normal direction, and $\tangentT{^n}\surf$ denote the tangential tensor bundle of degree $n\geq0$.
 
We will consider variations of the energies $\hatPotEP$ and $\hatPotEQ$. This requires to identify dependencies and to provide ways how to handle them. 
The restriction to normal deformations allows to describe the surface by an independent scalar-valued variable $ \xi=\xi(t)\in\tangentT{^0}\surf $.
We deploy the ansatz $ \para=\para(\xvar)=\para(\xvar(t,y^1,y^2),y^1,y^2) $ for a time depending parametrization to locally describe the surface $\surf$. 
We thus obtain
\begin{align}
    \label{eq:paraDot}
    \dot{\para} &:= \ddt\para = \frac{\partial\xvar}{\partial t}\frac{\partial \para}{\partial \xvar} = \vnor\normal
\end{align} 
with normal velocity $\vnor = \frac{\partial\xvar}{\partial t}$ and normal vector $\normal = \frac{\partial \para}{\partial \xvar}$.
Moreover, a perturbation of the surface gives a scalar-valued perturbation in normal direction by $ \dxvar $, \ie
\begin{align}
    \label{eq:deltaPara}
    \delta\para &= \frac{\partial \para}{\partial \xvar}\dxvar =  \dxvar\normal \formPeriod
\end{align} 
Considering the perturbed surface $ \tilde{\surf} $ locally defined by $ \tilde{\para} = \para + \varepsilon\delta\para $, we introduce the surface derivative $ \dsurf := \frac{d}{d\varepsilon}\big|_{\varepsilon=0} $ for all quantities of $ \surf $, which are extendable to $ \tilde{\surf} $ and are sufficiently smooth.
Note that this derivative is consistent \wrt\ the perturbation $ \tilde{\xvar} = \xvar + \varepsilon\dxvar $ in the sense that for functions $ (\hat{f}\circ\para)(\xvar) = f(\xvar) $  holds $ \dsurf\hat{f} = \dsurf f $, since $ \levicivita{\normal}{\hat{f}} = \frac{d}{d\xvar}\hat{f} = \frac{\partial f}{\partial \xvar}  $.
Especially, for functionals $ \hat{F}= \hat{F}[\para] $ and $ F=F[\xvar] = (\hat{F}\circ\para)[\xvar] $ the derivative $ \dsurf $ implies the functional derivative $\frac{\delta F}{\delta \xvar}$ \wrt\ $ \xvar $, \ie
\begin{align}
    \dsurf\hat{F} &=  \dsurf F
        = \lim_{\varepsilon\rightarrow 0} \frac{F[\xvar] - F[\xvar + \varepsilon\dxvar]}{\varepsilon} 
        = \int_{\surf} \frac{\delta F}{\delta \xvar} \dxvar \mu 
        = \left(\frac{\delta F}{\delta \xvar}\right)^{\!*}\![\dxvar]\formPeriod
\end{align}

Obviously, the surface derivative $ \dsurf $ is not a tensor operator on $ \tangentT{^n}\surf $ for $n \geq 1$, but on $ \tangentT{^n}\R^3|_{\surf} $.
Therefore, claiming $ \dsurf\dirf = 0 $ to set $ \dirf $ as an independent variable \wrt\ the surface, would be over-determined in $ \tangent\R^3|_{\surf} $ and we rather require $\proj_{\surf}\dsurf\dirf=0$ instead, where $\proj_{\surf}$ denotes the projection into the tangent space.
This results in $ \dsurf\dirf = \left( \levicivita{\dirf}{\dxvar} \right)\normal $ or in contravariant components $\dsurf\dirfC^{i} = \left[ \shop\dirf \right]^{i}\dxvar$ with the shape operator $\shop$, see eq. \eqref{eq_codsurfdirf}.
Similar arguments hold for the Q-tensor $\qten$. Claiming $ \dsurf\qten = 0 $ to set $ \qten $ as an independent variable \wrt\ the surface, would again be over-determined in $ \tangentT{^2}\R^3|_{\surf} $. 
Thus, we rather require $ \proj_{\surf}\dsurf\qten = \proj_{\qspace}\dsurf\qten=0 $ instead and finally get $ \dsurf\qtenC^{ij} = \left[ \shop\qten + \qten\shop \right]^{ij}\dxvar $ for the contravariant components, see eq. \eqref{eq_codsurfqten}. 
Thereby, $\proj_{\qspace}$ denotes the projection into the space of tangential Q-tensors, see below. 
For further details we refer to \autoref{sec:app:one}.
    
\subsection{Free Energies}

The \frankOseenHelfrich\ energy $ \PotEP = \PotEP[\xvar,\dirf] = \hatPotEP[\para,\dirf] $ and the \landauDeGennesHelfrich\ energy $ \PotEQ = \PotEQ[\xvar,\qten] = \hatPotEQ[\para,\qten] $ are given by 
\begin{align}
    \PotEP[\xvar,\dirf] &= \helfrichEnergy[\xvar] + \frankOseenEnergy[\xvar,\dirf] + \normalizationEnergy[\xvar,\dirf] + \areaEnergy[\xvar] \\
    \PotEQ[\xvar,\qten] &= \helfrichEnergy[\xvar] + \landaudegennesEnergy[\xvar,\dirf] + \areaEnergy[\xvar]
\end{align}
where 
\begin{align} 
    \label{eq:energies1}
    \helfrichEnergy &= \frac{\alpha}{2}\int_{\surf} \meanc^2\mu \\
    \frankOseenEnergy &= \frac{K}{2}\int_{\surf} \left\| \GradSurf\dirf \right\|^2 + \left\| \shop\dirf \right\|^2 \mu \\
    \normalizationEnergy &= \frac{\pnorm}{4} \int_{\surf} \left( \left\| \dirf \right\|^2 - 1 \right)^2 \mu \\
    \landaudegennesEnergy &= \elasticEnergy + \bulkEnergy \\
    \label{eq_elenrgy}
    \elasticEnergy &= \frac{L}{2} \int_{\surf} \left\| \GradSurf\qten \right\|^2 + \left\| \shop \right\|^2\Tr\qten^2 +2\orderp\meanc\left\langle \shop, \qten \right\rangle + \frac{\orderp^2}{2}\left\| \shop \right\|^2\mu \\
    \bulkEnergy &= \int_{\surf} a'\Tr\qten^2 + c\Tr\qten^4 + C_1\mu \\
    \label{eq:energies7}
    \areaEnergy &= \frac{\areaPen}{2}\left(\frac{\area-\areaZero}{\areaZero}\right)^2 
\end{align}
are the \helfrich\ energy, the \frankOseen\ energy, the normalization penalty energy, the \landauDeGennes\ energy, the elastic energy, the bulk energy and the surface area penalization energy, respectively, with the bending rigidity $\alpha$, mean curvature $\meanc$, \frankOseen\ constant $K$, normalization penalty parameter $\pnorm$, \landauDeGennes\ constant $L$, nematic order parameter $\orderp$, thermotropic parameters $a,b$ and $c$, $a'= a + \nicefrac{b}{3}\orderp + \nicefrac{c}{6}\orderp^2$, $C_1 = \orderp^2(\nicefrac{a}{6}-\nicefrac{b}{54}\orderp + \nicefrac{c}{72}\orderp^2)$, surface area penalization parameter $\areaPen$, desired surface area $\areaZero$, actual surface area $\area := \int_{\surf}\mu$ and the covariant surface gradient $\GradSurf$.
The \frankOseenHelfrich\ and \landauDeGennesHelfrich\ energies are the one-constant approximations of the the energies obtained in \cite{Nestleretal_JNS_2018,Nitschkeetal_PRSA_2018} with the parameter choice $K_1 = K_2 = K_3 = K$ and $L_1=L $, $ L_2=L_3=L_6=0$ and eigenvalue $\beta=-\nicefrac{\orderp}{3}$ of the auxiliary proper Q-tensor
\begin{align}\label{eq_qtoQ}
    \Qten = \qten + \frac{\orderp}{6}\proj_{\surf} - \frac{\orderp}{3}\normal\otimes\normal\in\mathcal{Q}\R^3\big|_{\surf}\formPeriod
\end{align} 
The energies in \cite{Nestleretal_JNS_2018,Nitschkeetal_PRSA_2018} result as a thin-film limit of the corresponding energies in $\R^3$ with appropriate boundary conditions. 
In the nematic regime the minimizers of both energies are the same, as shown in \cite{Majumdaretal_ARMA_2010} for $\R^3$. 
On the surface we can state $\qten = \orderp\proj_{\qspace}[\dirf\otimes\dirf]$ with $\dirf\in\tangentT{^1}\surf$ and $\|\dirf\|=1$.
Therefore, the thermotropic stationary bulk energy $\bulkEnergy\big|_\xvar = \bulkEnergy\big|_\xvar[\qten]= \bulkEnergy\big|_\xvar[\orderp]$ only depends on the nematic order parameter and has a minimum at $\orderp = (-b + \sqrt{b^2-24ac})/(4c)$.
Moreover, up to a constant induced from the bulk energy, also the \landauDeGennesHelfrich\ energy is the same as the \frankOseenHelfrich\ energy for $ L=\nicefrac{K}{2\orderp^2} $.
The relation between $\frankOseenEnergy$ and $\elasticEnergy$ in $\R^3$ and on the surface can thus be outlined in the commutative diagram
\begin{center}
	\begin{minipage}{\textwidth}
		\centering
		\fboxrule=1pt
		\fboxsep=1pt
		\def\boxWidth{0.375\textwidth}
		\def\boxHeigth{2cm}
		\def\boxAlignVertical{c}
		\def\boxAlignHorizontal{c}
		\def\boxOffset{-0.2cm}
		\definecolor{boxColor}{cmyk}{0.125, 0.0625, 0.015, 0.075}
		\def\tipLength{1.1}
		\def\tipHeight{0.8}
		\definecolor{arrowColor}{cmyk}{0.5, 0.25, 0.06, 0.3}
		\newcommand{\flowChartArrowTwo}[6]{%
			\rotatebox{#1}{
				\begin{tikzpicture}[every node/.style={transform shape}]
					\pgfmathsetmacro{\tipLength}{#3}
					\pgfmathsetmacro{\tipHeight}{#4}
					\pgfmathsetmacro{\textOffset}{#5}
					\fill[arrowColor] (0,0) -- (0,\tipHeight) -- (\tipLength,0.5*\tipHeight) -- (0,0);
					\draw (\textOffset*\tipLength,0.5*\tipHeight) node[anchor=west] {\rotatebox{#2}{\color{white}\bfseries{\small #6}}};
				\end{tikzpicture}
			}
		}
		\scriptsize
		\begin{minipage}{\textwidth}
			\centering
			\begin{minipage}{\boxWidth}
				\centering
				\fcolorbox{boxColor}{boxColor}
				{
					\begin{minipage}[\boxAlignHorizontal][\boxHeigth][\boxAlignVertical]{0.9\textwidth}
						\vspace{\boxOffset}
						\begin{align*}
							\frac{K}{2}\int_{\surf_{\textup{TF}}}\left\| \nabla\Dirf \right\|^2&\textup{d}\mathbf{x} \\
							\left\langle \Dirf, \normal \right\rangle|_{\partial\surf_{\textup{TF}}} &= 0 \\
							\proj_{\partial\surf_{\textup{TF}}}\left[ \levicivita{\normal}{\Dirf} \right]|_{\partial\surf_{\textup{TF}}} &= 0
						\end{align*}
					\end{minipage}
				}
			\end{minipage}
			\begin{minipage}{0.1\textwidth}
				\centering
				\flowChartArrowTwo{0}{0}{\tipLength}{\tipHeight}{0.0}{\scriptsize$\lambda\rightarrow0$}
			\end{minipage}
			\begin{minipage}{\boxWidth}
				\centering
				\fcolorbox{boxColor}{boxColor}
				{
					\begin{minipage}[\boxAlignHorizontal][\boxHeigth][\boxAlignVertical]{0.9\textwidth}
						\vspace{\boxOffset}
						\begin{align*}
							\frac{K}{2}\int_{\surf}\left\| \GradSurf\dirf \right\|^2 + \left\| \shop\dirf \right\|^2\mu
						\end{align*}
					\end{minipage}
				}
			\end{minipage}
		\end{minipage}
		\begin{minipage}{\textwidth}
			\centering
			\begin{minipage}{\boxWidth}
				\centering
				\flowChartArrowTwo{90}{-90}{\tipLength}{\tipHeight}{0.0}{}
			\end{minipage}
			\begin{minipage}{0.1\textwidth}
				\centering
				\ 
			\end{minipage}
			\begin{minipage}{\boxWidth}
				\centering
				\flowChartArrowTwo{90}{-90}{\tipLength}{\tipHeight}{0.0}{}
			\end{minipage}
		\end{minipage}
		\begin{minipage}{\textwidth}
			\centering
			\begin{minipage}{\boxWidth}
				\centering
				\fcolorbox{boxColor}{boxColor}
				{
					\begin{minipage}[\boxAlignHorizontal][\boxHeigth][\boxAlignVertical]{0.9\textwidth}
						\vspace{\boxOffset}
						\begin{align*}
							\frac{L}{2}\int_{\surf_{\textup{TF}}}\left\| \nabla\Qten \right\|^2&\textup{d}\mathbf{x}\\
							\Qten\normal|_{\partial\surf_{\textup{TF}}} &= \beta\normal \\
							\proj_{\partial\surf_{\textup{TF}}}\left[ \levicivita{\normal}{\Qten} \right]|_{\partial\surf_{\textup{TF}}} &= 0
						\end{align*}
					\end{minipage}
				}
			\end{minipage}
			\begin{minipage}{0.1\textwidth}
				\centering
				\flowChartArrowTwo{0}{0}{\tipLength}{\tipHeight}{0.0}{\scriptsize$\lambda\rightarrow0$}
			\end{minipage}
			\begin{minipage}{\boxWidth}
				\centering
				\fcolorbox{boxColor}{boxColor}
				{
					\begin{minipage}[\boxAlignHorizontal][\boxHeigth][\boxAlignVertical]{0.9\textwidth}
						\vspace{\boxOffset}
						\begin{align*}
							&\frac{L}{2} \int_{\surf} \left\| \GradSurf\qten \right\|^2 + 2\orderp\meanc\left\langle \shop, \qten \right\rangle \\
													&\qquad + \left\| \shop \right\|^2\left( \Tr\qten^2 + \frac{\orderp^2}{2} \right)\mu
						\end{align*}
					\end{minipage}
				}
			\end{minipage}
		\end{minipage}
	\end{minipage}
\end{center}
with $\Dirf$ the director field in $\R^3$, normalization constraint $\left\| \Dirf \right\|^2= 1$, consistence $\Dirf\overset{\lambda\rightarrow 0}{\longrightarrow}\dirf$ 
and thin film $\surf_{\textup{TF}}=\surf\times [-\nicefrac{\lambda}{2},\nicefrac{\lambda}{2}]$ of thickness $\lambda$ and with normal $\normal$ on $\partial\surf_{\textup{TF}}$. For the upward direction on the left-hand side we use $\Qten = \orderp\left( \Dirf\otimes\Dirf-\frac{1}{3}\Id_{\R^3} \right)$. 
However, as in $\R^3$ the \frankOseenHelfrich\ energy
does not respect the head-to-tail symmetry in which $\dirf$ should be equivalent to $-\dirf$. Differences resulting from this drawback have been outlined in \cite{Balletal_ARMA_2011} in $\R^3$ and will also be present on surfaces, which will be shown below.
   
\subsection{Steepest Decent Methods}

For the \frankOseenHelfrich\ energy $ \PotEP = \PotEP[\xvar,\dirf] = \hatPotEP[\para,\dirf] $ and the \landauDeGennesHelfrich\ energy $ \PotEQ = \PotEQ[\xvar,\qten] = \hatPotEQ[\para,\qten] $ we consider the $ L^2 $-gradient flows \wrt\ $ \xvar\in\tangentT{^0}\surf $, $ \dirf\in\tangentT{^1}\surf $ and $ \proj_{\surf}\dsurf\dirf=0 $ or $  \qten\in\qspace $ and $ \proj_{\qspace}\dsurf\qten=0 $, respectively, \ie
\begin{align}
    -\frac{\delta \PotEP}{\dxvar} &= \kinConst\dot{\xvar} = \kinConst\vnor \\
    -\frac{\delta \PotEP}{\delta\dirf}
     &=\kinConstP\dot{\dirf} = \kinConstP\proj_{\surf}\frac{d\dirf}{dt}  = \kinConstP\left(\left\{\partial_t\dirfC^i \right\} - \vnor\shop\dirf\right)
\end{align} 
and
\begin{align}
    -\frac{\delta \PotEQ}{\dxvar} &= \kinConst\dot{\xvar} = \kinConst\vnor \\
    -\frac{\delta \PotEQ}{\delta\qten}
        &= \kinConstQ\dot{\qten}
         = \kinConstQ\proj_{\qspace}\frac{d\qten}{dt}
         = \kinConstQ\proj_{\surf}\frac{d\qten}{dt}  
         = \kinConstQ\left(\left\{\partial_t\qtenC^{ij} \right\} - \vnor\left(\shop\qten + \qten\shop\right)\right)
\end{align} 
where $\kinConst,\kinConstP,\kinConstQ\geq 0$ denote kinematic constants. 
With initial conditions we observe for $ \dirf $
\begin{align}
    \label{eq:gradientflow:p:vnor}
    \kinConst\vnor &= -\alpha\left(\laplaceBeltrami\meanc + \meanc\left( \frac{\meanc^2}{2} - 2\gaussc \right)\right)
             +\frac{\pnorm}{4}\meanc\left( \left\| \dirf \right\|^2 - 1 \right)^2 \notag\\
          &\quad  -K\left( \DivSurf\left( \left( \levicivita{\dirf}{\meanc} + 2\levicivita{\shop}{\dirf} \right)\dirf \right) 
                    + \left\langle \sigma^{\text{E}}_{\surf}, \proj_{\qspace}\shop \right\rangle\right) 
                    + \frac{\areaPen}{\areaZero^2}\left(\area-\areaZero\right)\meanc \\
    \label{eq:gradientflow:p:dirf}
    \kinConstP\dot{\dirf} &= K\left( \laplaceBochner\dirf - \shop^2\dirf \right) - \pnorm\left( \left\| \dirf \right\|^2 - 1 \right)\dirf
\end{align}
with Gaussian curvature $\gaussc$, surface divergence $\DivSurf$, directional derivative $\levicivitaPlain$, material time derivative $\dot{\dirf}$, Laplace-Beltrami operator $\laplaceBeltrami$, Bochner Laplacian $\laplaceBochner$, extrinsic surface Ericksen stress tensor $\sigma^{\text{E}}_{\surf} = (\GradSurf\dirf)^{T}\GradSurf\dirf + \shop\dirf\otimes\shop\dirf\in\tangentT{^2}\surf $ and orthogonal Q tensor projection $\proj_{\qspace}\shop = \shop - \frac{\meanc}{2}\g$ of the shape operator, see \autoref{sec:app:one} for details.

Similarly, with initial conditions we get the steepest decent method for the \landauDeGennesHelfrich\ energy
\begin{align}
    \label{eq:gradientflow:q:vnor}
    \kinConst\vnor &= -\left(\alpha+\frac{L}{2}\orderp^2\right)\left(\laplaceBeltrami\meanc + \meanc\left( \frac{\meanc^2}{2} - 2\gaussc \right)\right)\notag\\
          &\quad +\meanc\left( \left( a'- L\left( \frac{\meanc^2}{2} - 2\gaussc \right) \right)\Tr\qten^2 + c\Tr\qten^4 + C_1 \right)\notag\\
          &\quad -2L\DivSurf\left( \qten\levicivita{\shop}{\qten} - \levicivita{\qten\shop}{\qten} +\qten:(\GradSurf\qten)\shop + \frac{1}{2}\Tr\qten^2\GradSurf\meanc
                 + \orderp\left(\levicivita{\shop}{\qten} + \qten\GradSurf\meanc\right) \right)\notag\\
          &\quad -L\left( \left\langle (\GradSurf\qten)^{T_{(123)}}:\GradSurf\qten,\proj_{\qspace}\shop \right\rangle 
                          +\orderp\left\| \shop \right\|^2\left\langle \shop, \qten \right\rangle\right) + \frac{\areaPen}{\areaZero^2}\left(\area-\areaZero\right)\meanc \\
    \label{eq:gradientflow:q:qten}
    \kinConstQ\dot{\qten} 
          &= L\left( \laplaceBochner\qten - \orderp\meanc\proj_{\qspace}\shop \right) 
             - \left( L\left\| \shop \right\|^2+2a'+2c\Tr\qten^2 \right)\qten \formPeriod
\end{align}
For the definition of the used operators in the above equations we refer to \autoref{sec:opkunde} and \cite{Nitschkeetal_PRSA_2018}. Interestingly, even without bending rigidity, $\alpha = 0$, surface regularization in terms of Helfrich energy contributions is present.

We observe that
\begin{align}
    \frac{d}{dt}\PotEP &= \int_{\surf} \left( \frac{\delta \PotEP}{\dxvar} \right)\vnor + \left\langle\frac{\delta \PotEP}{\delta\dirf}, \dot{\dirf}  \right\rangle \mu = -\int_{\surf} \vnor^2 + \left\| \dot{\dirf} \right\|^2 \mu \le 0 \\
    \frac{d}{dt}\PotEQ &= \int_{\surf} \left( \frac{\delta \PotEQ}{\dxvar} \right)\vnor + \left\langle\frac{\delta \PotEQ}{\delta\qten}, \dot{\qten}  \right\rangle \mu = -\int_{\surf} \vnor^2 + \left\| \dot{\qten} \right\|^2 \mu \le 0 \formComma
\end{align}
see \autoref{sec:kinetic}.
Therefore, the energies $ \PotEP $ and $\PotEQ$ dissipate as long as the solution and the shape are not stationary.

\section{Numerical Approach} \label{sec:numerics}

To numerically solve the systems \eqref{eq:gradientflow:p:vnor}-\eqref{eq:gradientflow:p:dirf} and \eqref{eq:gradientflow:q:vnor}-\eqref{eq:gradientflow:q:qten} we use a semi-implicit Euler timestepping scheme, an operator splitting ansatz and the generic finite element approach proposed in \cite{Nestleretal_JCP_2019}, which is based on a reformulation of all operators and quantities in Cartesian coordinates and penalization of normal components. 
Applications of the latter approach can be found in \eg\ \cite{Nestleretal_JNS_2018,Reutheretal_PF_2018,Nitschkeetal_PRF_2019}.
For better readability we use the abbreviations 
\begin{align*}
    \tilde\alpha &:= \alpha+\frac{L}{2}\orderp^2 \\
    \beta_{\dirf} &:= - K\left( \left\langle \sigma^{\text{E}}_{\surf}, \proj_{\qspace}\shop \right\rangle\right) + \frac{\pnorm}{4}\meanc\left( \left\| \dirf \right\|^2 - 1 \right)^2 \\
    \boldsymbol{w}_{\dirf} &:= -K\left( \levicivita{\dirf}{\meanc} + 2\levicivita{\shop}{\dirf} \right)\dirf \\
    \beta_{\qten} &:= \meanc\left( \left( a'- L\left( \frac{\meanc^2}{2} - 2\gaussc \right) \right)\Tr\qten^2 + c\Tr\qten^4 + C_1 \right) \notag \\
    &\qquad- L\left( \left\langle (\GradSurf\qten)^{T_{(123)}}:\GradSurf\qten,\proj_{\qspace}\shop \right\rangle + \orderp\left\| \shop \right\|^2\left\langle \shop, \qten \right\rangle\right) \\
    \boldsymbol{w}_{\qten} &:= - 2L\left( \qten\levicivita{\shop}{\qten} - \levicivita{\qten\shop}{\qten} +\qten:(\GradSurf\qten)\shop + \frac{1}{2}\Tr\qten^2\GradSurf\meanc + \orderp\left(\levicivita{\shop}{\qten} + \qten\GradSurf\meanc\right)\right) 
\end{align*}
in the following derivations.

\subsection{Timediscretization}
\label{sec:timediscretization}

Let $0 = t^0 < t^1 < t^2 < \dots$ be a partition of the time with timestep width $\tau^m:=t^{m}-t^{m-1}$.
Each variable/quantity with a superscript index $m$ corresponds to the respective variable/quantity at time $t^m$. 
The overall operator splitting algorithm for both systems \eqref{eq:gradientflow:p:vnor}-\eqref{eq:gradientflow:p:dirf} and \eqref{eq:gradientflow:q:vnor}-\eqref{eq:gradientflow:q:qten} reads as follows: 
for $m=1,2,\dots$ do 
\begin{enumerate}
    \item Move geometry according to eq. \eqref{eq:paraDot}, \ie\ in the timediscrete setting
            \begin{align}
                \label{eq:paraDot:discrete}
                \para^{m} &= \para^{m-1} + \tau^m\vnor^{m-1}\normal^{m-1}
            \end{align} 
            with the parametrization of the initial geometry $\para^{0}$ and corresponding initial normal vector $\normal^{0}$.
    \item Update normal vector according to $\partial_t\normal = -\GradSurf\vnor$, see eq. \eqref{eq:normalevolution}. Thus, in the timediscrete setting the normal vector at the new timestep $t^m$ is determined by
            \begin{align}
                \normal^{m} &= \normal^{m-1} - \tau^m\GradSurf\vnor^{m-1} \formPeriod
            \end{align}
    \item Update all other geometric quantities, \ie\ the mean curvature $\meanc^{m}$, the Gaussian curvature $\gaussc^{m}$, the projection $\proj_{\surf}^{m}$ and the shape operator $\shop^{m}$, by using the prior computed normal vector $\normal^{m}$. 
    \item Update director field $\dirf^{m}$/Q tensor field $\qten^{m}$ according to eqs. \eqref{eq:gradientflow:p:dirf} and \eqref{eq:gradientflow:q:qten}, which read in the timediscrete setting
    \begin{align}
        \label{eq:gradientflow:p:dirf:time_discrete}
        \kinConstP\textup{d}^m_{\dirf} &= K\left( \laplaceBochner\dirf^{m} - (\shop^{m})^2\dirf^{m} \right) - \pnorm f^{\dirf}(\dirf^{m}, \dirf^{m-1})
    \end{align}
    and
    \begin{align}
        \label{eq:gradientflow:q:qten:time_discrete}
        \kinConstQ\textup{d}^m_{\qten} &= L\left( \laplaceBochner\qten^{m} - \orderp\meanc\proj_{\qspace}^{m}\shop^{m} \right) - \left( L\left\| \shop^{m} \right\|^2+2a' \right)\qten^{m} - 2cf^{\qten}(\qten^{m}, \qten^{m-1})\formComma
    \end{align}
    respectively.
    Thereby, $\textup{d}^m_{\dirf}:=\nicefrac{1}{\tau^m}(\proj_{\surf}^{m}\dirf^{m}-\proj_{\surf}^{m}\dirf^{m-1})$ and $\textup{d}^m_{\qten}:=\nicefrac{1}{\tau^m}(\proj_{\surf}^{m}\qten^{m}\proj_{\surf}^{m}-\proj_{\surf}^{m}\qten^{m-1}\proj_{\surf}^{m})$ denote the discrete material time derivatives and $f^{\dirf}(\dirf^{m}, \dirf^{m-1})$ as well as $f^{\qten}(\qten^{m}, \qten^{m-1})$ are linearizations of the terms $(\|\dirf^{m}\|^2 - 1)\dirf^{m}$ and $\Tr(\qten^{m})^2\qten^{m}$, respectively, see \cite{Nestleretal_JNS_2018}.
    \item Update normal velocity $\vnor^{m}$ according to eqs. \eqref{eq:gradientflow:p:vnor} and \eqref{eq:gradientflow:q:vnor}, which read in the timediscrete setting 
    \begin{align}
        \label{eq:gradientflow:p:vnor:time_discrete}
        \kinConst\vnor^{m} &= -\alpha\left(\laplaceBeltrami\meanc^{m} + \meanc^{m}\left( \frac{(\meanc^{m})^2}{2} - 2\gaussc^{m} \right)\right) + \beta_{\dirf}^{m} + \DivSurf\boldsymbol{w}_{\dirf}^{m} \notag \\
        &\qquad+ \frac{\areaPen}{\areaZero^2}\left(\area^{m}-\areaZero\right)\meanc^{m}
    \end{align}
    and 
    \begin{align}
        \label{eq:gradientflow:q:vnor:time_discrete}
        \kinConst\vnor^{m} &= -\tilde\alpha\left(\laplaceBeltrami\meanc^{m} + \meanc^{m}\left( \frac{(\meanc^{m})^2}{2} - 2\gaussc^{m} \right)\right) + \beta_{\qten}^{m} + \DivSurf\boldsymbol{w}_{\qten}^{m} \notag \\
        &\qquad+ \frac{\areaPen}{\areaZero^2}\left(\area^{m}-\areaZero\right)\meanc^{m} \formComma
    \end{align}
    respectively. 
\end{enumerate}

\subsection{Spacediscretization}
The remaining step is to discretize eqs. \eqref{eq:gradientflow:p:dirf:time_discrete},
\eqref{eq:gradientflow:q:qten:time_discrete},
\eqref{eq:gradientflow:p:vnor:time_discrete} and \eqref{eq:gradientflow:q:vnor:time_discrete} from the above algorithm in space by using either the generic surface finite element method for tensor-valued surface PDEs proposed in \cite{Nestleretal_JCP_2019} or the surface finite element method for scalar-valued surface PDEs from \cite{Dziuketal_AN_2013}. 
Let $\surfDiscrete=\surfDiscrete(t)|_{t=t^m}$ be an interpolation of the surface $\surf=\surf(t)|_{t=t^m}$ at time $t^m$ such that $\surfDiscrete := \bigcup_{T\in\triangulation}T$, where $\triangulation$ denotes a conforming triangulation. 
Furthermore, the finite element space is introduced as $\feSpace := \lbrace v\in\mathcal{C}^0(\surfDiscrete) : v|_T\in \mathcal{P}^1(T), \forall v\in\triangulation \rbrace$ with $\mathcal{C}^k(\surfDiscrete)$ the space of $k$-times continuously differentiable functions on $\surfDiscrete$ and $\mathcal{P}^l(T)$ polynomials of degree $l$ on the triangle $T\in\triangulation$. 
We use the finite element space $\feSpace$ twice as trail and as test space and additionally introduce the $L_2$ inner product on $\surfDiscrete$, \ie\ $\weak{\alpha}{\beta} := \int_{\surfDiscrete}\langle\alpha,\beta\rangle\dS$. 
Thus, the finite element approximations of eqs. \eqref{eq:gradientflow:p:dirf:time_discrete}, \eqref{eq:gradientflow:q:qten:time_discrete}, \eqref{eq:gradientflow:p:vnor:time_discrete} and \eqref{eq:gradientflow:q:vnor:time_discrete} read: 
find $\dirf^{m}\in\feSpace^{3}$ such that $\forall\mathbf{v}\in\feSpace^{3}$
\begin{align}
    \label{eq:gradientflow:p:dirf:timeandspace_discrete}
    \weak{\kinConstP\textup{d}^m_{\dirf}}{\mathbf{v}} + \weak{K\GradSurf\dirf^{m}}{\GradSurf\mathbf{v}} &= \weak{\tanPenP\normal^T\dirf^{m}\normal - K(\shop^{m})^2\dirf^{m} - \pnorm f^{\dirf}(\dirf^{m}, \dirf^{m-1})}{\mathbf{v}}
    \formComma
\end{align}
find $\qten^{m}\in\feSpace^{3\times3}$ such that $\forall\boldsymbol{V}\in\feSpace^{3\times3}$
\begin{align}
    \label{eq:gradientflow:q:qten:timeandspace_discrete}
    \weak{\kinConstQ\textup{d}^m_{\qten}}{\boldsymbol{V}} &= -\weak{L\GradSurf\qten^{m} }{\GradSurf\boldsymbol{V}} - \weak{L\left( \orderp\meanc\proj_{\qspace}^{m}\shop^{m} \right) + (L\left\| \shop^{m} \right\|^2+2a')\qten^{m} }{\boldsymbol{V}} \notag\\
    &\qquad + \weak{\tanPenQ\left(\qten^{m}\proj_{\surf}^\perp + \proj_{\surf}^\perp\qten^{m} + \proj_{\surf}^\perp\qten^{m}\proj_{\surf}^\perp\right) - 2cf^{\qten}(\qten^{m}, \qten^{m-1})}{\boldsymbol{V}}
    \formComma
\end{align}
find $\vnor^{m}\in\feSpace$ such that $\forall\zeta\in\feSpace$
\begin{align}
    \label{eq:gradientflow:p:vnor:timeandspace_discrete}
    \weak{\kinConst\vnor^{m}}{\zeta} &= 
    \weak{\alpha\GradSurf\meanc^{m} - \boldsymbol{w}_{\dirf}^{m}}{\GradSurf\zeta} - \weak{\alpha\meanc^{m}\left( \frac{(\meanc^{m})^2}{2} - 2\gaussc^{m} \right)}{\zeta} \notag \\
    &\qquad+ \weak{\beta_{\dirf}^{m} + \frac{\areaPen}{\areaZero^2}\left(\area^{m}-\areaZero\right)\meanc^{m}}{\zeta}
\end{align}
and find $\vnor^{m}\in\feSpace$ such that $\forall\gamma\in\feSpace$
\begin{align}
    \label{eq:gradientflow:q:vnor:timeandspace_discrete}
    \weak{\kinConst\vnor^{m}}{\gamma} &= 
    \weak{\tilde\alpha\GradSurf\meanc^{m} - \boldsymbol{w}_{\qten}^{m}}{\GradSurf\gamma} - \weak{\tilde\alpha\meanc^{m}\left( \frac{(\meanc^{m})^2}{2} - 2\gaussc^{m} \right)}{\gamma} \notag \\
    &\qquad+ \weak{\beta_{\qten}^{m} + \frac{\areaPen}{\areaZero^2}\left(\area^{m}-\areaZero\right)\meanc^{m}}{\gamma}
    \formComma
\end{align}
respectively.
Note that we here use the same symbols for the extended director field/Q tensor field as for the local surface quantities. 
According to the generic surface finite element method from \cite{Nestleretal_JCP_2019} penalty terms $(\tanPenP\normal^T\dirf^{m}\normal\ , \ \mathbf{v})$ and $(\tanPenQ(\qten^{m}\proj_{\surf}^\perp + \proj_{\surf}^\perp\qten^{m} + \proj_{\surf}^\perp\qten^{m}\proj_{\surf}^\perp)\ , \ \boldsymbol{V})$ with $\proj_{\surf}^\perp=\normal\normal^T$ and penalty parameters $\tanPenP$ and $\tanPenQ$ are added to eqs. \eqref{eq:gradientflow:p:dirf:timeandspace_discrete} and \eqref{eq:gradientflow:q:qten:timeandspace_discrete} to ensure tangentiality of the respective director/Q tensor field. 
Eqs. \eqref{eq:gradientflow:p:vnor:timeandspace_discrete} and \eqref{eq:gradientflow:q:vnor:timeandspace_discrete} have to be stabilized, which is realized by artificial diffusion following ideas of \cite{Smereka_JSC_2003} for surface diffusion. 
In particular, we add the terms $(D(\GradSurf\vnor^{m-1} - \GradSurf\vnor^{m})\ , \ \GradSurf\zeta)$ and $(D(\GradSurf\vnor^{m-1} - \GradSurf\vnor^{m})\ , \ \GradSurf\gamma)$ with the artificial diffusion parameter $D$ to the right hand sides of eqs. \eqref{eq:gradientflow:p:vnor:timeandspace_discrete} and \eqref{eq:gradientflow:q:vnor:timeandspace_discrete}, respectively. 
This approach is significantly simpler than other proposed discretization schemes, such as \cite{Rusu_IFB_2005,Barrettetal_JCP_2007,Dziuk_NM_2008}, but also leads to sufficient accuracy for our purposes, see the convergence study in \autoref{sec:numericaltests}. 
Recently, at least for mean curvature flow, a similar method which relays on the evolution of geometric quantities, has been used and analyzed in \cite{Kovacsetal_NM_2019}. 
For more details, especially for evaluating the local inner products in the $L_2$ inner products for the extended director/Q tensor field, we refer to \cite{Nestleretal_JCP_2019}.

\section{Results} \label{sec:results}

In the following we compare both models \eqref{eq:gradientflow:p:vnor}-\eqref{eq:gradientflow:p:dirf} and \eqref{eq:gradientflow:q:vnor}-\eqref{eq:gradientflow:q:qten}. 
In case of the \frankOseenHelfrich\ model we simply consider the director field $\dirf$ for visualization. 
To visualize the Q tensor field $\qten$ resulting from the \landauDeGennesHelfrich\ model, we consider the principal director of the Q tensor, \ie\ the eigenvector for the eigenvalue with the largest absolute value of the Q tensor. 
This director has no direction as it is $\pi$ symmetric and thus represents the same Q tensor if rotated by an angle of $\pi$ in the tangent plane. 

Firstly, we prescribe the normal velocity of the surface and determine the response of the appropriate quantity due to shape changes. 
Secondly, by changing the model parameters the response of the surface to a given director/Q tensor field is investigated. 
Thirdly, the full model including the interplay of the liquid crystal dynamics and the surface evolution is considered. 
Thereby, we consider known examples for fixed surfaces from the literature and let the surface evolve. 
The resulting stationary shapes are analysed in detail and compared to their intrinsic counterparts.
All used model and simulation parameters are shown in \autoref{tab:parameters}.

\subsection{Prescribed Normal Velocity}
    
Here, we only consider eqs. \eqref{eq:gradientflow:p:dirf} and  \eqref{eq:gradientflow:q:qten} on a prescribed evolving surface $\surf(t)$.
We focus on the alignment of the director to minimal curvature lines and the localization of topological defects according to curvature. For detailed discussions on stationary surfaces we refer to \cite{Nestleretal_JNS_2018,Nitschkeetal_PRSA_2018}. 
We adapt the movement used in \cite{Nitschkeetal_PRF_2019}, where a rotationally symmetric ellipsoidal shape with major axes parameters $a_0=a_1=1$ and $a_2=1.25$ is considered as starting geometry. 
These parameters are now considered to be time-dependent, such that the surface area is preserved over time. 
During the evolution the ellipsoid collapses to a sphere and deforms back to an ellipsoid with a different orientation.
As initial condition we use the close-to-equilibrium solutions obtained on the stationary initial ellipsoid with initial conditions
\begin{align*}
    \dirf|_{t=0} &= \frac{1}{\|\GradSurf\psi_0\|}\GradSurf\psi_0 \formComma \quad
    \psi_0 := x + y + z
\end{align*}
and 
\begin{align*}
    \qten|_{t=0} &= \orderp\left(\left(\normal\times\dirf_{\qten}\right)\otimes\left(\normal\times\dirf_{\qten}\right) - \frac{1}{2}\proj_{\surf}\right) \formComma \quad
    \dirf_{\qten} = \frac{1}{\|\proj_{\surf}\tilde{\dirf}_{\qten}\|}\proj_{\surf}\tilde{\dirf}_{\qten} \formComma
    \quad
    \tilde{\dirf}_{\qten} = \begin{cases}
        (1,4,0)^T \!\!& , z \geq 0 \\
        (-4,1,0)^T \!\!& , z < 0
    \end{cases}
\end{align*}
for $\para = (x, y ,z)^T\in\surf$.
\autoref{fig:director_response:results} shows the results which clearly indicate the alignment of the directors to the minimal curvature lines.
To be more precise, in the beginning the directors are aligned in a north-south symmetry. 
After passing the sphere geometry an ellipsoid with different orientation evolves and both directors start to rearrange to be aligned in an east-west symmetry, as expected. 
Interestingly, the director resulting from the \landauDeGennes\ energy aligns faster than the director resulting from the \frankOseen\ energy, which is a first indication of the existing differences of both models and is consistent with the larger solution space of the \landauDeGennes\ energy. 

\begin{table}[!h]
	\centering
	\begin{tabular}{rrrrrrr}
		\hline\noalign{\smallskip}
		& \autoref{fig:director_response:results}-\ref{fig:director_response:angle} & \autoref{fig:surface_response:results}-\ref{fig:surface_response:schematic} & \autoref{fig:results:pField}-\ref{fig:results:schematic} & \autoref{fig:results:qtensor2}-\ref{fig:results:intrinsic} & \autoref{fig:convergence_study} & \autoref{fig:surfaceAreaTest} \\
		\noalign{\smallskip}\hline\noalign{\smallskip}
        $\kinConstP$ & 
            $1$ & 
            $10^8$ & 
            $1$ & 
            $1$ & 
            -- & 
            $1$ \\
        $K$          & 
            $0.5$ & 
            $1$ & 
            $1$ & 
            $1$ & 
            -- & 
            $1$ \\
        $\pnorm$     & 
            $50$ & 
            $100$ & 
            $100$ & 
            $100$ & 
            -- & 
            $100$ \\
        $\tanPenP$   & 
            $10^4$ & 
            $10^5$ & 
            $10^3$ & 
            $10^3$ & 
            -- & 
            $10^3$ \\
		\noalign{\smallskip}\hline\noalign{\smallskip}
        $\kinConstQ$ & 
            $1$ & 
            $10^8$ & 
            $1$ & 
            $1$ & 
            -- & 
            -- \\
        $L$          & 
            $0.1966$ & 
            $0.3931$ &
            $0.3931$ &
            $0.3931$ &
            -- & 
            -- \\
        $a$          & 
            $-6.4862$ & 
            $-13.3333$ & 
            $-13.3333$ & 
            $-13.3333$ & 
            -- & 
            -- \\
        $b$          & 
            $-4.9138$ & 
            $-10$ & 
            $-10$ & 
            $-10$ & 
            -- & 
            -- \\
        $c$          & 
            $9.8275$ & 
            $20$ & 
            $20$ & 
            $20$ & 
            -- & 
            -- \\
        $\tanPenQ$   & 
            $10^3$ & 
            $10^3$ & 
            $10^3$ & 
            $10^3$ & 
            -- & 
            -- \\
		\noalign{\smallskip}\hline\noalign{\smallskip}
        $\alpha$     & 
            -- & 
            $0.5$ & 
            $0.5$ & 
            $0.8$ & 
            $1$ & 
            $0.5$ \\
		$\kinConst$  & 
            -- & 
            $100$ & 
            $100$ & 
            $100$ & 
            $100$ & 
            $100$ \\
		$\areaPen$   & 
            -- & 
            $500$ & 
            $500$ & 
            $500$ & 
            $500$ & 
            -- \\
        $D$          & 
            -- & 
            $166.6$ & 
            $166.6$ & 
            $166.6$ & 
            $166.6$ & 
            $166.6$ \\
        $\tau_m$     & 
            $0.1$ & 
            $0.0025$ & 
            $0.01$ & 
            $0.001$ & 
            $0.01$ & 
            $0.025$ \\
		\noalign{\smallskip}\hline
	\end{tabular}
 	\caption{Used model and simulation parameters. All values are treated as non-dimensional.}
 	\label{tab:parameters}
\end{table}

\begin{figure}[!h]
    \centering
    \ifthenelse{\boolean{useSnapshots}}
    {
		\ifthenelse{\boolean{forArxiv}}
		{
			\includegraphics[width=0.75\textwidth]{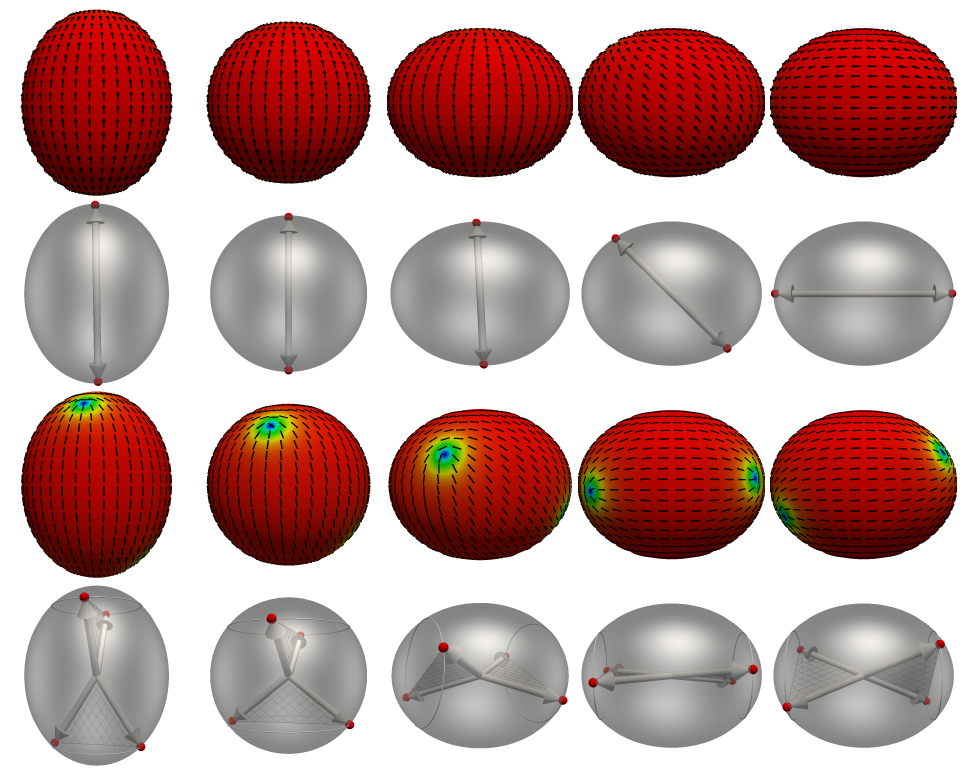}
		}
		{
			\includegraphics[width=\textwidth]{snapshots/fig_1.png}
		}
    }
    {
		\ifthenelse{\boolean{forArxiv}}
		{
			\def\picwidth{0.15\textwidth}
		}
		{
			\def\picwidth{0.19\textwidth}
		}
		\begin{minipage}{\textwidth}
			\centering
			\includegraphics[width=\picwidth]{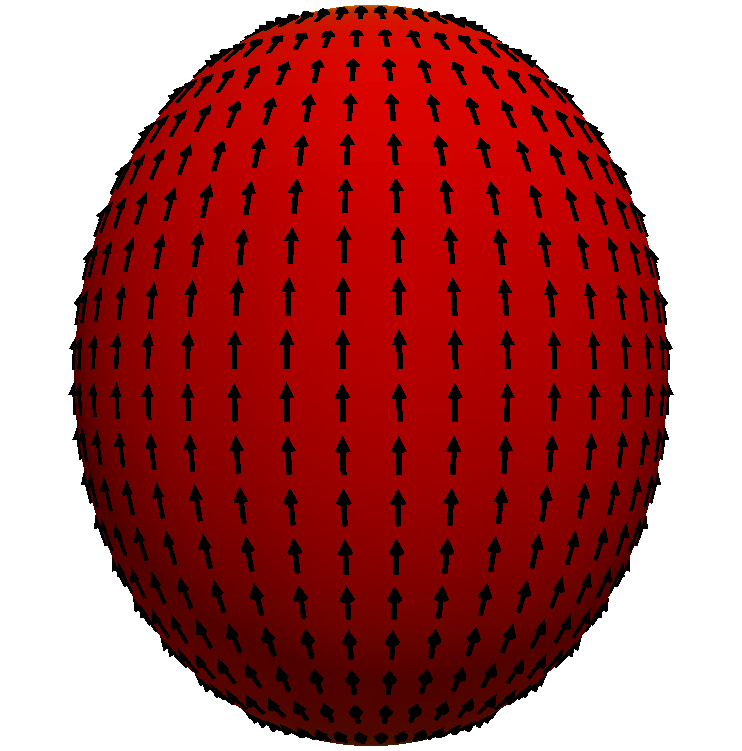}
			\includegraphics[width=\picwidth]{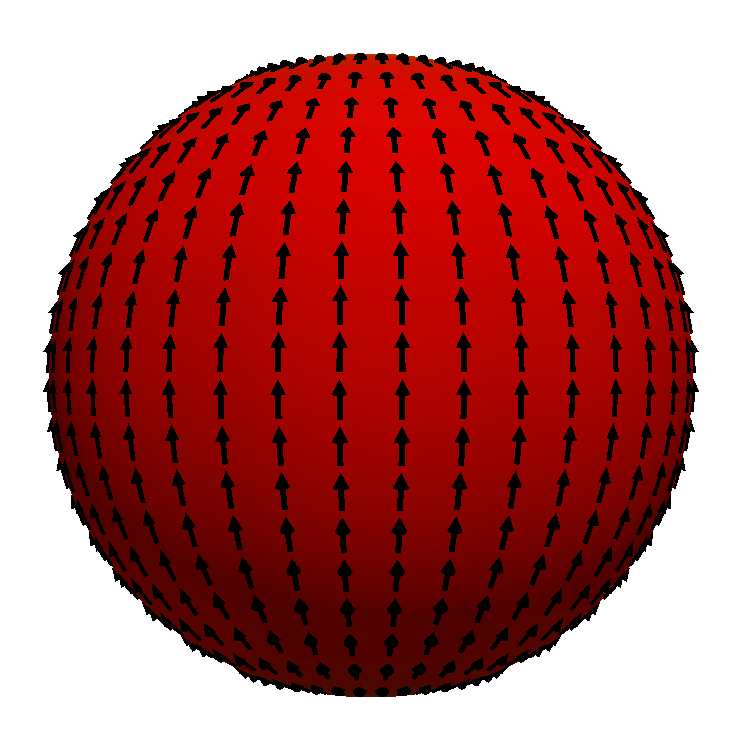}
			\includegraphics[width=\picwidth]{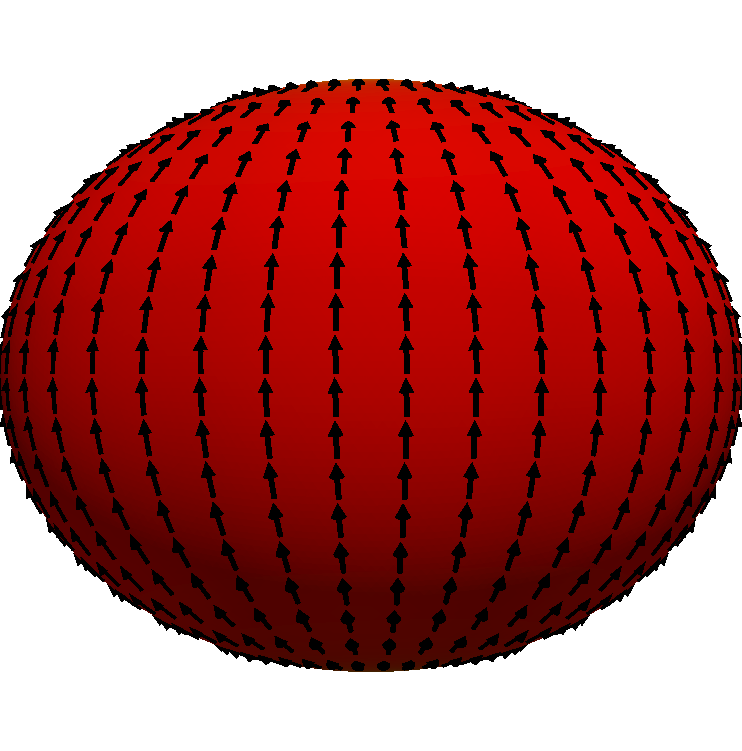}
			\includegraphics[width=\picwidth]{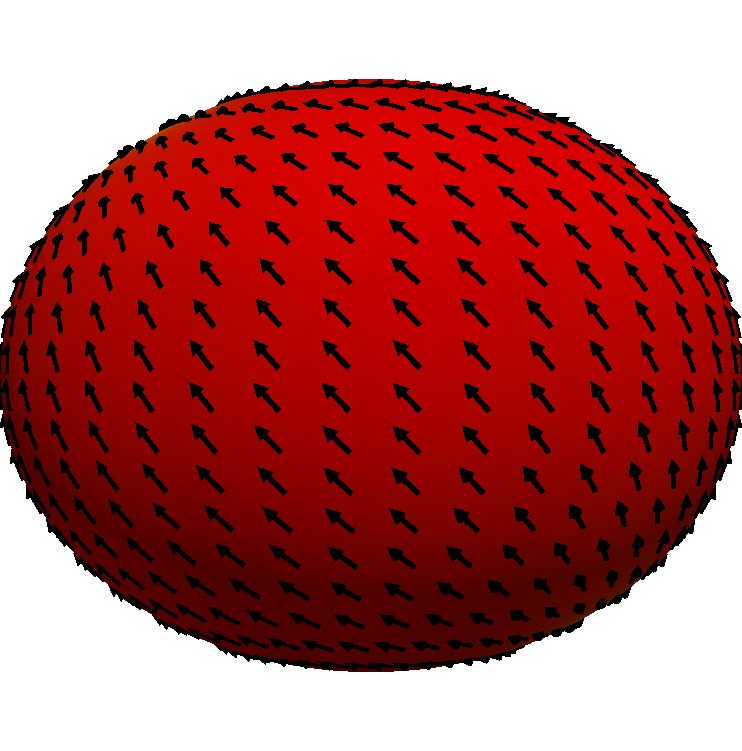}
			\includegraphics[width=\picwidth]{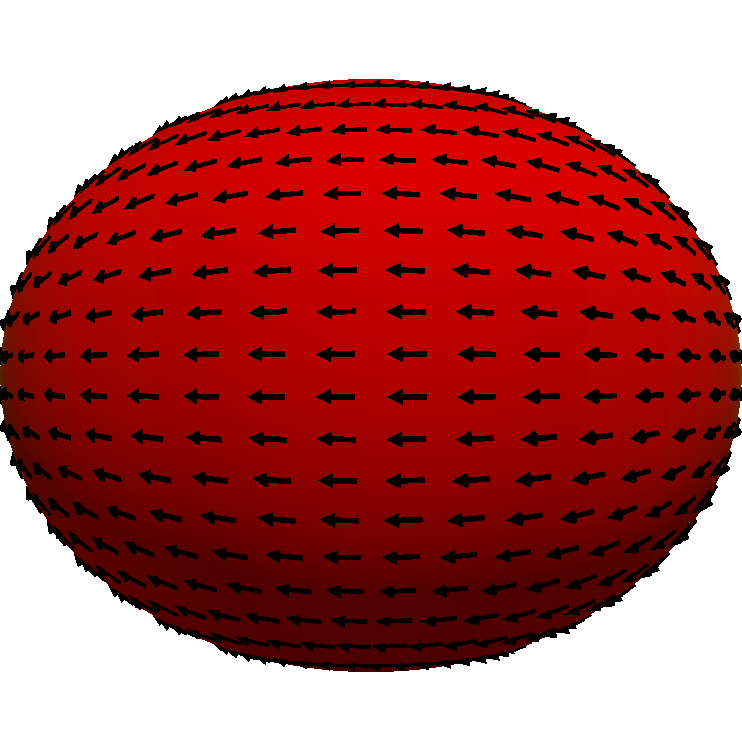} \\
			\includegraphics[width=\picwidth]{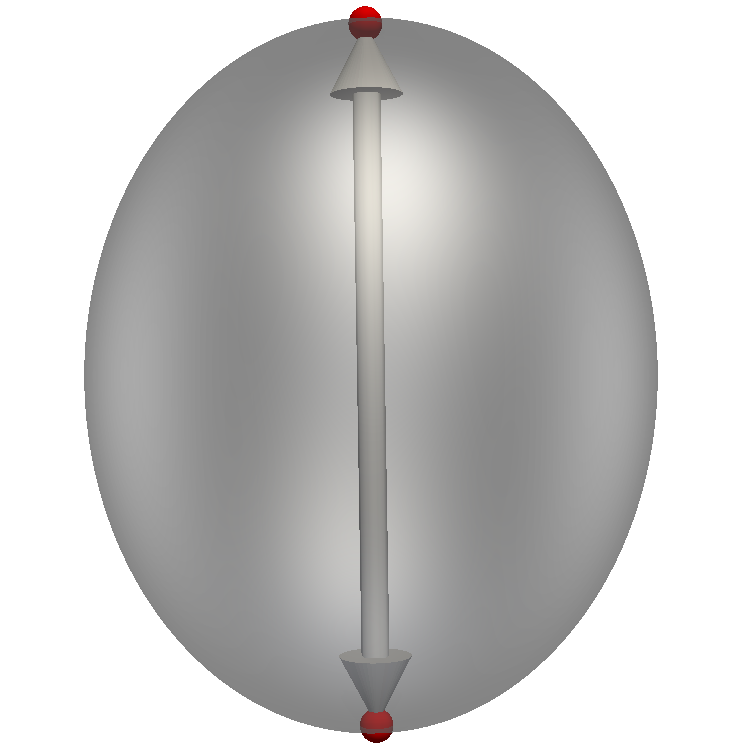}
			\includegraphics[width=\picwidth]{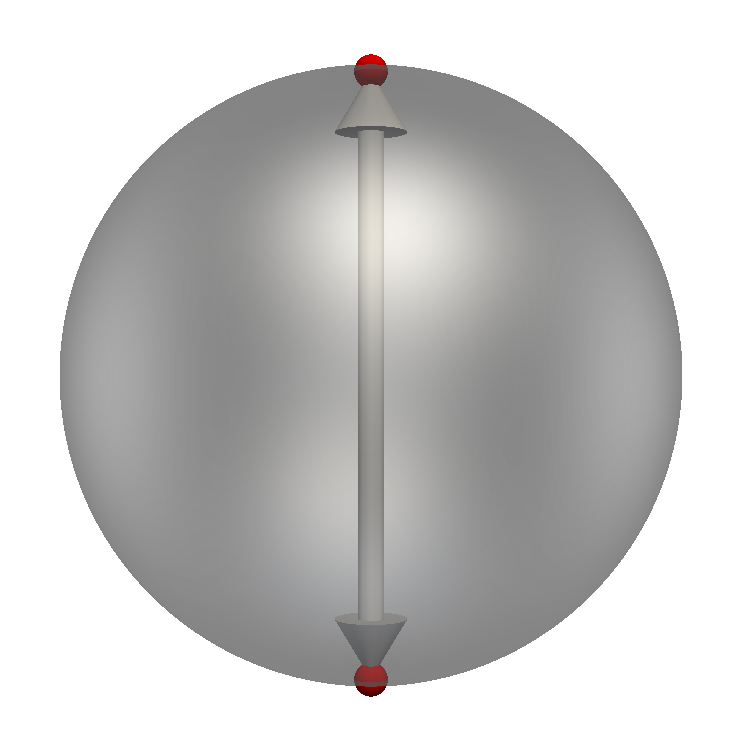}
			\includegraphics[width=\picwidth]{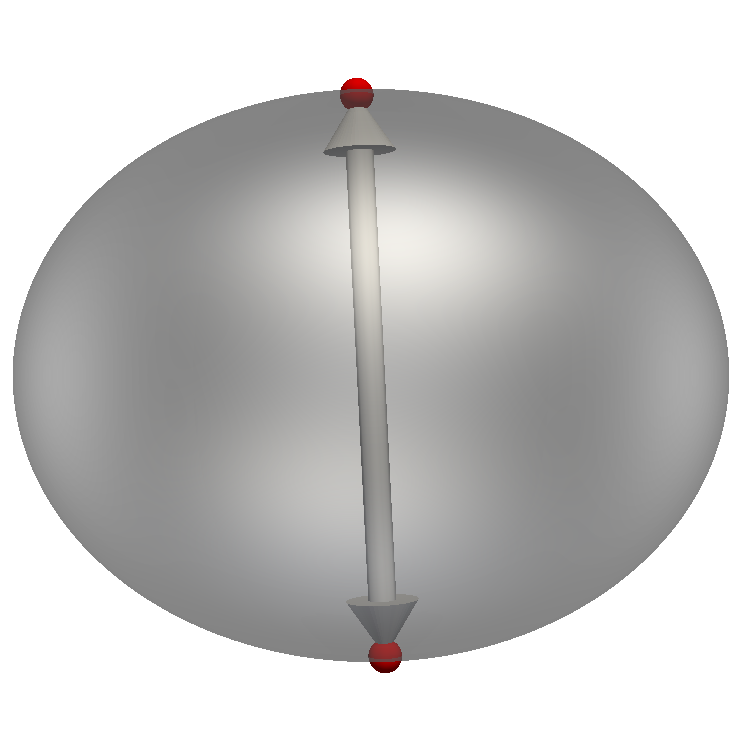}
			\includegraphics[width=\picwidth]{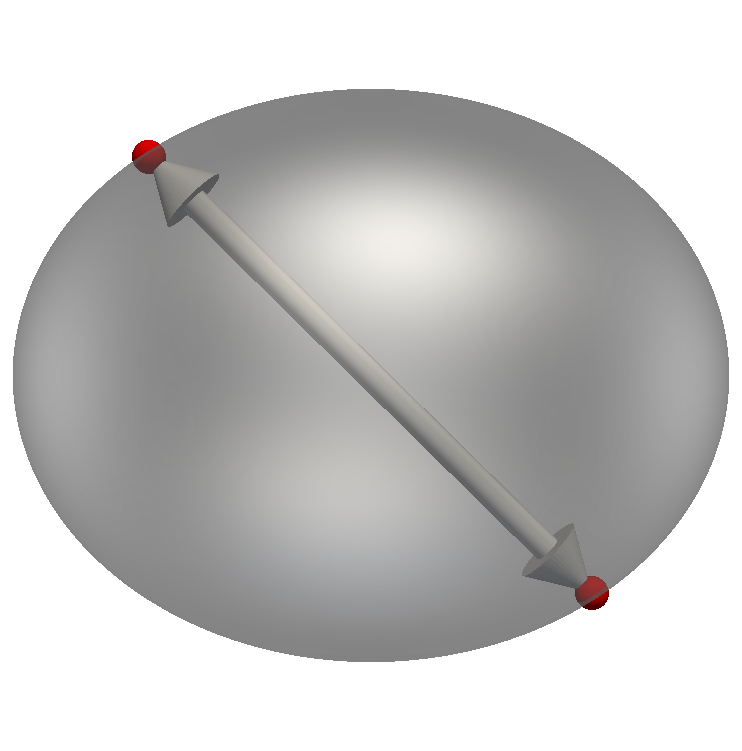}
			\includegraphics[width=\picwidth]{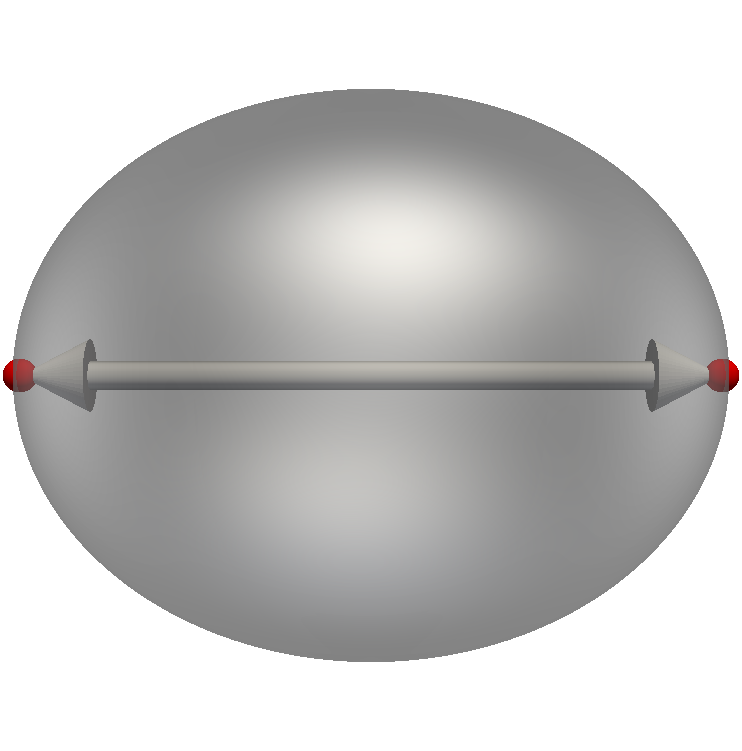}
		\end{minipage}
		\begin{minipage}{\textwidth}
			\centering
			\includegraphics[width=\picwidth]{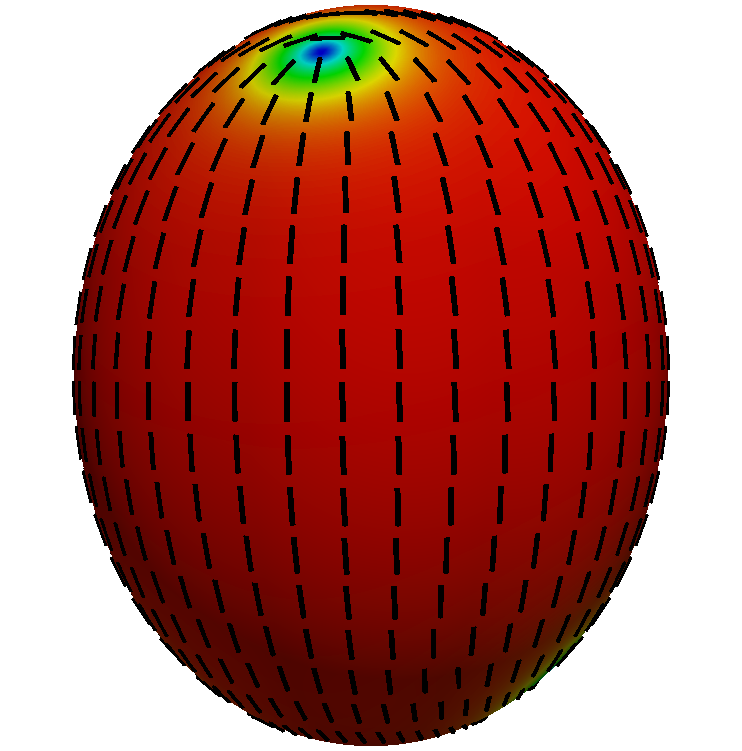}
			\includegraphics[width=\picwidth]{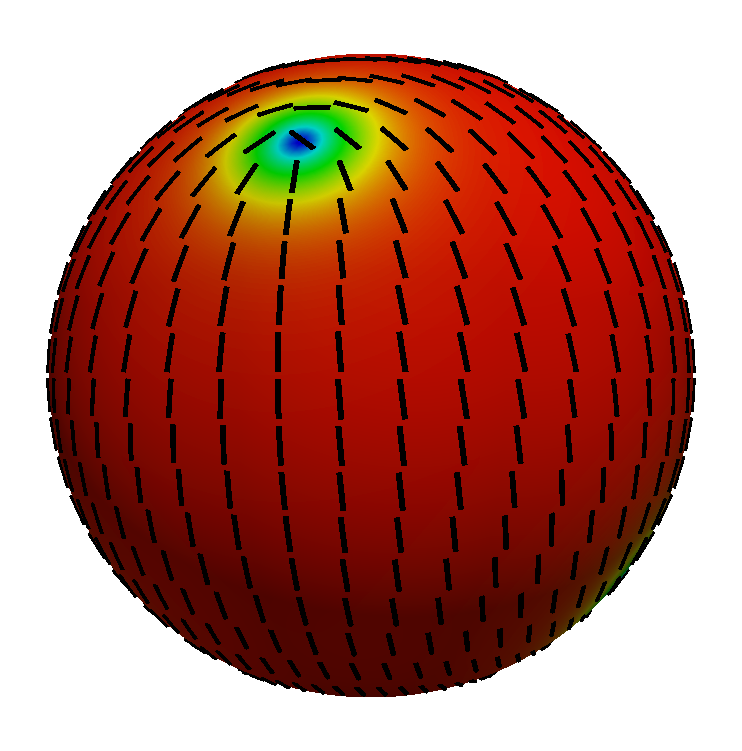}
			\includegraphics[width=\picwidth]{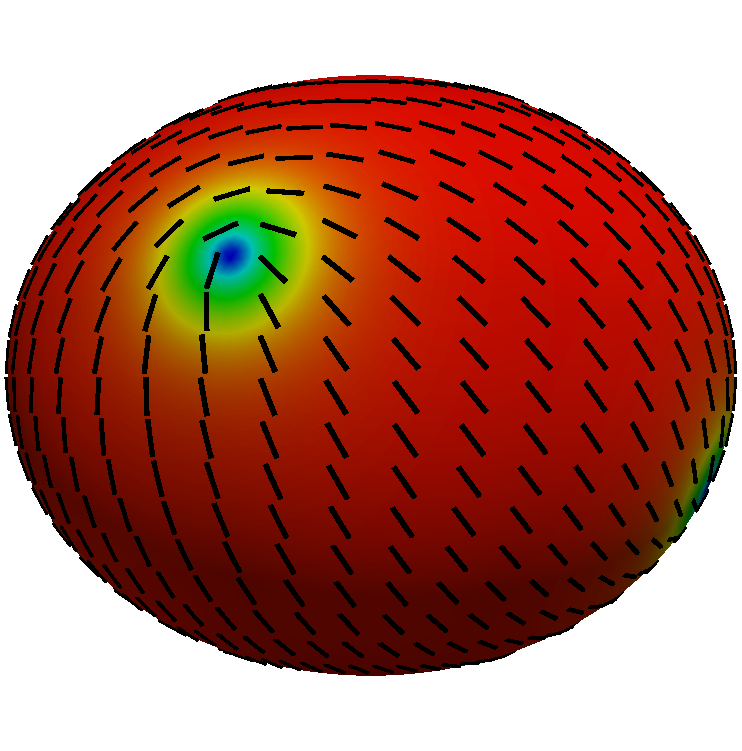}
			\includegraphics[width=\picwidth]{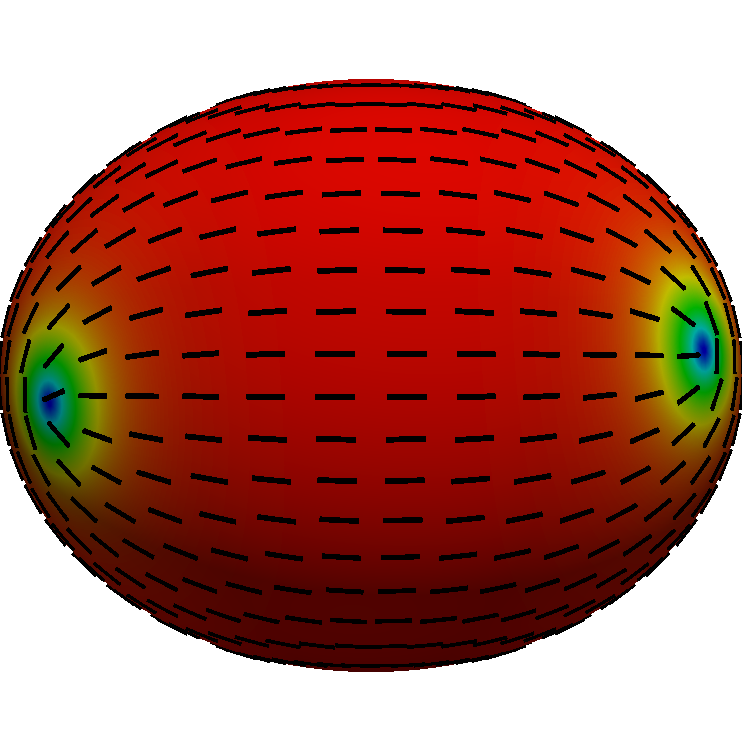}
			\includegraphics[width=\picwidth]{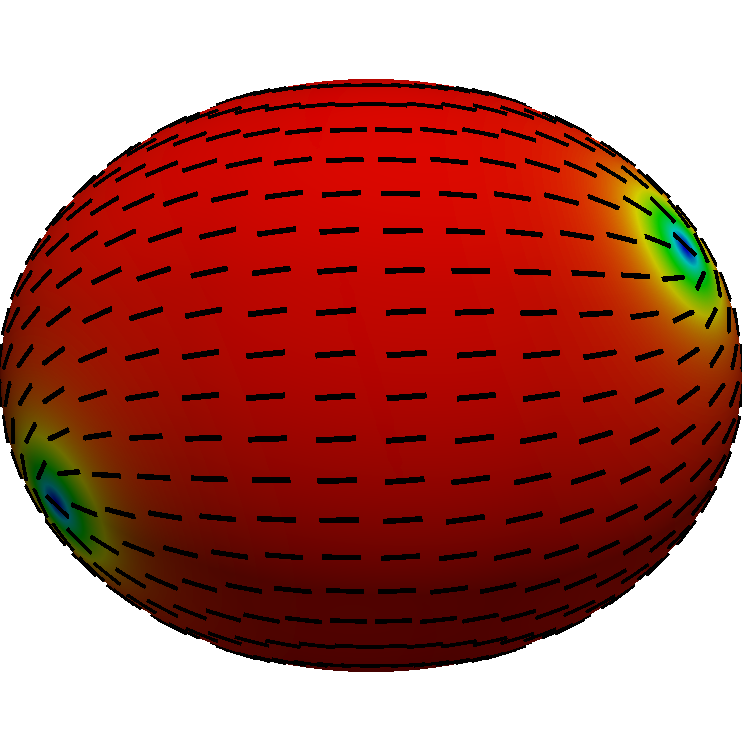} \\
			\includegraphics[width=\picwidth]{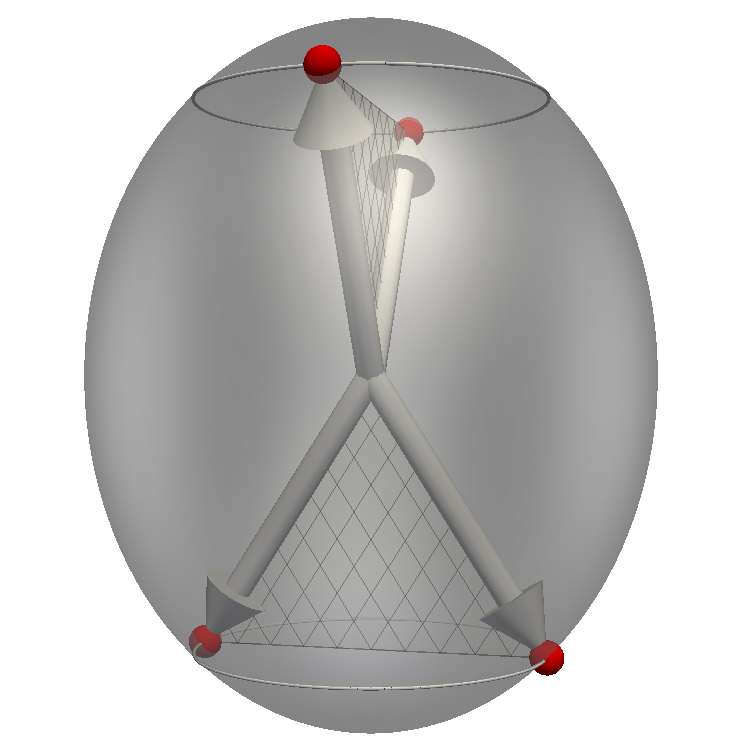}
			\includegraphics[width=\picwidth]{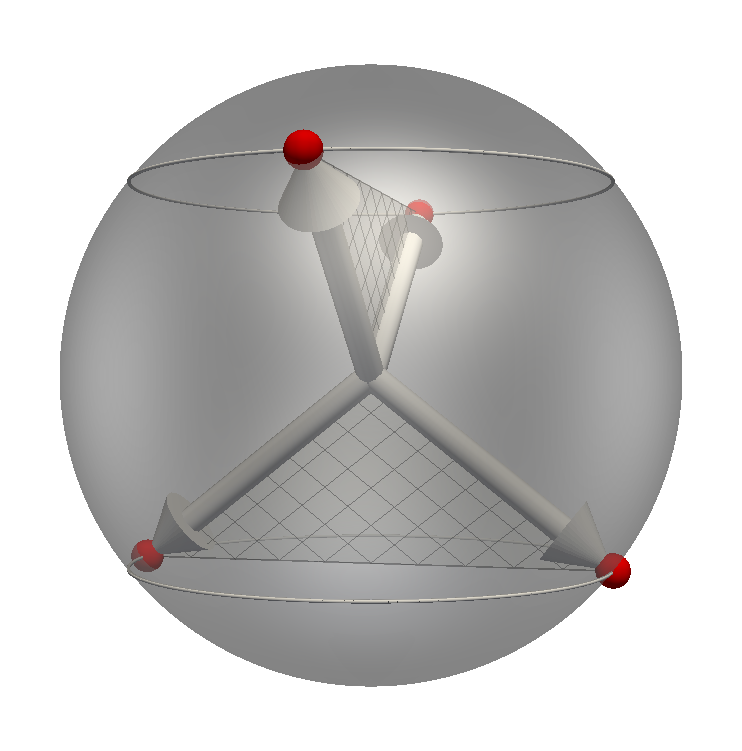}
			\includegraphics[width=\picwidth]{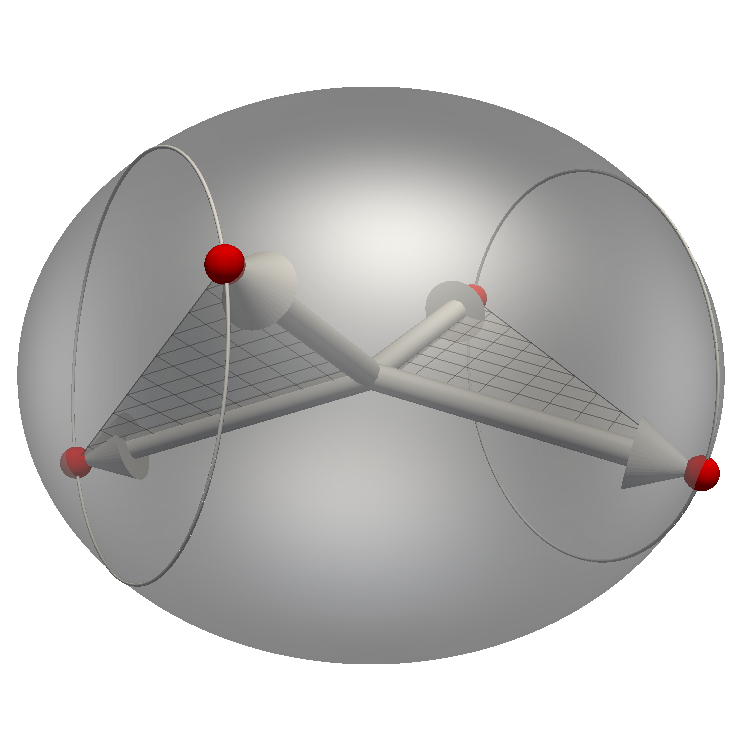}
			\includegraphics[width=\picwidth]{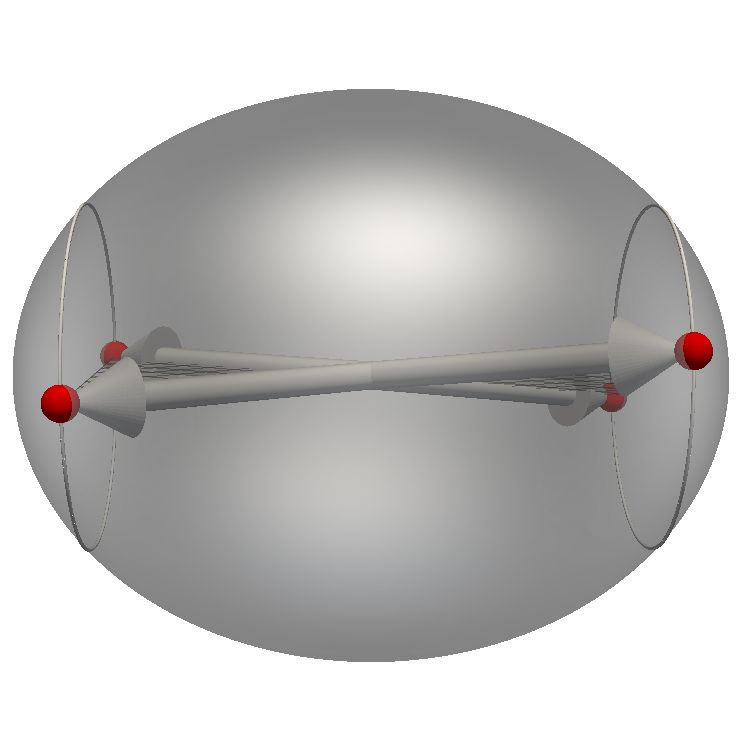}
			\includegraphics[width=\picwidth]{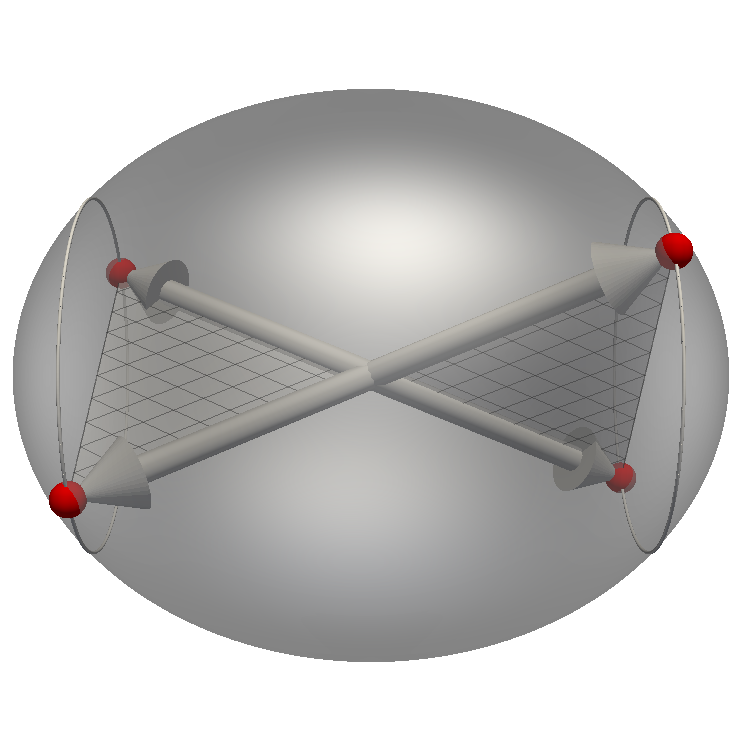}
		\end{minipage}
    }
    \caption{First and second row: Evolution of the director field $\dirf$ for $t=0, 40, 90, 110, 140$ on the prescribed evolving ellipsoid (top row) and respective $+1$ defect locations visualized as red dots and gray arrows (bottom row). 
    Third and fourth row: Evolution of the Q tensor field $\qten$ for $t=0, 40, 70, 110, 300$ on the prescribed evolving ellipsoid (top row) and respective $+\nicefrac{1}{2}$ defect locations visualized as red dots and gray arrows (bottom row). 
    The defect pairs with the two minimal angles to each other are indicated by the triangular planes.
    The gray circles indicate rotationally symmetric solutions. 
    In both cases -- the \frankOseenHelfrich\ as well as the \landauDeGennesHelfrich\ model -- the response of the defects to changes in curvature and the alignment along the minimal curvature lines can be observed.}
    \label{fig:director_response:results}
\end{figure}
    
More obvious differences can be observed at the defects, shown in Figure \ref{fig:director_response:results} as regions of disturbed orientational order (first and third row) and highlighted as red dots and gray arrows (second and fourth row). For the \frankOseenHelfrich\ model the expected defect configuration with two $+1$ defects, which maximize their distance, and for the \landauDeGennesHelfrich\ model the tetrahadral configuration with four $+\nicefrac{1}{2}$ defects is found. On a sphere they are known as global minimizers, subject to rotation \cite{Lubenskyetal_JPII_1992}. On an ellipsoid the two $+1$ defects are know to be located at umbilical points \cite{Ehrigetal_PRE_2017}, which here correspond with the maxima of curvature at the long axis. The equilibrium configuration is thus uniquely determined. The four $+\nicefrac{1}{2}$ defects, even if also geometrically attracted to these points are pushed away, due to repulsive forces between equally charged defects. The competition between these forces determines the equilibrium configuration, which is still free to rotate along the long axis of the ellipsoid \cite{Alaimoetal_SR_2017}. The dynamics follow these equilibrium configurations, with a short delay required to readjust the defect positions, see the solution at $t = 110$ for both cases. Interestingly for the \landauDeGennesHelfrich\ model this adjustment also requires to pass situation where the four $+\nicefrac{1}{2}$ defects are arranged in a planar configuration, see Figure \ref{fig:director_response:angle}.

\begin{figure}[!h]
    \centering
    \ifthenelse{\boolean{useSnapshots}}
    {
		\ifthenelse{\boolean{forArxiv}}
		{
			\includegraphics[width=0.5625\textwidth]{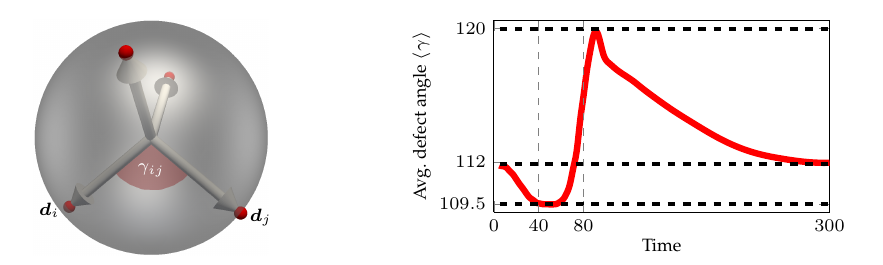}
		}
		{
			\includegraphics[width=0.75\textwidth]{snapshots/fig_2.png}
		}
	}
    {
		\ifthenelse{\boolean{forArxiv}}
		{
			\begin{minipage}{0.75\textwidth}
		}
		{
			\begin{minipage}{\textwidth}
		}
			\centering
			\begin{minipage}{0.49\textwidth}
				\centering
				\begin{minipage}{0.49\textwidth}
					\centering
					\begin{tikzpicture}
						\node (pic) at (0,0) {\includegraphics[width=\textwidth]{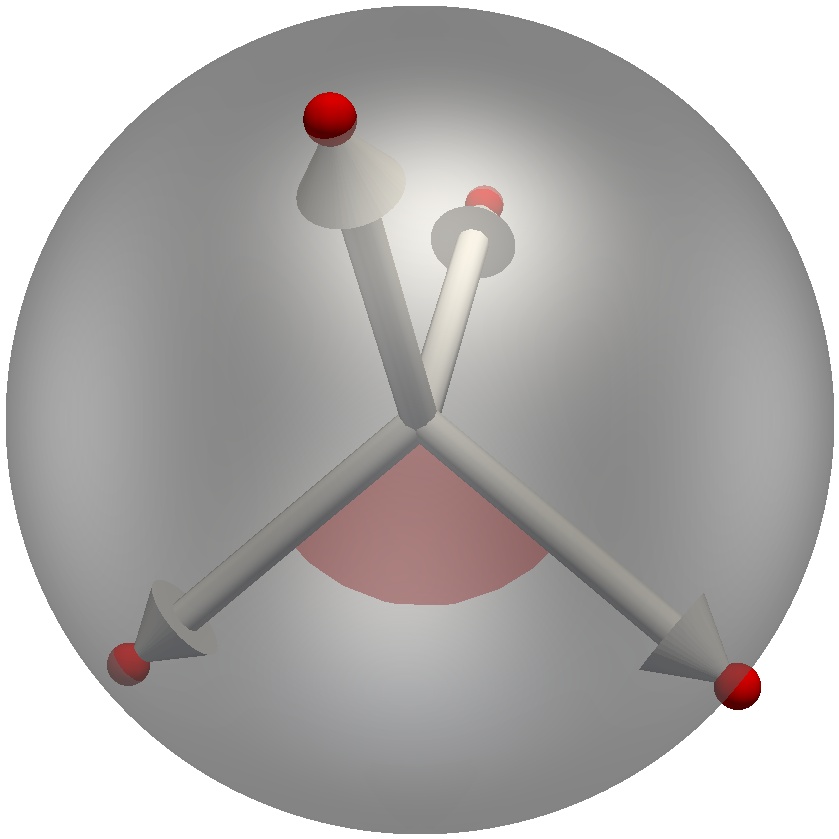}};
						\draw (-1.15,-1.0) node[anchor=east]   {\scriptsize $\boldsymbol{d}_i$};
						\draw (1.25,-1.1) node[anchor=west] {\scriptsize $\boldsymbol{d}_j$};
						\draw (0.0,-0.25) node[white,anchor=north] {\scriptsize $\gamma_{ij}$};
					\end{tikzpicture}
				\end{minipage}
			\end{minipage}
			\begin{minipage}{0.49\textwidth}
				\centering
				\input{pics/director_response/avgDefectAngle.tex}
			\end{minipage}
		\end{minipage}
	}
    \caption{Schematic drawing for defect angles $\gamma_{ij}$ between defect positions $\boldsymbol{d}_i$ and $\boldsymbol{d}_j$ (left) and averaged defect angle $\langle\gamma\rangle = \frac{1}{6} \sum_{i<j} \arccos(\langle\nicefrac{\boldsymbol{d}_i}{\|\boldsymbol{d}_i\|},\nicefrac{\boldsymbol{d}_j}{\|\boldsymbol{d}_j\|}\rangle)$ against time for the \landauDeGennesHelfrich\ model (right). $112^\circ$ is the equilibrium average for the considered ellipsoid at the initial and final configuration, $109.5^\circ$ and $120^\circ$ correspond to the tetrahedral and planar defect configuration on a sphere. The surface evolution passes the sphere configuration at $t=40$ and the final ellipsoidal shape is already reached at $t = 80$ (marked as vertical dashed lines).}
    \label{fig:director_response:angle}
\end{figure}

\subsection{Surface Response}

\begin{figure}[!h]
    \centering
    \ifthenelse{\boolean{useSnapshots}}
    {
		\ifthenelse{\boolean{forArxiv}}
		{
			\includegraphics[width=0.6\textwidth]{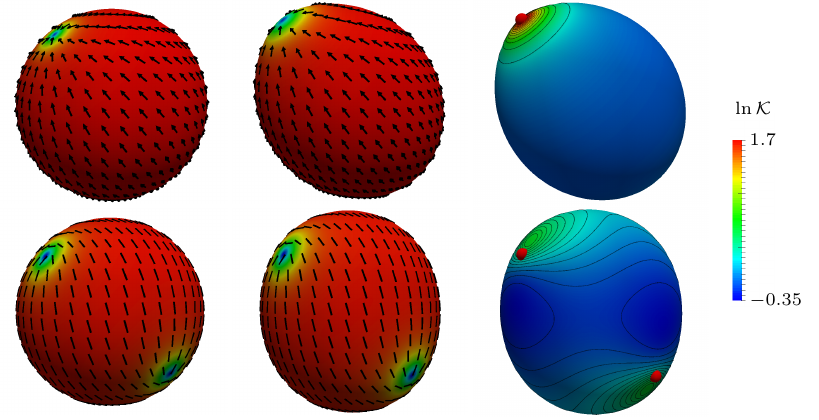}
		}
		{
			\includegraphics[width=0.8\textwidth]{snapshots/fig_3.png}
		}
	}
    {
		\ifthenelse{\boolean{forArxiv}}
		{
			\begin{minipage}{0.75\textwidth}
		}
		{
			\begin{minipage}{\textwidth}
		}
				\centering
				\begin{minipage}{0.75\textwidth}
					\centering
					\def\picwidth{0.32\textwidth}
					\begin{minipage}{\picwidth}
						\includegraphics[width=\textwidth]{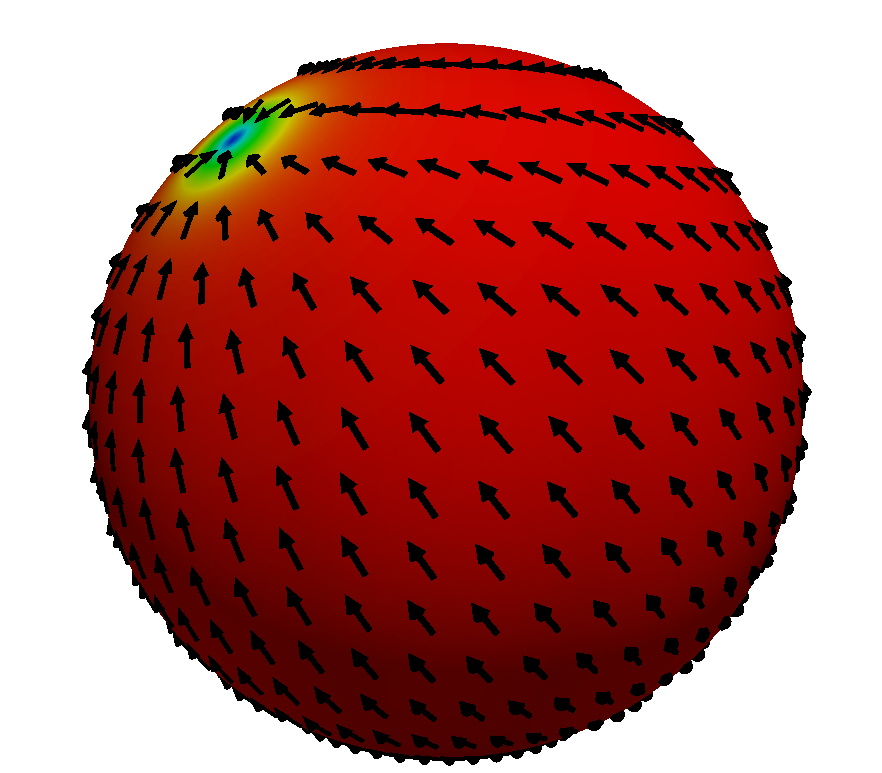}
					\end{minipage}
					\begin{minipage}{\picwidth}
						\includegraphics[width=\textwidth]{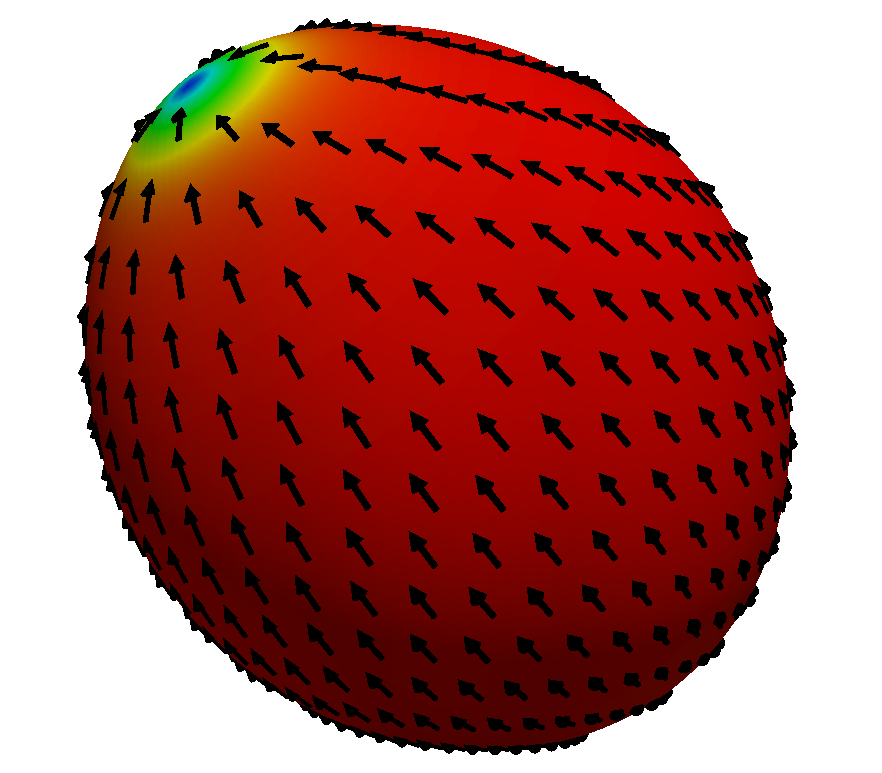}
					\end{minipage}
					\begin{minipage}{\picwidth}
						\includegraphics[width=\textwidth]{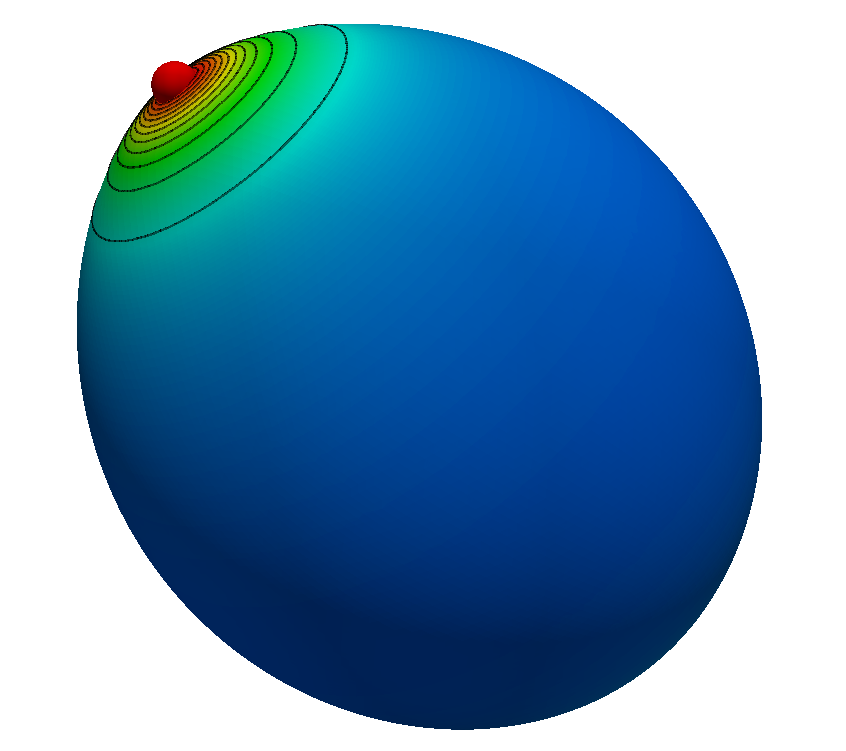}
					\end{minipage}
					\begin{minipage}{\picwidth}
						\includegraphics[width=\textwidth]{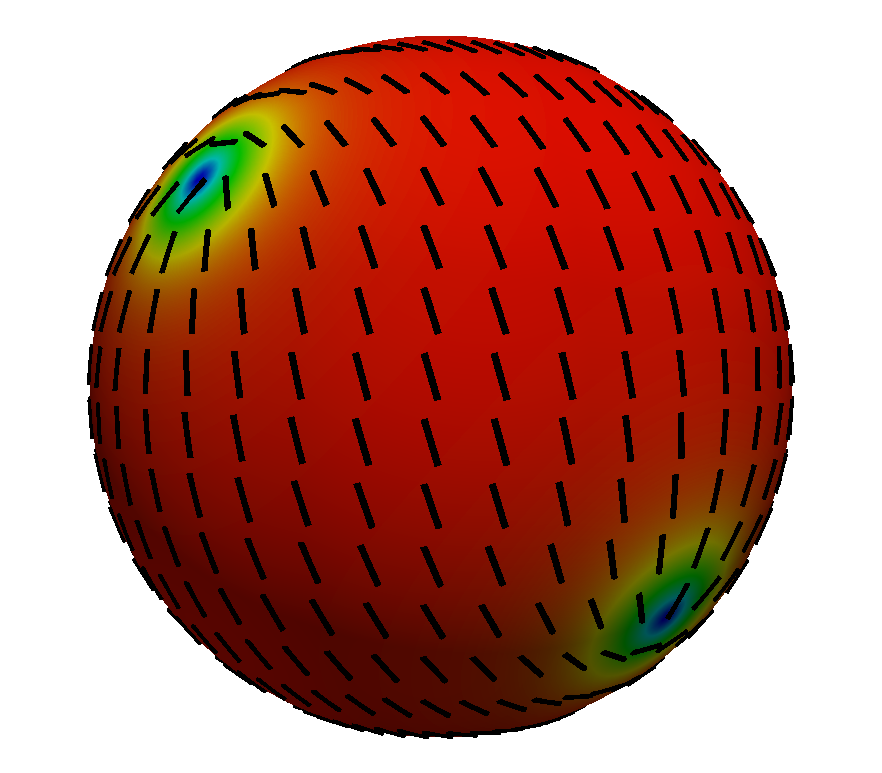}
					\end{minipage}
					\begin{minipage}{\picwidth}
						\includegraphics[width=\textwidth]{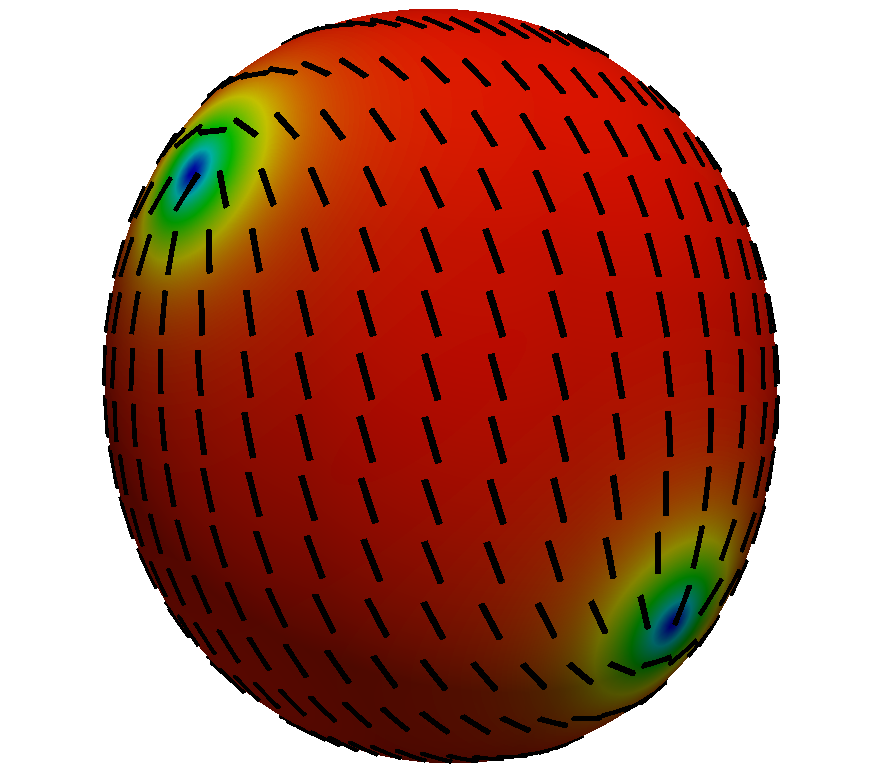}
					\end{minipage}
					\begin{minipage}{\picwidth}
						\includegraphics[width=\textwidth]{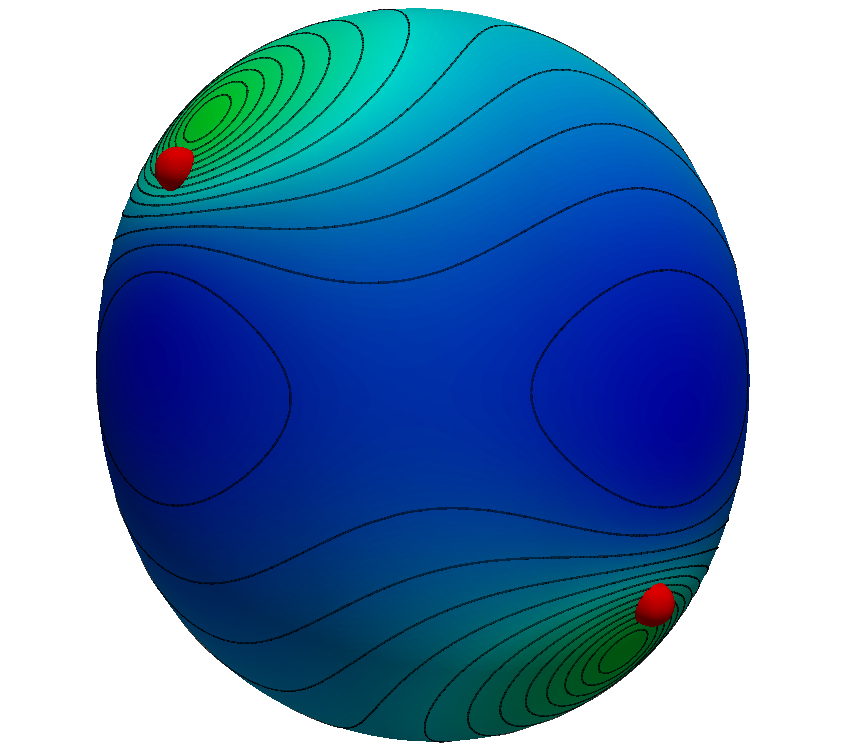}
					\end{minipage}
				\end{minipage}
				\begin{minipage}{0.11\textwidth}
					\insertColorbarVertical{$\operatorname{ln}\gaussc$}{$-0.35$}{}{}{$1.7$}
				\end{minipage}
			\end{minipage}
	}
	\caption{Top row: Initial condition for the \frankOseenHelfrich\ model as minimal configuration on the sphere (left) and reached steady state solution (center) with respective Gaussian curvature (right). Middle row: Initial condition for the \landauDeGennesHelfrich\ model as minimal configuration on the sphere (left) and reached steady state solution (center) with respective Gaussian curvature (right). The contour lines in the Gaussian curvature images indicate constant curvature lines and the red spheres are the positions of the defects.}
    \label{fig:surface_response:results}
\end{figure}
\begin{figure}[!h]
    \centering
    \ifthenelse{\boolean{useSnapshots}}
    {
		\ifthenelse{\boolean{forArxiv}}
		{
			\includegraphics[width=0.6375\textwidth]{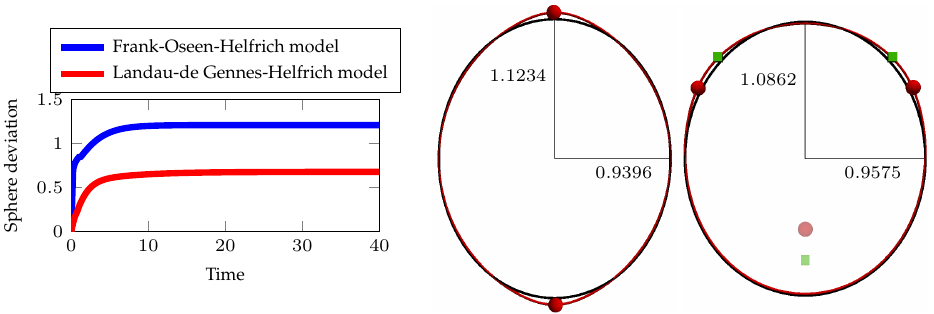}
		}
		{
			\includegraphics[width=0.85\textwidth]{snapshots/fig_4.png}
		}
    }
    {
		\ifthenelse{\boolean{forArxiv}}
		{
			\begin{minipage}{0.75\textwidth}
		}
		{
			\begin{minipage}{\textwidth}
		}
			\centering
			\begin{minipage}{0.45\textwidth}
				\centering
%
%
\begin{tikzpicture}

\begin{axis}[%
width=0.7\textwidth,
height=0.3\textwidth,
at={(0,0)},
scale only axis,
separate axis lines,
every outer x axis line/.append style={black},
every x tick label/.append style={font=\color{black}},
xmin=0,
xmax=40,
xtick={ 0, 10, 20, 30, 40},
xlabel={$\mbox{Time}$},
every outer y axis line/.append style={black},
every y tick label/.append style={font=\color{black}},
ymin=0,
ymax=1.5,
ytick={  0, 0.5,   1, 1.5},
ylabel={$\mbox{Sphere deviation}$},
axis background/.style={fill=white},
x label style={font=\scriptsize,at={(axis description cs:0.5,0.075)}},
y label style={font=\scriptsize,at={(axis description cs:0.1,0.5)}},
yticklabel style = {font=\scriptsize},
xticklabel style = {font=\scriptsize},
legend style={font=\scriptsize,at={(0.5,1.05)},anchor=south,legend cell align=left,align=left,draw=black}, 
]

\addlegendentry{$\mbox{\frankOseenHelfrich\ model}$};
\addlegendentry{$\mbox{\landauDeGennesHelfrich\ model}$};

\addplot [color=blue,solid,line width=2.5pt]
  table[row sep=crcr]{%
0	6.54815e-31\\
0.25	0.706673\\
0.5	0.782216\\
0.75	0.818016\\
1	0.846258\\
1.25	0.849727\\
1.5	0.876097\\
1.75	0.903043\\
2	0.927997\\
2.25	0.952694\\
2.5	0.975494\\
2.75	0.996796\\
3	1.01657\\
3.25	1.03474\\
3.5	1.05144\\
3.75	1.06669\\
4	1.08059\\
4.25	1.09323\\
4.5	1.1047\\
4.75	1.11509\\
5	1.12448\\
5.25	1.13297\\
5.5	1.14063\\
5.75	1.14754\\
6	1.15376\\
6.25	1.15936\\
6.5	1.1644\\
6.75	1.16893\\
7	1.173\\
7.25	1.17666\\
7.5	1.17994\\
7.75	1.18289\\
8	1.18553\\
8.25	1.18791\\
8.5	1.19004\\
8.75	1.19195\\
9	1.19366\\
9.25	1.1952\\
9.5	1.19658\\
9.75	1.19781\\
10	1.19892\\
10.25	1.19991\\
10.5	1.2008\\
10.75	1.20159\\
11	1.20231\\
11.25	1.20294\\
11.5	1.20352\\
11.75	1.20403\\
12	1.20449\\
12.25	1.2049\\
12.5	1.20527\\
12.75	1.2056\\
13	1.20589\\
13.25	1.20616\\
13.5	1.2064\\
13.75	1.20661\\
14	1.2068\\
14.25	1.20697\\
14.5	1.20712\\
14.75	1.20726\\
15	1.20738\\
15.25	1.20749\\
15.5	1.20759\\
15.75	1.20768\\
16	1.20776\\
16.25	1.20783\\
16.5	1.20789\\
16.75	1.20795\\
17	1.208\\
17.25	1.20804\\
17.5	1.20808\\
17.75	1.20812\\
18	1.20815\\
18.25	1.20818\\
18.5	1.20821\\
18.75	1.20823\\
19	1.20825\\
19.25	1.20827\\
19.5	1.20829\\
19.75	1.20831\\
20	1.20832\\
20.25	1.20833\\
20.5	1.20834\\
20.75	1.20835\\
21	1.20836\\
21.25	1.20837\\
21.5	1.20838\\
21.75	1.20838\\
22	1.20839\\
22.25	1.20839\\
22.5	1.2084\\
22.75	1.2084\\
23	1.20841\\
23.25	1.20841\\
23.5	1.20841\\
23.75	1.20842\\
24	1.20842\\
24.25	1.20842\\
24.5	1.20842\\
24.75	1.20843\\
25	1.20843\\
25.25	1.20843\\
25.5	1.20843\\
25.75	1.20843\\
26	1.20843\\
26.25	1.20843\\
26.5	1.20844\\
26.75	1.20844\\
27	1.20844\\
27.25	1.20844\\
27.5	1.20844\\
27.75	1.20844\\
28	1.20844\\
28.25	1.20844\\
28.5	1.20844\\
28.75	1.20844\\
29	1.20844\\
29.25	1.20844\\
29.5	1.20844\\
29.75	1.20844\\
30	1.20844\\
30.25	1.20845\\
30.5	1.20845\\
30.75	1.20845\\
31	1.20845\\
31.25	1.20845\\
31.5	1.20845\\
31.75	1.20845\\
32	1.20845\\
32.25	1.20845\\
32.5	1.20845\\
32.75	1.20845\\
33	1.20845\\
33.25	1.20845\\
33.5	1.20845\\
33.75	1.20845\\
34	1.20845\\
34.25	1.20845\\
34.5	1.20845\\
34.75	1.20845\\
35	1.20845\\
35.25	1.20845\\
35.5	1.20845\\
35.75	1.20845\\
36	1.20845\\
36.25	1.20845\\
36.5	1.20845\\
36.75	1.20845\\
37	1.20845\\
37.25	1.20845\\
37.5	1.20845\\
37.75	1.20845\\
38	1.20845\\
38.25	1.20846\\
38.5	1.20846\\
38.75	1.20846\\
39	1.20846\\
39.25	1.20846\\
39.5	1.20846\\
39.75	1.20846\\
40	1.20846\\
};

\addplot [color=red,solid,line width=2.5pt]
  table[row sep=crcr]{%
0	6.66784e-31\\
0.25	0.100742\\
0.5	0.176264\\
0.75	0.221647\\
1	0.281838\\
1.25	0.334013\\
1.5	0.377066\\
1.75	0.415117\\
2	0.447866\\
2.25	0.475765\\
2.5	0.499416\\
2.75	0.519418\\
3	0.536309\\
3.25	0.550576\\
3.5	0.56265\\
3.75	0.572905\\
4	0.581655\\
4.25	0.589162\\
4.5	0.595647\\
4.75	0.601288\\
5	0.606234\\
5.25	0.610606\\
5.5	0.614502\\
5.75	0.618003\\
6	0.621172\\
6.25	0.624063\\
6.5	0.626719\\
6.75	0.629174\\
7	0.631456\\
7.25	0.633588\\
7.5	0.635589\\
7.75	0.637474\\
8	0.639256\\
8.25	0.640945\\
8.5	0.64255\\
8.75	0.644077\\
9	0.645533\\
9.25	0.646922\\
9.5	0.64825\\
9.75	0.64952\\
10	0.650735\\
10.25	0.651899\\
10.5	0.653013\\
10.75	0.65408\\
11	0.655103\\
11.25	0.656082\\
11.5	0.657021\\
11.75	0.657921\\
12	0.658784\\
12.25	0.65961\\
12.5	0.660401\\
12.75	0.66116\\
13	0.661886\\
13.25	0.662582\\
13.5	0.663249\\
13.75	0.663887\\
14	0.664498\\
14.25	0.665083\\
14.5	0.665643\\
14.75	0.666179\\
15	0.666692\\
15.25	0.667183\\
15.5	0.667653\\
15.75	0.668102\\
16	0.668532\\
16.25	0.668943\\
16.5	0.669336\\
16.75	0.669712\\
17	0.670071\\
17.25	0.670415\\
17.5	0.670744\\
17.75	0.671058\\
18	0.671358\\
18.25	0.671645\\
18.5	0.67192\\
18.75	0.672182\\
19	0.672432\\
19.25	0.672672\\
19.5	0.672901\\
19.75	0.673119\\
20	0.673328\\
20.25	0.673528\\
20.5	0.673719\\
20.75	0.673901\\
21	0.674075\\
21.25	0.674241\\
21.5	0.6744\\
21.75	0.674552\\
22	0.674697\\
22.25	0.674835\\
22.5	0.674968\\
22.75	0.675094\\
23	0.675215\\
23.25	0.67533\\
23.5	0.67544\\
23.75	0.675545\\
24	0.675645\\
24.25	0.675741\\
24.5	0.675833\\
24.75	0.675921\\
25	0.676004\\
25.25	0.676084\\
25.5	0.67616\\
25.75	0.676233\\
26	0.676303\\
26.25	0.676369\\
26.5	0.676433\\
26.75	0.676493\\
27	0.676551\\
27.25	0.676606\\
27.5	0.676659\\
27.75	0.67671\\
28	0.676758\\
28.25	0.676804\\
28.5	0.676848\\
28.75	0.67689\\
29	0.67693\\
29.25	0.676969\\
29.5	0.677006\\
29.75	0.677041\\
30	0.677074\\
30.25	0.677106\\
30.5	0.677137\\
30.75	0.677166\\
31	0.677194\\
31.25	0.677221\\
31.5	0.677246\\
31.75	0.677271\\
32	0.677294\\
32.25	0.677317\\
32.5	0.677338\\
32.75	0.677358\\
33	0.677378\\
33.25	0.677397\\
33.5	0.677415\\
33.75	0.677432\\
34	0.677448\\
34.25	0.677464\\
34.5	0.677479\\
34.75	0.677493\\
35	0.677507\\
35.25	0.67752\\
35.5	0.677533\\
35.75	0.677545\\
36	0.677557\\
36.25	0.677568\\
36.5	0.677578\\
36.75	0.677589\\
37	0.677598\\
37.25	0.677608\\
37.5	0.677617\\
37.75	0.677625\\
38	0.677634\\
38.25	0.677642\\
38.5	0.677649\\
38.75	0.677657\\
39	0.677664\\
39.25	0.67767\\
39.5	0.677677\\
39.75	0.677683\\
40	0.677689\\
};

\end{axis}
\end{tikzpicture}%
			\end{minipage}
			\begin{minipage}{0.25\textwidth}
				\centering
				\begin{tikzpicture}
					\node (pic) at (0,0) {\includegraphics[width=\textwidth]{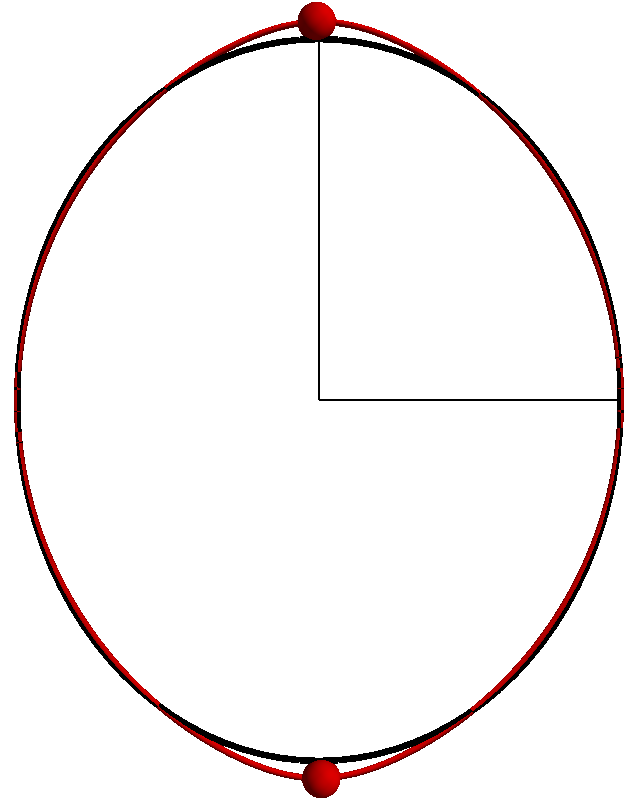}};
					\draw (0.95625,0.0) node[anchor=north] {\scriptsize $0.9396$};
					\draw (0.0,1.14375) node[anchor=east] {\scriptsize $1.1234$};
				\end{tikzpicture}
			\end{minipage}
			\begin{minipage}{0.25\textwidth}
				\centering
				\begin{tikzpicture}
					\node (pic) at (0,0) {\includegraphics[width=\textwidth]{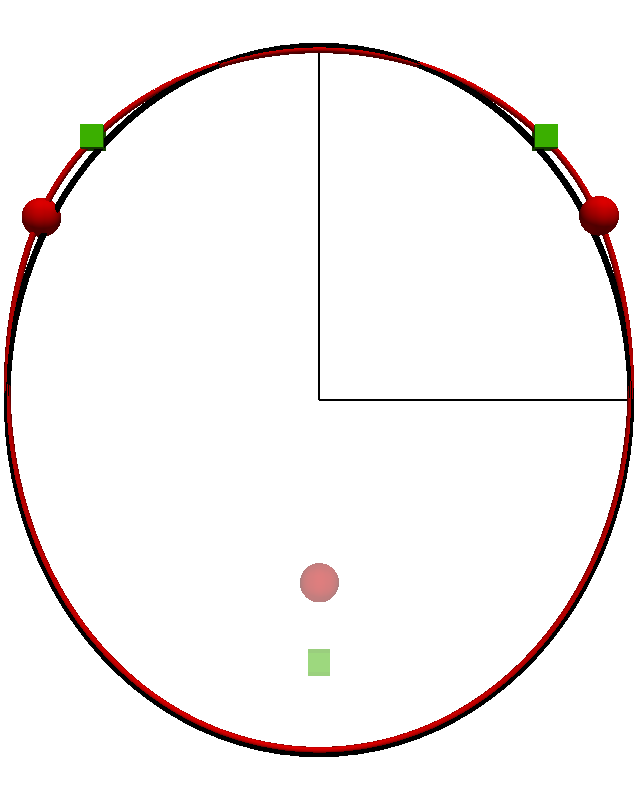}};
					\draw (0.9375,0.0) node[anchor=north] {\scriptsize $0.9575$};
					\draw (0.0,1.0875) node[anchor=east] {\scriptsize $1.0862$};
				\end{tikzpicture}
			\end{minipage}
		\end{minipage}
	}
    \caption{Left: Deviation from a sphere geometry during time evolution. The sphere deviation $\sigma_{\mathcal{S}^2}$ is considered to be $\sigma_{S^2} = \int_\surf\left(\meanc-\meanc_{S^2}\right)^2\mu$, where $\meanc_{S^2}$ denotes the mean curvature of the unit sphere. Right: Slice of final shape of the \frankOseenHelfrich\ and \landauDeGennesHelfrich\ model (red), reference is an ellipse (black). For the nematic case the slice is through the upper defect pair. Red spheres are the defect positions (light red sphere indicates the defect pair that does not belong to the slice), green squares are the positions of maximum Gaussian curvature (light green square for corresponding position which does not belong to the slice). Ellipses are fitted such that the surface area corresponds to the surface area of the final shape.}
    \label{fig:surface_response:schematic}
\end{figure}

Next, we consider stationary director fields and Q tensor fields and investigate the response of the surface for the respective fields. 
In the models \eqref{eq:gradientflow:p:vnor}-\eqref{eq:gradientflow:p:dirf} and \eqref{eq:gradientflow:q:vnor}-\eqref{eq:gradientflow:q:qten} this can be achieved by using $\kinConstP/\kinConst\to\infty$ and $\kinConstQ/\kinConst\to\infty$, respectively. 
Roughly speaking, the time scale of the response of the surface is infinitesimal small compared to the relaxation time scales of the director field/Q tensor field.
We start with the minimal energy configurations on a sphere for the director field $\dirf$ and the Q tensor field $\qten$, which were obtained as steady state solutions of eqs. \eqref{eq:gradientflow:p:dirf} and \eqref{eq:gradientflow:q:qten} on the unit sphere with the initial conditions from the prior section. 
\autoref{fig:surface_response:results} shows these initial conditions and the reached steady state solutions after shape relaxation. 
In both cases the shape deviates. 
The deviation from a sphere is computed according to $\sigma_{S^2} = \int_\surf\left(\meanc-\meanc_{S^2}\right)^2\mu$, where $\meanc_{S^2}$ denotes the mean curvature of the unit sphere.
The deviation from a sphere for the \frankOseenHelfrich\ model is approximately two times larger as for the \landauDeGennesHelfrich\ model, see \autoref{fig:surface_response:schematic}. 
This results from the larger number of defect, four instead of two, but also the Helfrich-like contribution, \cf\ \eqref{eq:gradientflow:q:vnor} resulting from the variation of the \landauDeGennes\ energy \wrt\ to the surface, which effectively increases bending stiffness and therefore keeps the surface more spherical. 
However, in both cases the surface is most distorted in the vicinity of the defects, where regions of high Gaussian curvature evolve, see \autoref{fig:surface_response:results}. 
A logarithmic scale for Gaussian curvature is used to highlight the differences between both cases. For polar order the distortion results in a symmetric ellipsoidal-like shape, with the two $+1$ defects located at the high curvature points on the long axis. 
For nematic order the final shape becomes asymmetric. 
However, in \autoref{fig:surface_response:schematic} we approximate both shapes by a symmetric ellipsoid with the same surface area. 
The distortions in the vicinity of the defects are visible and the different values for long- and short-axis of the approximating ellipsoids are marked. 
\autoref{fig:surface_response:schematic} further highlights the differences of defect positions and points of maximal Gaussian curvature for the nematic case. 
They are still close to each other but do not coincide. 
Within the vicinity of the poles on the long axis of the approximating ellipsoids defects and points of maximal curvature are symmetrically arranged on one line, which is rotated by $90^\circ$ for the two poles. 
Additional forces which contribute to shape relaxation are due to alignment. 
In regions of low Gaussian curvature the director fields are almost perfectly aligned. 

\subsection{Fully Coupled System}

After investigating either the directors response or the surfaces response we now combine both mechanisms by considering the whole system of equations \eqref{eq:gradientflow:p:vnor}-\eqref{eq:gradientflow:p:dirf} and \eqref{eq:gradientflow:q:vnor}-\eqref{eq:gradientflow:q:qten} in an appropriate parameter setting, \cf\ \autoref{tab:parameters}. 
As initial conditions we consider shapes for which director fields / Q tensor fields with more than the minimal number of topologically required defects are energetically favourable. 
For the \frankOseenHelfrich\ energy we make use of a so-called nonic surface, which was considered in \cite{Nestleretal_JNS_2018}.
The parametrization of this surface reads
\begin{align}
    \label{eq:nonic:parametrization}
    \para(\theta, \varphi) &:= \para_{S^2}(\theta, \varphi) + f_{C,r}(\cos\theta)\mathbf{e}_x - B\sin\theta\sin\varphi\mathbf{e}_y
\end{align}
with standard parametrization angles $\theta$, $\varphi$, standard parametrization of the unit sphere $\para_{S^2}(\theta, \varphi)$, unit vectors in $x$-,$y$-direction $\mathbf{e}_x$,$\mathbf{e}_y$ in $\R^3$, parameters $C=0.75$, $B=\nicefrac{7}{20}C$ and $r=0.95$ and 
\begin{align*}
    f_{C,r}(z) &:= \frac{1}{4}Cz^2\left(\left(z+1\right)^2\left(4-3z\right) + r\left(z-1\right)^2\left(4+3z\right)\right) \formPeriod
\end{align*}
According to \cite{Nestleretal_JNS_2018} there exist two stable defect configurations, one with two $+1$ defects and one with four defects, three with $+1$ and one with $-1$ topological charge, located at points with positive and negative Gaussian curvature extreme value, respectively. 
The latter is used as initial condition in the present setting and can be obtained as steady state solution on the stationary nonic surface with
\begin{align*}
    \dirf|_{t=0} &= \frac{1}{\|\proj_\surf\mathbf{e}_x\|}\proj_\surf\mathbf{e}_x
\end{align*}
as initial condition. 
While being the ground state on the initial surface during evolution regions with high curvature become shallower such that the four defect solution is no longer favourable and two defects (a $\pm1$ defect pair) annihilate shortly after $t=1$. 
The following evolution is again towards a symmetric ellipsoidal-like shape with the two remaining $+1$ defects located at the high Gaussian curvature points at the long axis. 
The results are shown in \autoref{fig:results:pField}. 
The energy decay and dissipation rate for the evolution is shown in \autoref{fig:results:pField} (bottom line), with the drop in the energy curves and the peak in the dissipation rate corresponding to the defect annihilation.

\begin{figure}[!h]
    \centering
    \ifthenelse{\boolean{useSnapshots}}
    {
		\ifthenelse{\boolean{forArxiv}}
		{
			\includegraphics[width=0.675\textwidth]{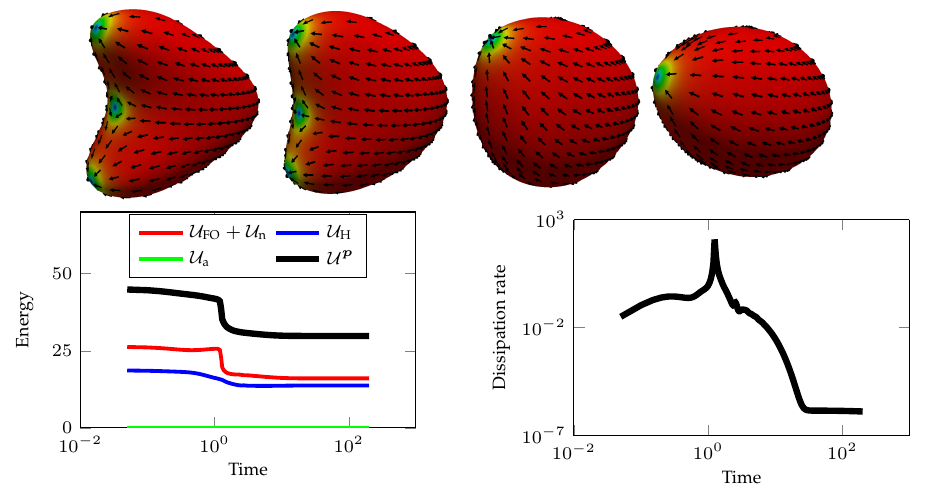}
		}
		{
			\includegraphics[width=0.9\textwidth]{snapshots/fig_5.png}
		}
    }
    {
		\ifthenelse{\boolean{forArxiv}}
		{
			\begin{minipage}{0.75\textwidth}
		}
		{
			\begin{minipage}{\textwidth}
		}
				\centering
				\begin{minipage}{\textwidth}
					\centering
					\def\picwidth{0.19\textwidth}
					\includegraphics[width=\picwidth]{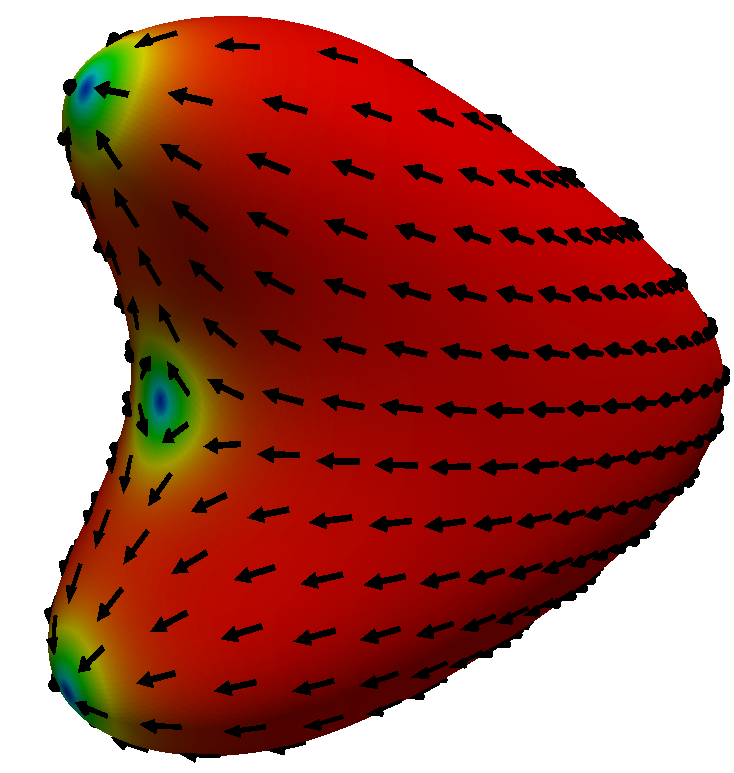}
					\includegraphics[width=\picwidth]{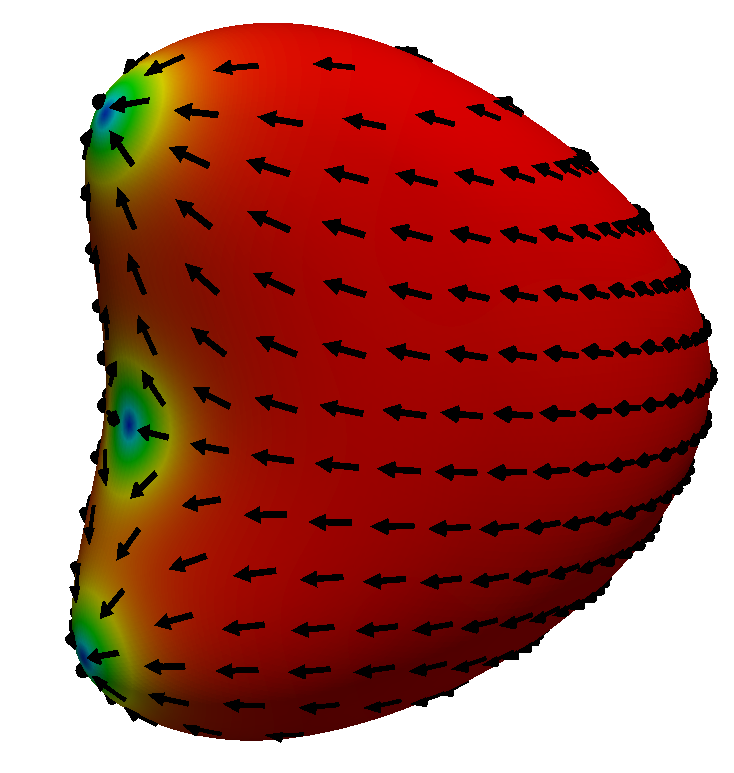}
					\includegraphics[width=\picwidth]{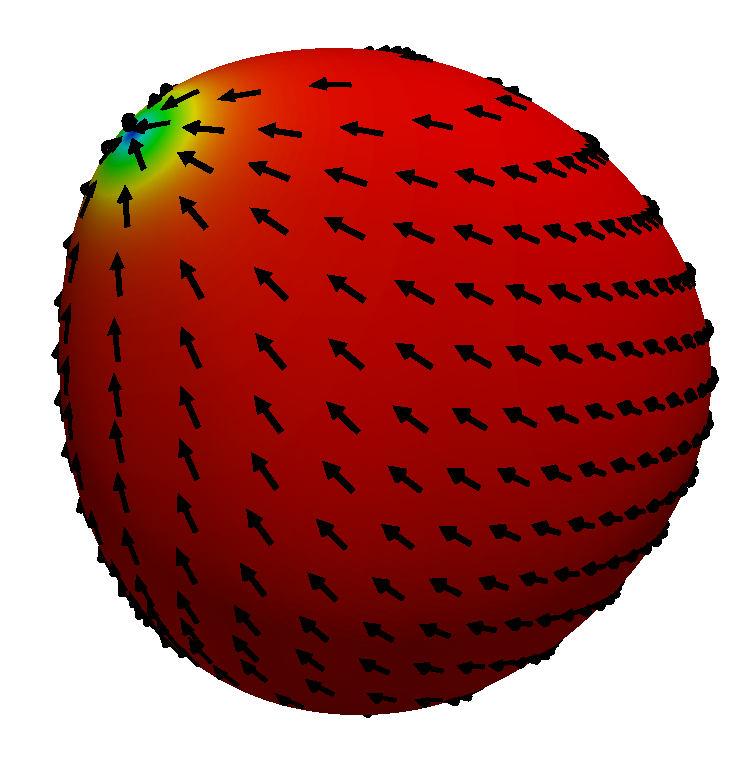}
					\includegraphics[width=\picwidth]{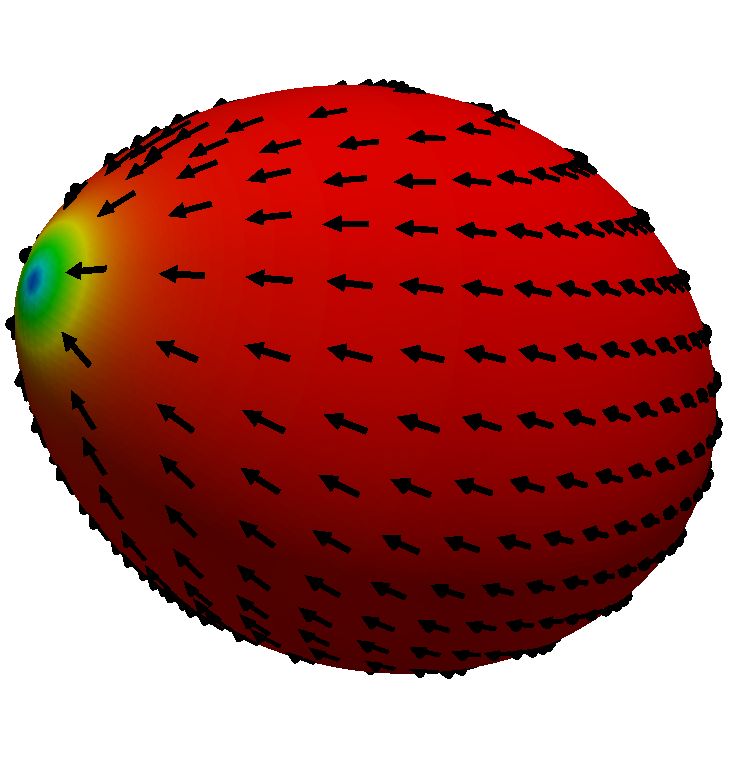}
				\end{minipage}
				\begin{minipage}{0.49\textwidth}
					\centering
					\inputTikzPic{pics/full/pField/energy.tex}
				\end{minipage}
				\begin{minipage}{0.49\textwidth}
					\centering
					\inputTikzPic{pics/full/pField/dissipationRate.tex}
				\end{minipage}
			\end{minipage}
    }
    \caption{Top: Evolution of the $\dirf$ field of the \frankOseenHelfrich\ model for $t=0, 1, 2.5, 100$ for the nonic surface initial condition (left to right). Bottom: Energies against time (left) and dissipation rate against time (right). See also Supplementary Video 1.}
    \label{fig:results:pField}
\end{figure}

\begin{figure}[!h]
    \centering
    \ifthenelse{\boolean{useSnapshots}}
    {
		\ifthenelse{\boolean{forArxiv}}
		{
			\includegraphics[width=0.675\textwidth]{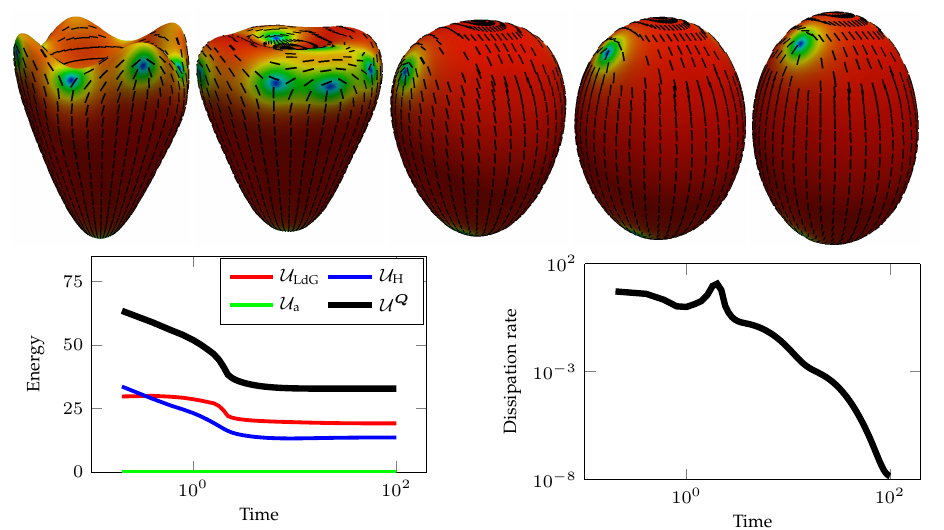}
		}
		{
			\includegraphics[width=0.9\textwidth]{snapshots/fig_6.png}
		}
    }
    {
		\ifthenelse{\boolean{forArxiv}}
		{
			\begin{minipage}{0.75\textwidth}
		}
		{
			\begin{minipage}{\textwidth}
		}
				\centering
				\begin{minipage}{\textwidth}
					\centering
					\def\picheight{0.24\textwidth}
					\includegraphics[height=\picheight]{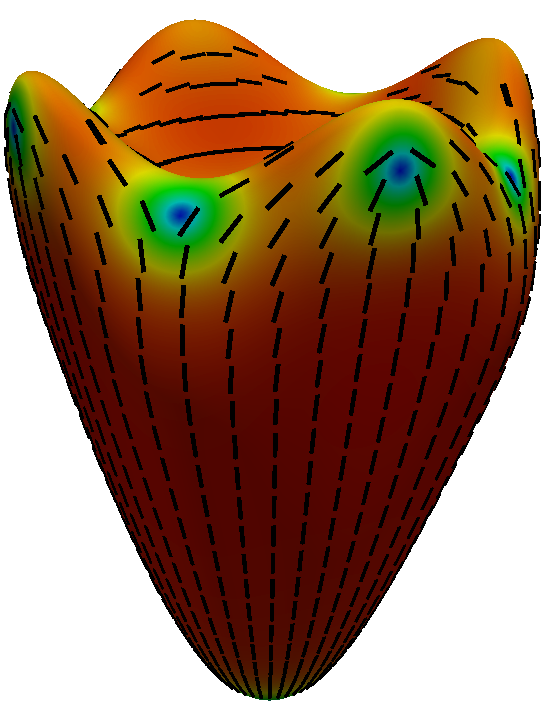}
					\includegraphics[height=\picheight]{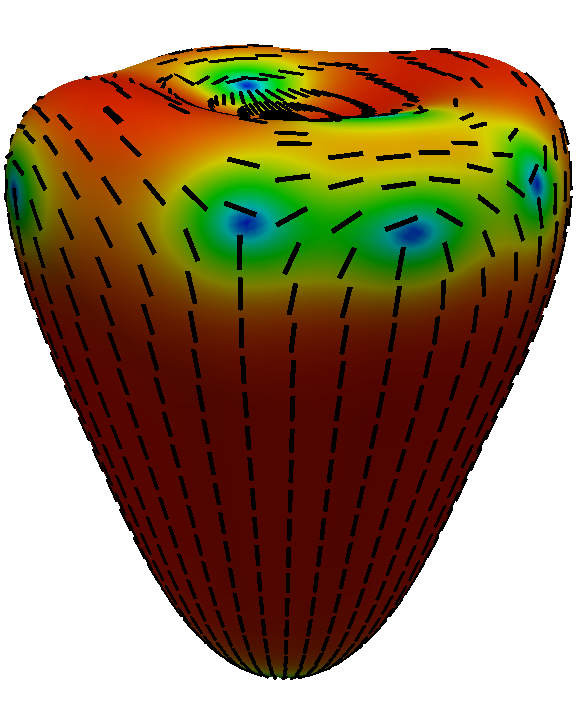}
					\includegraphics[height=\picheight]{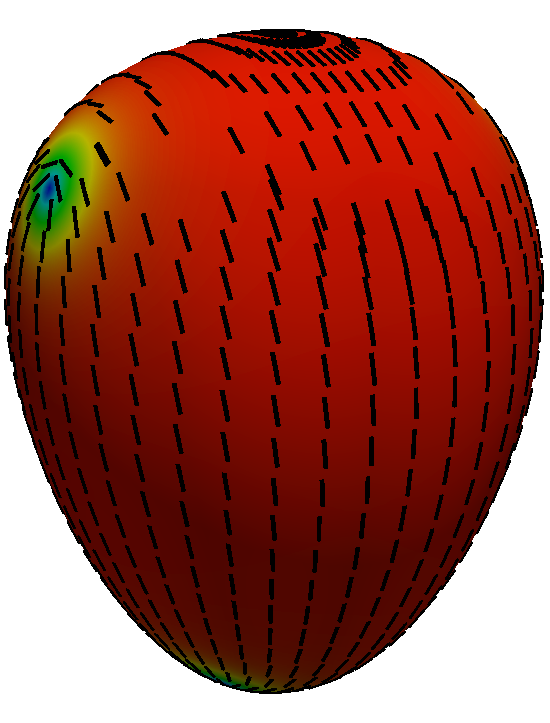}
					\includegraphics[height=\picheight]{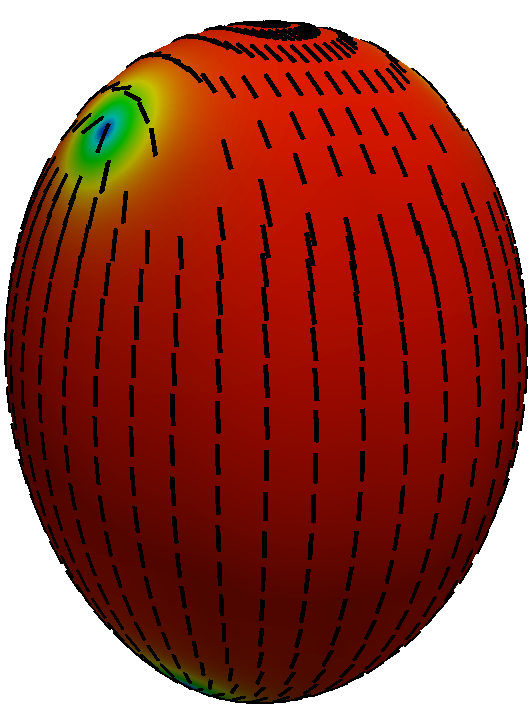}
					\includegraphics[height=\picheight]{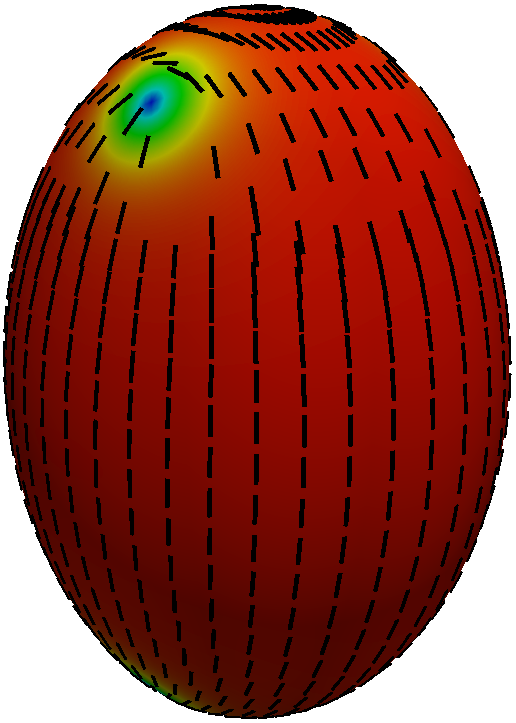}
				\end{minipage}
				\begin{minipage}{0.49\textwidth}
					\centering
					\inputTikzPic{pics/full/qTensor/energy.tex}
				\end{minipage}
				\begin{minipage}{0.49\textwidth}
					\centering
					\inputTikzPic{pics/full/qTensor/dissipationRate.tex}
				\end{minipage}
			\end{minipage}
    }
	\caption{Top: Evolution of the q tensor field of the \landauDeGennesHelfrich\ model for $t=0, 1.5, 5, 10, 100$ for the bi-nonic surface initial condition (left to right). Bottom: Energies against time (left) and dissipation rate against time (right). See also Supplementary Video 2.}
    \label{fig:results:qTensor}
\end{figure}

To construct a similar situation for the \landauDeGennesHelfrich\ model is more complex. 
A morphology has to be found for which more than four $+\nicefrac{1}{2}$ defects are energetically favorable. 
We modify the parametrization by using
\begin{align}
    \label{eq:binonic:parametrization}
    \para(\theta, \varphi) &:= \para_{S^2}(\theta, \varphi) + f_{C,r}(\cos\theta)\mathbf{e}_x + f_{C,r}(\sin\theta\sin\varphi)\mathbf{e}_x
\end{align}
and call this surface bi-nonic. 
All parameters remain unchanged except of $C=1.1$. 
As initial condition we use the steady state solution on the stationary bi-nonic surface with
\begin{align*}
    \qten|_{t=0} &= \orderp\left(\left(\normal\times\dirf_{\qten}\right)\otimes\left(\normal\times\dirf_{\qten}\right) - \frac{1}{2}\proj_\surf\right) \formComma \quad
    \dirf_{\qten} = \frac{1}{\|\proj_\surf\mathbf{e}_x\|}\proj_\surf\mathbf{e}_x \formPeriod
\end{align*}
which consists of eight $+\nicefrac{1}{2}$ defects and four $-\nicefrac{1}{2}$ defects. 
The positive ones are located in the vicinity of peaks and valleys, while the negative ones can be found around saddle points of the surface. 
As for the setting for the \frankOseenHelfrich\ energy in \cite{Nestleretal_JNS_2018} it is demonstrated that this solution is stable and has a lower energy than any configuration with less defects. 
The geometry, the initial condition and the evolution of both are shown in \autoref{fig:results:qTensor}. 
The results show a similar behavior as for the \frankOseenHelfrich\ model. 
Regions of high curvature become shallower over time such that geometric forces -- pushing or pulling defects -- become weaker. 
This results in annihilation of defects (four $\pm\nicefrac{1}{2}$ defect pairs), which can again be observed in the evolution of the energy as well as dissipation rate as step or peak, respectively, \cf\ \autoref{fig:results:qTensor} (bottom row).
The step and peak are not as sharp as for the \frankOseenHelfrich\ model, which results from the different topological charge of the annihilated defects and the large energy of defects associated with normalization energy $\normalizationEnergy$.

Most interestingly is the evolution after the annihilation, if only four $+\nicefrac{1}{2}$ defects remain. 
In contrast with the tetrahedral ground state on a sphere or a rotationally symmetric ellipsoid, here the four $+\nicefrac{1}{2}$ defects arrange in a planar position on a surface which is asymmetric. 
In \autoref{fig:results:schematic} we compare the deviations from a sphere for both the \frankOseenHelfrich\ and the \landauDeGennesHelfrich\ model during evolution. 
The increase for the \landauDeGennesHelfrich\ model corresponds with the defect rearrangement and the formation of an asymmetric shape. 
In the polar case, we approximate the final shape by an ellipsoid with the same surface area to highlight the distortions in the vicinity of the defects. 
In the nematic case we mark only the distances from the origin to the surface to highlight the asymmetry of the final configuration.
Defects are visualized as red spheres and the maximal value of the Gaussian curvature are marked as green squares.  

\begin{figure}[!h]
    \centering
    \ifthenelse{\boolean{useSnapshots}}
    {
		\ifthenelse{\boolean{forArxiv}}
		{
			\includegraphics[width=0.6\textwidth]{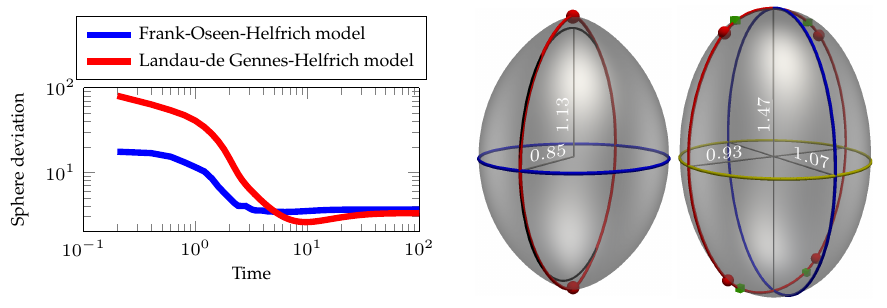}
		}
		{
			\includegraphics[width=0.8\textwidth]{snapshots/fig_7.png}
		}
    }
    {
		\ifthenelse{\boolean{forArxiv}}
		{
			\begin{minipage}{0.75\textwidth}
		}
		{
			\begin{minipage}{\textwidth}
		}
				\centering
				\begin{minipage}{0.49\textwidth}
				\centering
					\input{pics/full/sphereDeviation.tex}
				\end{minipage}
				\begin{minipage}{0.2\textwidth}
					\centering
					\begin{tikzpicture}
						\node (pic) at (0,0) {\includegraphics[width=\textwidth]{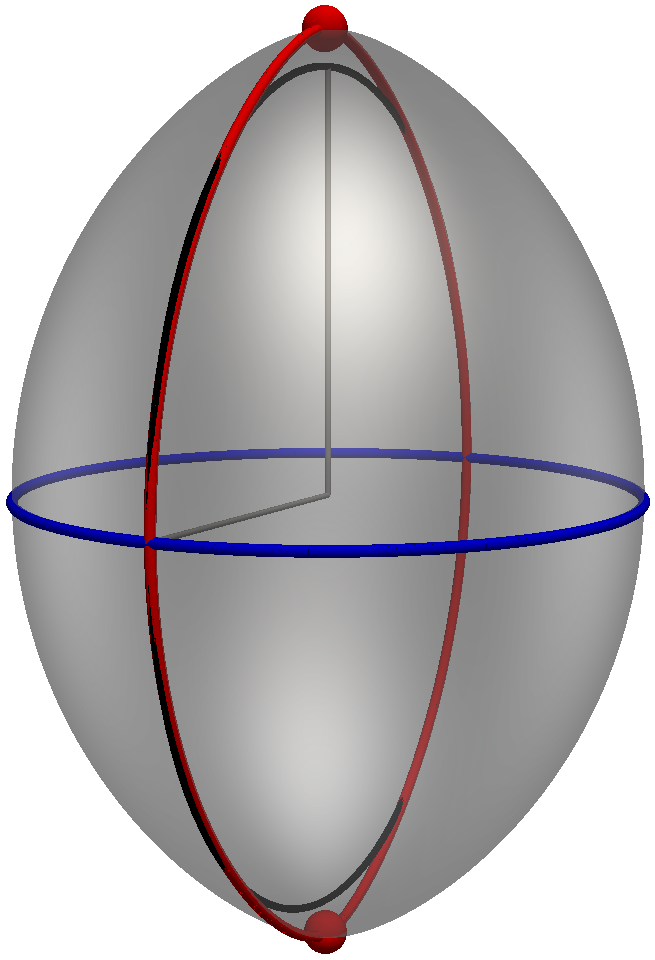}};
						\draw (-0.35,-0.21) node[white,anchor=south,rotate=0,xslant=0.0,yslant=0.2]   {\scriptsize $0.85$};
						\draw (0.04,0.5) node[white,anchor=south,rotate=90]   {\scriptsize $1.13$};
					\end{tikzpicture}
				\end{minipage}
				\begin{minipage}{0.2\textwidth}
					\centering
					\begin{tikzpicture}
						\node (pic) at (0,0) {\includegraphics[width=\textwidth]{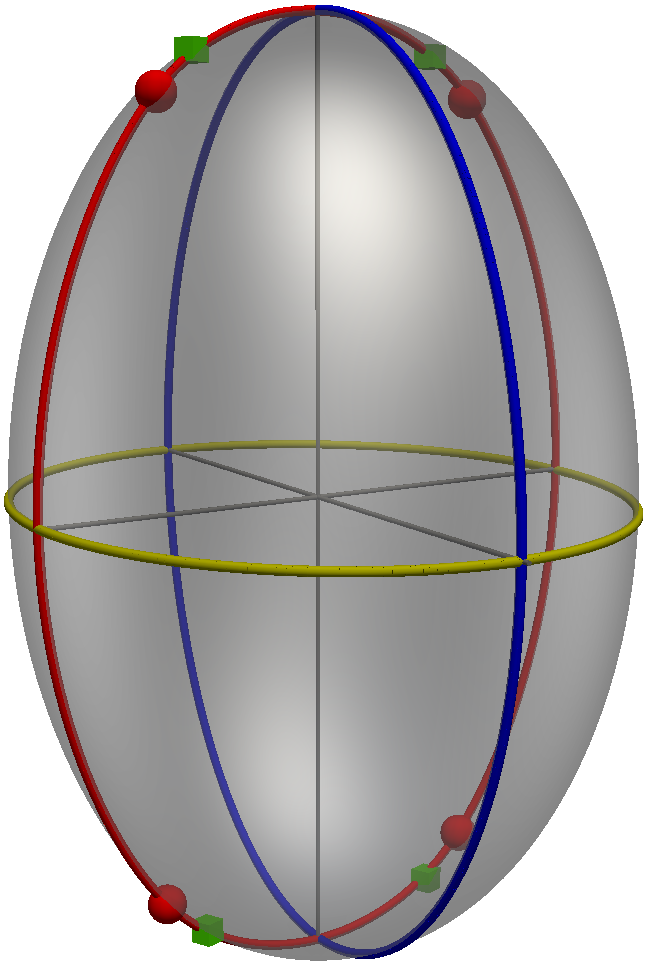}};
						\draw (-0.7,-0.21) node[white,anchor=south,rotate=0,xslant=0.0,yslant=0.12]   {\scriptsize $0.93$};
						\draw (0.04,0.5) node[white,anchor=south,rotate=90]   {\scriptsize $1.47$};
						\draw (0.5,-0.28) node[white,anchor=south,xslant=0.0,yslant=-0.35] {\scriptsize $1.07$};
					\end{tikzpicture}
				\end{minipage}
			\end{minipage}
	}
    \caption{Left: Deviation from a sphere geometry during time evolution. The sphere deviation $\sigma_{\mathcal{S}^2}$ is considered to be $\sigma_{S^2} = \int_\surf\left(\meanc-\meanc_{S^2}\right)^2\mu$, where $\meanc_{S^2}$ denotes the mean curvature of the unit sphere. Right: Final shape of the \frankOseenHelfrich\ and \landauDeGennesHelfrich\ model. The values for the axes (distances from the origin) indicate the symmetry (asymmetry) for the polar (nematic) case. Red spheres are the defect positions, green squares are the positions of maximum Gaussian curvature. In the polar case both coincide, therefore only the red spheres are shown. The values for the long and short axis correspond to the fitted symmetric ellipsoid (black line) in the polar case. The distortion at the defect positions is clearly visible. In the nematic case the values correspond to the asymmetric form of the final shape. The defects and points of maximal curvature are located on the same plane (red line).}
    \label{fig:results:schematic}
\end{figure}

The fully coupled system does not only annihilate all topologically unnecessary defects by smoothing the surface in the case of nematic order it also leads to asymmetric shapes and defect arrangements which differ from ground state configurations on rotationally symmetric ellipsoids. 

\subsection{Discussion}

\begin{figure}[!h]
    \centering
    \ifthenelse{\boolean{useSnapshots}}
    {
		\ifthenelse{\boolean{forArxiv}}
		{
			\includegraphics[width=0.675\textwidth]{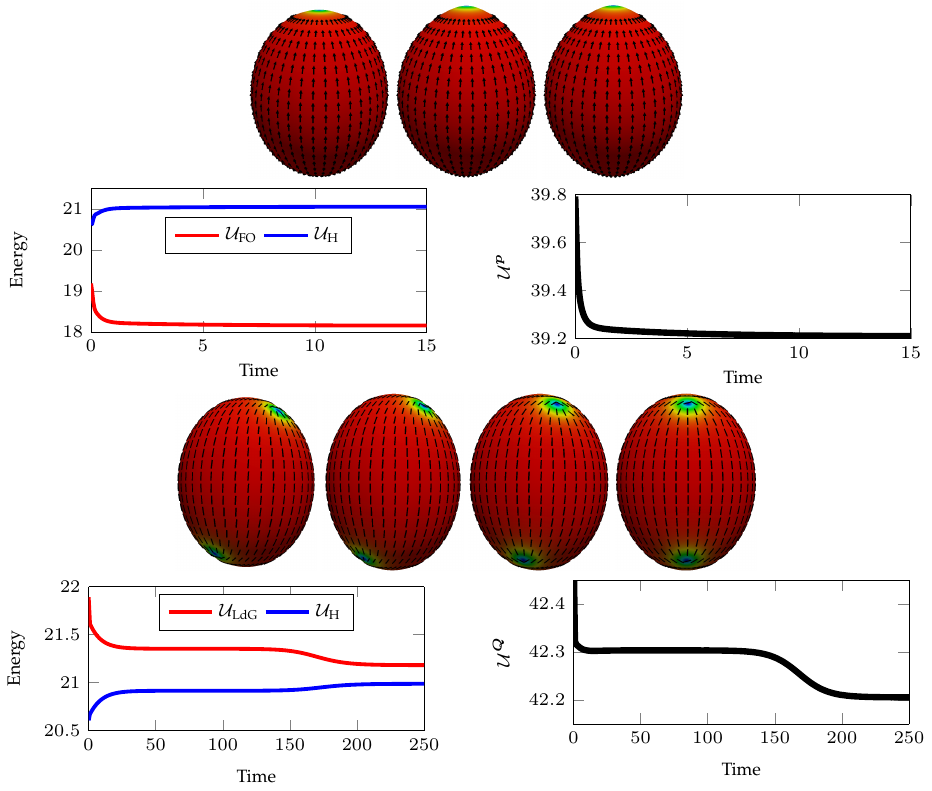}
		}
		{
			\includegraphics[width=0.9\textwidth]{snapshots/fig_8.png}
		}
    }
    {
		\ifthenelse{\boolean{forArxiv}}
		{
			\begin{minipage}{0.75\textwidth}
		}
		{
			\begin{minipage}{\textwidth}
		}
				\centering
				\begin{minipage}{0.6\textwidth}
					\centering
					\def\picwidth{0.24\textwidth}
					\includegraphics[width=\picwidth]{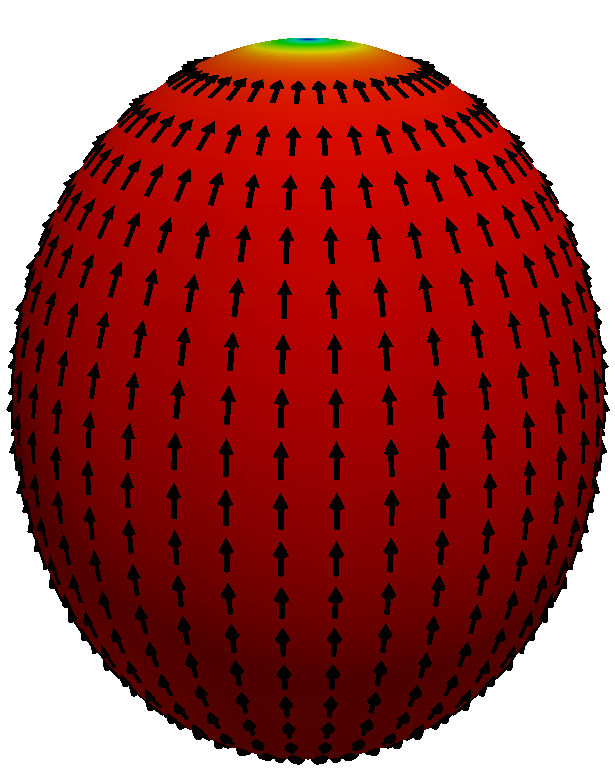}
					\includegraphics[width=\picwidth]{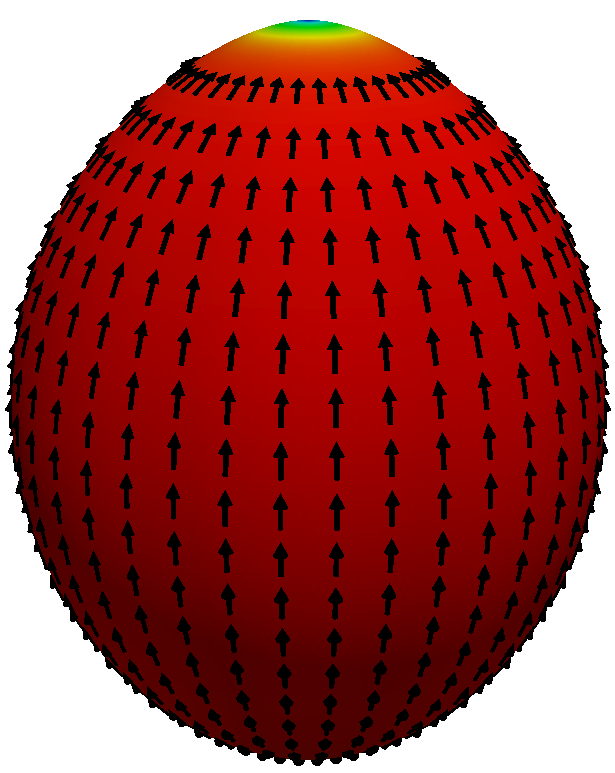}
					\includegraphics[width=\picwidth]{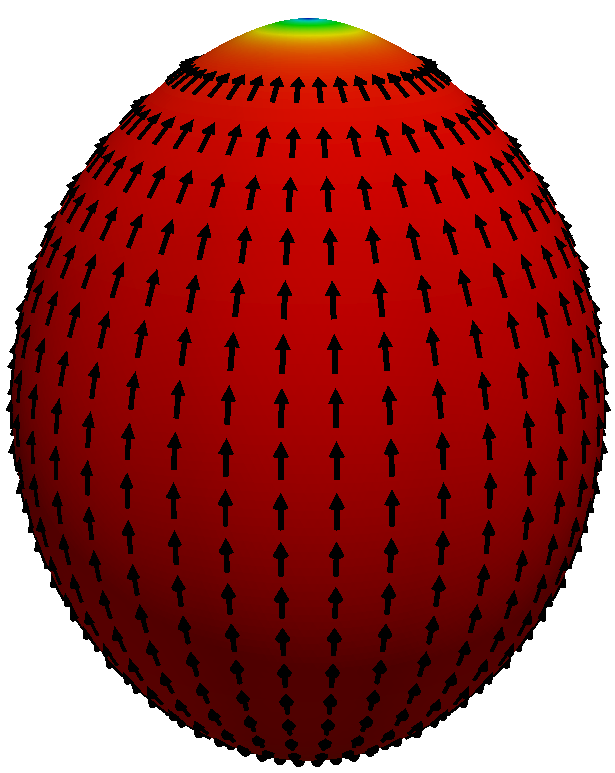}
				\end{minipage}
				\begin{minipage}{0.49\textwidth}
					\centering
					\inputTikzPic{pics/full/pField2/frankOseenAndHelfrichEnergy.tex}
				\end{minipage}
				\begin{minipage}{0.49\textwidth}
					\centering
					\inputTikzPic{pics/full/pField2/totalEnergy.tex}
				\end{minipage}
				\begin{minipage}{0.6\textwidth}
					\centering
					\def\picwidth{0.24\textwidth}
					\includegraphics[width=\picwidth]{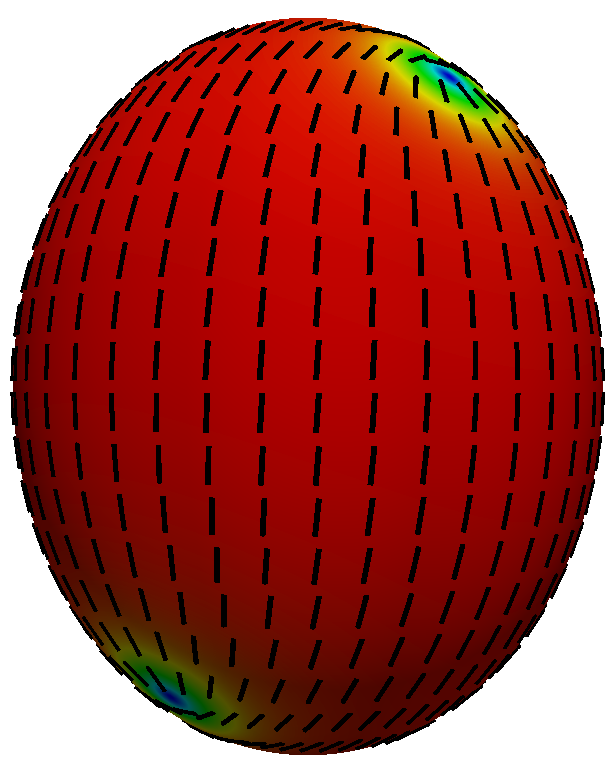}
					\includegraphics[width=\picwidth]{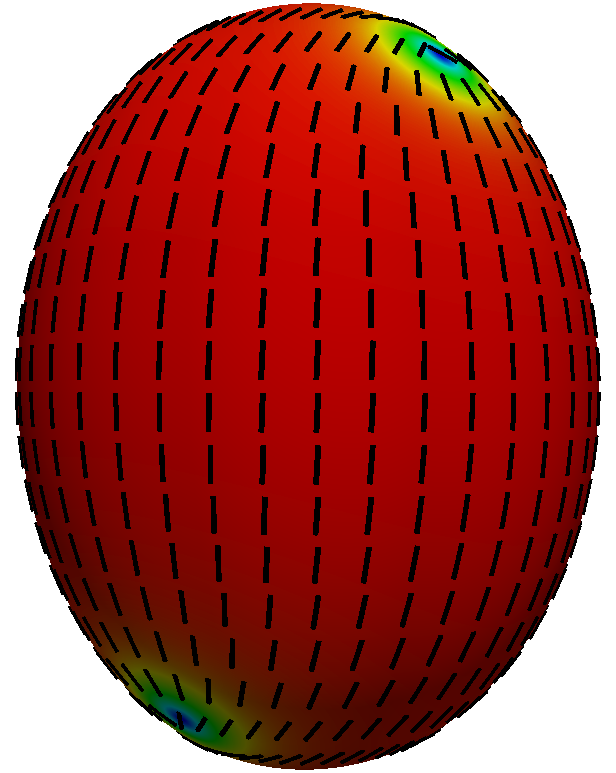}
					\includegraphics[width=\picwidth]{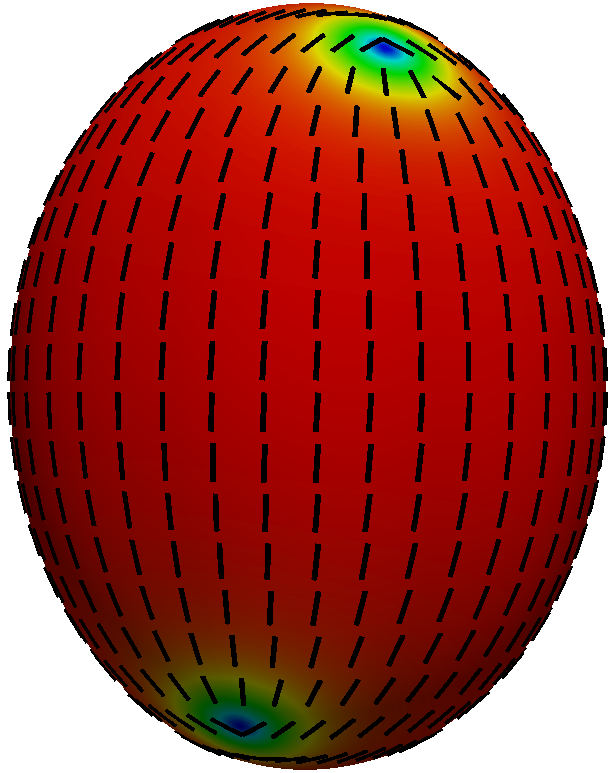}
					\includegraphics[width=\picwidth]{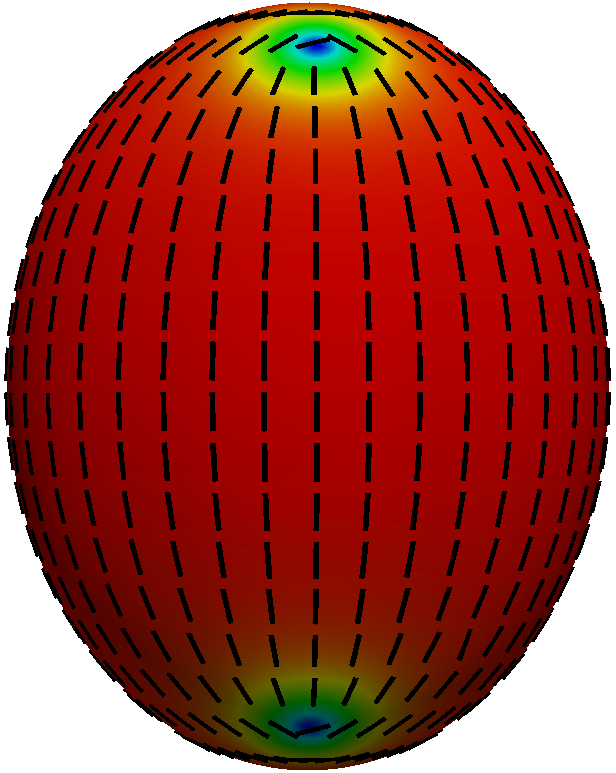}
				\end{minipage}
				\begin{minipage}{0.49\textwidth}
					\centering
					\inputTikzPic{pics/full/qTensor2/landauDeGennesAndHelfrichEnergy.tex}
				\end{minipage}
				\begin{minipage}{0.49\textwidth}
					\centering
					\inputTikzPic{pics/full/qTensor2/totalEnergy.tex}
				\end{minipage}
			\end{minipage}
	}
    \caption{Top: Evolution of the director field of the \frankOseenHelfrich\ model for $t=0, 1, 15$. Bottom: Evolution of the Q tensor field of the \landauDeGennesHelfrich\ model for $t=0, 100, 170, 250$. In both cases the same initial condition as in \autoref{fig:director_response:results} are used and energies against time (left) and dissipation rates against time (right) are plotted.}
    \label{fig:results:qtensor2}
\end{figure}

In order to better understand the observed behaviour in the previous section we run the full models with the equilibrium configurations on an ellipsoid as initial conditions.
\autoref{fig:results:qtensor2} shows the corresponding evolution together with their energy plots.
We observe the same behaviour. 
In the polar case the ellipsoid with the two $+1$ defects located at the poles of the long axis remains symmetric but deforms at the positions of the defects leading to high curvature values at these points. 
In the nematic case the shape becomes asymmetric and the four $+\nicefrac{1}{2}$ defects arrange in a planar configuration. 
To observe this behaviour requires long time simulations and an accurate numerical scheme.
The energy plots clearly show that the increase in \helfrich\ energy $\helfrichEnergy$ is overcompensated by the decrease in the \frankOseen\ energy $\frankOseenEnergy$ and the \landauDeGennes\ energy $\landaudegennesEnergy$.
Both shapes significantly differ from the proposed equilibrium shapes in \cite{Parketal_EPL_1992}, where $n-$atic order on deformable surfaces is considered, but with a much simpler and purely intrinsic model. 
More recently, it has been demonstrated that besides these intrinsic curvature terms also extrinsic curvature terms, \ie\ curvatures related to the geometry of the embedding space, are relevant \cite{Napolietal_PRE_2012,Napolietal_PRL_2012,Mbangaetal_PRL_2012,Nguyenetal_SM_2013,Segattietal_PRE_2014,Napolietal_PRE_2016,Segattietal_M3AS_2016,Koningetal_PRE_2016,Duanetal_PRE_2017,Nestleretal_JNS_2018,Nitschkeetal_PRSA_2018,Napolietal_PRE_2018}. 
It has been demonstrated that the intrinsic geometry tends to confine topological defects to regions of maximal Gaussian curvature, while extrinsic couplings tend to orient the director field along minimal curvature lines.
Extrinsic curvature has also been shown to expel defects from regions of maximum curvature above a critical coupling threshold \cite{Mbangaetal_PRL_2012}, to modify the defect arrangement from tetrahedral to planar \cite{Nguyenetal_SM_2013}, to change the phase diagram allowing for coexistence of nematic and isotropic phases in curved two-dimensional liquid crystals \cite{Nitschkeetal_PRSA_2018} and to modify the critical geometry deformation parameter, \cf\ $C$, in the considered nonic and bi-nonic surfaces which lead to energetically favourable solutions with more defects than topologically necessary \cite{Nestleretal_JNS_2018}. 
The observed arrangements thus add to these phenomena and can also be explained by an interplay of intrinsic and extrinsic curvature effects.

Mathematically the extrinsic contributions result from the proposed anchoring conditions on the boundary of the thin film \cite{Nestleretal_JNS_2018,Nitschkeetal_PRSA_2018}.
If these terms are neglected the \frankOseenHelfrich\ and the elastic part of the \landauDeGennesHelfrich\ energy have to be modified to
\begin{align} 
        \frankOseenEnergy = \frac{K}{2}\int_{\surf} \left\| \GradSurf\dirf \right\|^2 \mu, \qquad
        \elasticEnergy = \frac{L}{2} \int_{\surf} \left\| \GradSurf\qten \right\|^2 \mu \formPeriod
\end{align}
All other energies in eqs. \eqref{eq:energies1}-\eqref{eq:energies7} remain. 
This results in the intrinsic evolution laws for $ \dirf $
\begin{align}
    \label{eq:gradientflow:p:vnor:intrinsic}
    \kinConst\vnor &= -\alpha\left(\laplaceBeltrami\meanc + \meanc\left( \frac{\meanc^2}{2} - 2\gaussc \right)\right)
             +\frac{\pnorm}{4}\meanc\left( \left\| \dirf \right\|^2 - 1 \right)^2 \notag\\
          &\quad  -K\left( \DivSurf\left( \left( \levicivita{\shop}{\dirf} - \levicivita{\shop\dirf}{\dirf} \right)\dirf \right) 
                    + \left\langle \sigma^{\text{E}}_{\surf}, \proj_{\qspace}\shop \right\rangle\right) 
                    + \frac{\areaPen}{\areaZero^2}\left(\area-\areaZero\right)\meanc \\
    \label{eq:gradientflow:p:dirf:intrinsic}
    \kinConstP\dot{\dirf} &= K\laplaceBochner\dirf - \pnorm\left( \left\| \dirf \right\|^2 - 1 \right)\dirf
\end{align}
and $ \qten $
\begin{align}
    \label{eq:gradientflow:q:vnor:intrinsic}
    \kinConst\vnor &= -\alpha\left(\laplaceBeltrami\meanc + \meanc\left( \frac{\meanc^2}{2} - 2\gaussc \right)\right) + \meanc\left( a'\Tr\qten^2 + c\Tr\qten^4 + C_1 \right) \notag\\
        &\quad- 2L\DivSurf\left( \qten\levicivita{\shop}{\qten} - \levicivita{\qten\shop}{\qten} \right) - L\left\langle (\GradSurf\qten)^{T_{(123)}}:\GradSurf\qten,\proj_{\qspace}\shop \right\rangle 
         + \frac{\areaPen}{\areaZero^2}\left(\area-\areaZero\right)\meanc \\
    \label{eq:gradientflow:q:qten:intrinsic}
    \kinConstQ\dot{\qten} &= L\laplaceBochner\qten - 2\left( a'+c\Tr\qten^2 \right)\qten \formPeriod
\end{align}
These models can be solved with the same numerical approach, just setting various terms to zero.
\autoref{fig:results:intrinsic} shows the resulting equilibrium shapes for both models in comparison with their full models. 
The equilibrium shapes correspond qualitatively with those in \cite{Parketal_EPL_1992}. 
Defects are located at maximal Gaussian curvature points, the distortion of the surface in the vicinity of the defects is small and a tetrahedral defect arrangement is most favourable for the nematic case. 
The comparison between the equilibrium shapes for the intrinsic and the full model including also extrinsic curvature contributions further highlights the differences.

\begin{figure}[!h]
    \centering
    \ifthenelse{\boolean{useSnapshots}}
    {
		\ifthenelse{\boolean{forArxiv}}
		{
			\includegraphics[width=0.675\textwidth]{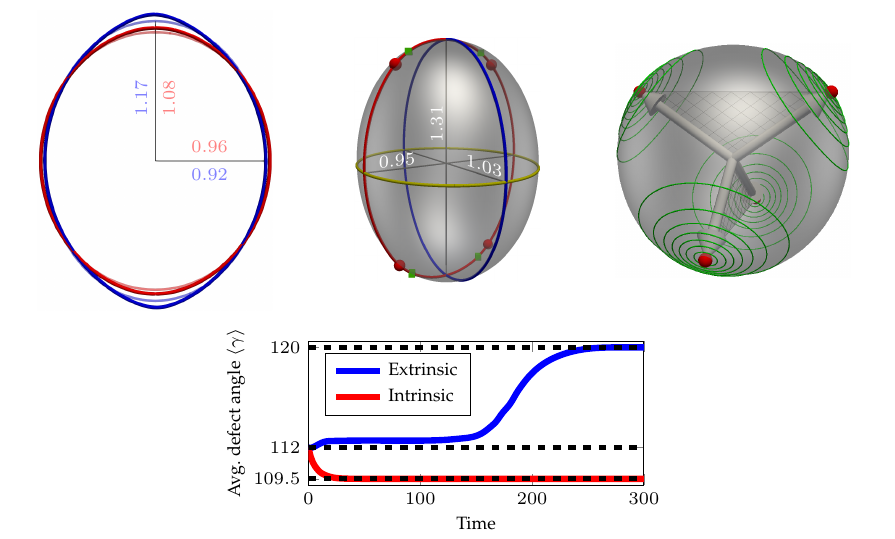}
		}
		{
			\includegraphics[width=0.9\textwidth]{snapshots/fig_9.png}
		}
    }
    {
		\ifthenelse{\boolean{forArxiv}}
		{
			\begin{minipage}{0.75\textwidth}
		}
		{
			\begin{minipage}{\textwidth}
		}
				\centering
				\begin{minipage}{\textwidth}
					\centering
					\hfill
					\begin{minipage}{0.24\textwidth}
						\centering
						\begin{tikzpicture}
							\node (pic) at (0,0) {\includegraphics[width=\textwidth]{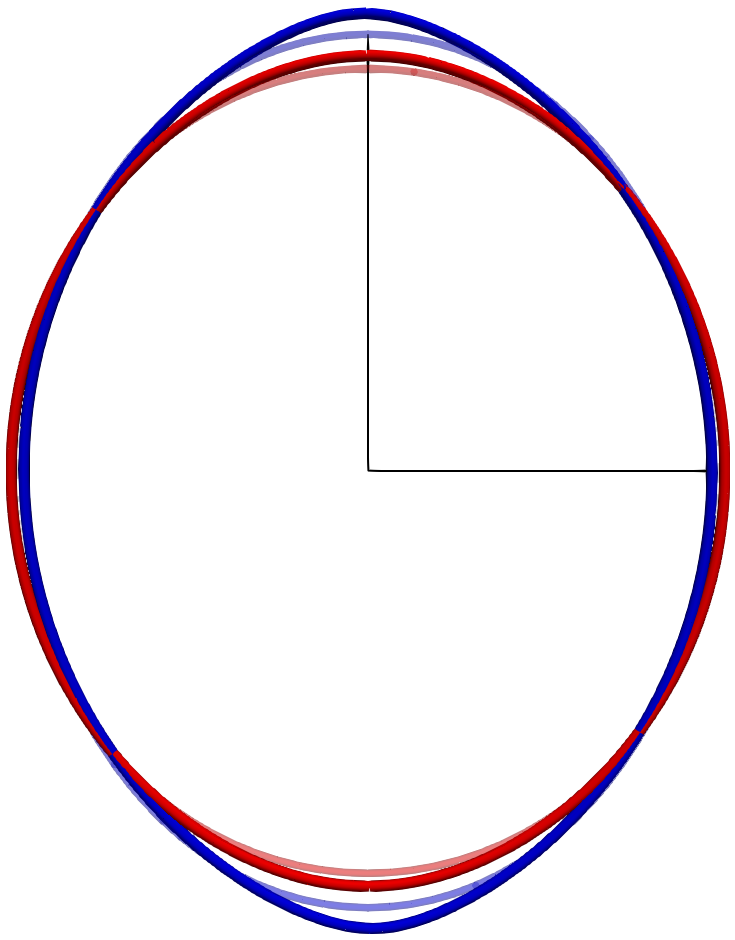}};
							\draw (0.75,0.0) node[red!50,anchor=south] {\scriptsize $0.96$};
							\draw (0.0,0.85) node[red!50,anchor=north,rotate=90] {\scriptsize $1.08$};
							\draw (0.75,0.0) node[blue!50,anchor=north] {\scriptsize $0.92$};
							\draw (0.0,0.85) node[blue!50,anchor=south,rotate=90] {\scriptsize $1.17$};
						\end{tikzpicture}
					\end{minipage}
					\hfill
					\begin{minipage}{0.19\textwidth}
						\centering
						\begin{tikzpicture}
							\node (pic) at (0,0) {\includegraphics[width=\textwidth]{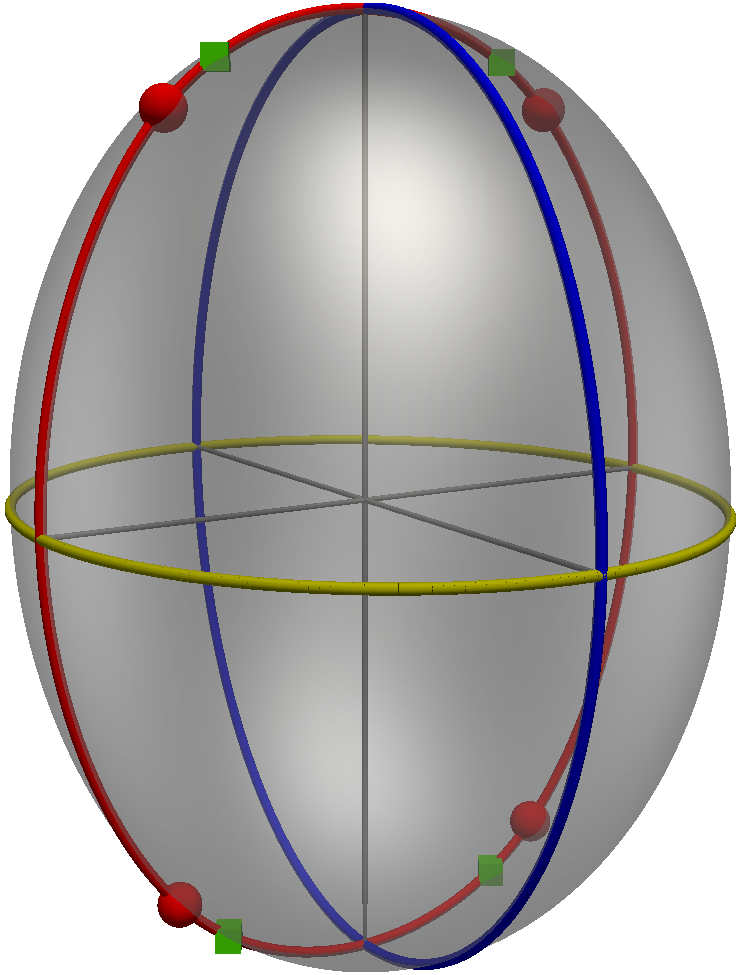}};
							\draw (-0.7,-0.2) node[white,anchor=south,rotate=0,xslant=0.0,yslant=0.12]   {\scriptsize $0.95$};
							\draw (0.04,0.5) node[white,anchor=south,rotate=90]   {\scriptsize $1.31$};
							\draw (0.5,-0.27) node[white,anchor=south,xslant=0.0,yslant=-0.35] {\scriptsize $1.03$};
						\end{tikzpicture}
					\end{minipage}
					\hfill
					\begin{minipage}{0.24\textwidth}
						\centering
						\includegraphics[width=\textwidth]{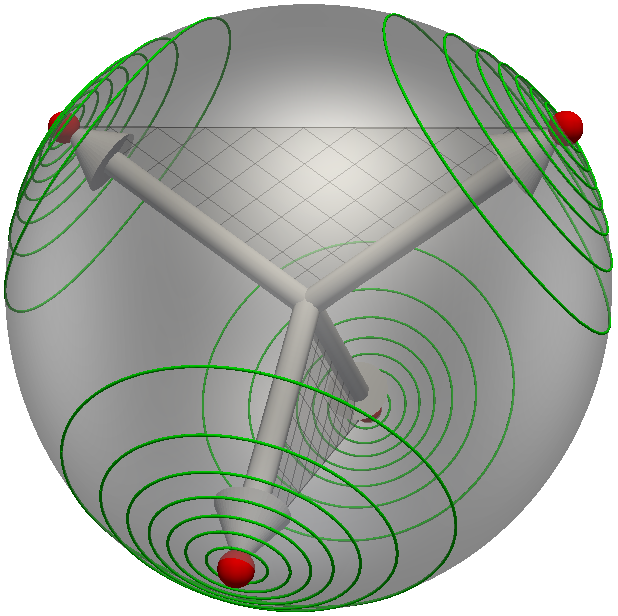}
					\end{minipage}
					\hfill
					\hfill
				\end{minipage}
				\begin{minipage}{0.49\textwidth}
					\centering
					\inputTikzPic{pics/full/extIntComp/avgDefectAngle.tex}
				\end{minipage}
			\end{minipage}
	}
    \caption{Top left: Sliced equilibrium shapes obtained with the extrinsic (blue) and the intrinsic (red) \frankOseenHelfrich\ model \eqref{eq:gradientflow:p:vnor},\eqref{eq:gradientflow:p:dirf} and \eqref{eq:gradientflow:p:vnor:intrinsic},\eqref{eq:gradientflow:p:dirf:intrinsic}. The fitted ellipsoids as well as their axes parameters are marked in light blue and in light red, respectively. In both cases the defect positions correspond with the maximum curvature points. However, the surface is stronger distorted in these points in the extrinsic case. On the other side in regions of minimal curvature (equator on short axis) the extrinsic case tends to be more flat. Top center: Equilibrium shape obtained with the extrinsic \landauDeGennesHelfrich\ model \eqref{eq:gradientflow:q:vnor}, \eqref{eq:gradientflow:q:qten}. Red spheres are the defect positions and green squares are the positions of maximum Gaussian curvature. The distances of the surface from the origin are marked to highlight the asymmetry. All defects and maximal curvature points are located on one plane (red curve). Top right: Equilibrium shape obtained with the intrinsic \landauDeGennesHelfrich\ model \eqref{eq:gradientflow:q:vnor:intrinsic}, \eqref{eq:gradientflow:q:qten:intrinsic}. The Gaussian curvature is shown as green contour lines. The defect positions correspond with the maximum Gaussian curvature points. They are arranged in a tetrahedral configuration. Initial conditions correspond to \autoref{fig:results:qtensor2}. Bottom: Averaged defect angle $\langle\gamma\rangle$ (see \autoref{fig:director_response:angle}) against time for the extrinsic and the intrinsic \landauDeGennesHelfrich\ model. $112^\circ$ is the equilibrium average for the considered ellipsoid at the initial configuration, $109.5^\circ$ and $120^\circ$ correspond to the tetrahedral and planar defect configuration, respectively.}
    \label{fig:results:intrinsic}
\end{figure}

Our results add to the importance of intrinsic and extrinsic contributions. 
The derived thermodynamically consistent models for liquid crystals on deformable surfaces and the considered numerical approach to solve these highly nonlinear equations show the tight interplay of these curvature contributions with defect arrangements and shape changes. 
The results go well beyond purely intrinsic energy minimization approaches. 
These phenomena do not only change the behaviour quantitatively, they qualitatively lead to new solutions. The symmetric tetrahedral arrangement, which is the basis for proposed self-assembly processes of colloidal particles with tetrahedral structure \cite{Nelson_NL_2002}, is not stable if the surface is deformable. The tendency towards asymmetric shapes will be further enhanced if more realistic liquid crystal models, beyond the one-constant approximation, are used \cite{Shinetal_PRL_2008}. This will not only influence applications in soft condensed matter based on liquid crystals on flexible curved substrates, but also has implications in biophysics, concerning morphological changes during development and the design of bio-inspired materials that are capable of self-organisation.

\ifthenelse{\boolean{forArxiv}}
{
}
{
	\begin{description}[leftmargin=0cm]
	\item[Data accessibility statement.] This work does not have any experimental data. Datasets and code are made available at https:/$\!\!$/gitlab.mn.tu-dresden.de/sourcecode/surface-lc \cite{data}.
	\item[Authors' contributions.] IN derived the models. SR implemented the models and performed finite element simulations. IN, SR and AV conceived and interpreted the numerical experiments. All authors contributed to a critical discussion of the derived model and numerical data and participated in writing the manuscript, which AV finalized. AV coordinated the project.
	\item[Competing interests statement.] We have no competing interests.
	\item[Funding.] AV acknowledges financial support from DFG through Vo899/19 and FOR3013. We further acknowledge computing resources provided by JSC under grant HDR06 and ZIH/TU Dresden.
	\end{description}
}

\appendix

\section*{Appendix}
We here provide all necessary details concerning the model and the numerical approach. While an analytical investigation of the model and the numerical approach is beyond the scope of this paper, we thereby restrict ourselves to experimental studies. When convenient, we use Ricci calculus \cite{Schouten_1954} with lowercase Latin indices.
Please note that for ease of terminology we exert parenthesis of symmetrizing,
\eg\ $\tensor{\rtenC}{^{(i}_j^{k)}_l} = \frac{1}{2}(\tensor{\rtenC}{^{i}_j^{k}_l}+ \tensor{\rtenC}{^{k}_j^{i}_l})$ for a mixed 4-tensor 
$\rten\in\tangentT{^1_1^1_1}\surf$.
To avoid confusion these parenthesis are only used pairwise and concerning always neighboring indices in the same height of indexing.
Another feature concerning the symmetry is that we write one index below the other directly for mixed symmetric 2-tensors,
\eg\ $\shopC^i_j := \tensor{\shopC}{^i_j} = \tensor{\shopC}{_j^i}$, where the latter identity holds, since $\shopC^{ij}=\shopC^{(ij)}$.

\section{Derivations}
\label{sec:app:one}

\subsection{Shape Variations}
\label{sec:shapevariations}

In the following we investigate some surface quantities and operators of the surface under perturbation.
For consistence with notation of differential geometry 
we write $\partial_i := \frac{\dup\phantom{y^i}}{\dup y^i}$ for the derivative along local coordinates.
This is motivated in the sense that we can use $\partial_i$ as well as partial derivative, \ie\ for a function 
$f=f(\xvar)$ and given $\xvar$ we find a function $f_{\xvar}$ such that $f(\xvar(t,y^1,y^2),y^1,y^2))=f_{\xvar}(t,y^1,y^2)$ and it holds
$\partial_i f= \partial_i f_{\xvar} = \frac{\partial f_{\xvar}}{\partial y^i}$. 
Moreover, we observe that the surface variation and $\partial_i$ commute, 
\ie\ $\dsurf\circ\partial_i = \partial_i\circ\dsurf$.
For vector fields $\dirf = \dirfC^i\partial_i\para\in\tangentT{^1}\surf$ and 2-tensor fields $\rten = \rtenC^{ij}\partial_i\para\otimes\partial_j\para\in\tangentT{^2}\surf$
the surface variation $\dsurf$ gives
\begin{align}
    \label{eq_codsurfdirf}
    \proj_{\surf}\dsurf\dirf &= \left( \dsurf\dirfC^i \right)\partial_i\para + \dirfC^i\proj_{\surf}\partial_i\left( \dxvar\normal \right) = \left\{ \dsurf\dirfC^i \right\} - \shop\dirf\dxvar \formComma \\
    \proj_{\surf}\dsurf\rten &= \left( \dsurf\rtenC^{ij} \right)\partial_i\para\otimes\partial_j\para + \rtenC^{ij} \proj_{\surf}\left[ \partial_i\left( \dxvar\normal \right)\otimes\partial_j\para + \partial_i\para\otimes\partial_j\left( \dxvar\normal \right)\right] \notag \\
    \label{eq_codsurf2ten}
    &= \left( \dsurf\rtenC^{ij} \right) \partial_i\para\otimes\partial_j\para - \left( \shop\rten + \rten\shop \right)\dxvar \formPeriod
\end{align}
Obviously, symmetric 2-tensor fields are closed under $\proj_{\surf}\dsurf:\tangentT{^2}\surf\rightarrow\tangentT{^2}\surf$.
To show the closeness \wrt\ Q tensor fields $ \qten\in\qspace $, we observe that $g_{ij}\dsurf\qtenC^{ij} = \dsurf\Tr\qten - \qtenC^{ij}\dsurf g_{ij} = 2\left\langle \shop,\qten \right\rangle\dxvar$, see eq. \eqref{eq_dsurfgij} below.
Therefore, with eq. \eqref{eq_codsurf2ten} and $\Tr[\partial_i\para\otimes\partial_j\para] = g_{ij}$ we obtain
$\Tr\proj_{\surf}\dsurf\qten = \left( \dsurf\qtenC^{ij}- \left[ \shop\qten + \qten\shop \right]^{ij} \right)g_{ij} = 2\left\langle \shop,\qten \right\rangle\dxvar - 2\left\langle \shop,\qten \right\rangle\dxvar = 0$ and it finally holds
\begin{align}
    \label{eq_codsurfqten}
    \forall\qten\in\qspace:\quad\quad\proj_{\qspace}\dsurf\qten &= \left\{ \dsurf\qtenC^{ij} \right\} - \left( \shop\qten + \qten\shop \right)\dxvar
                                                                    \in\qspace\formPeriod
\end{align}
For the components of the metric tensor we get
\begin{align}
    \label{eq_dsurfgij}
    \dsurf g_{ij}  &= \dsurf\left\langle \partial_i\para,\partial_j\para \right\rangle = \left\langle \partial_i\left( \dxvar\normal \right), \partial_j\para \right\rangle + \left\langle \partial_i\para, \partial_j\left( \dxvar\normal \right) \right\rangle = -2\left\langle \normal, \partial_i\partial_j\para \right\rangle\dxvar \notag \\
    &= -2\shopC_{ij} \dxvar
\end{align} 
and consequently for the components of the inverse metric tensor
\begin{align}
    \dsurf g^{ij} &= 2\shopC^{ij} \dxvar \formComma \label{eq_dsurfgijinv}
\end{align}
which follows from $\dsurf g_{ij} = \dsurf ( g_{ik}g_{jl} g^{kl}) = 2\dsurf g_{ij} +  g_{ik}g_{jl}\dsurf g^{kl}$ by using the product rule.
Using the identity $\mathcal{D}\det\g = \det\g\Tr\{\mathcal{D}g_{ij}\}$ for derivatives
$\mathcal{D}:\tangent^{0}\surf \rightarrow \tangent^{0}\surf$ acting on scalar-valued functions, we obtain
\begin{align*}
    \dsurf\sqrt{\det\g}	&= \frac{1}{2\sqrt{\det\g}}\dsurf\det\g = \frac{\sqrt{\det\g}}{2}\Tr\{\dsurf g_{ij}\} = - \sqrt{\det\g}\meanc\dxvar \formComma
\end{align*}
which finally results in
\begin{align}
    \label{eq:transport}
    \dsurf \int_{\surf} f \mu &= \int_{\surf} \dsurf f - f\meanc\dxvar\mu
\end{align}
for scalar-valued functions $ f $, similarly to the well-known transport theorem.
Next, we consider the Christoffel symbols which are needed for covariant differentiation. 
With eq. \eqref{eq_dsurfgij}, the surface variation of the first kind Christoffel symbols reads
\begin{align*}
    \dsurf\Gamma_{ijl} &=\frac{1}{2}\dsurf\left( \partial_i g_{jl} + \partial_j g_{il} - \partial_l g_{ij}\right) = \frac{1}{2}\left( \partial_i \dsurf g_{jl} + \partial_j \dsurf g_{il} - \partial_l \dsurf g_{ij}\right)\notag\\
    &= -\left( \dxvar_{|i}\shopC_{jl} + \dxvar_{|j}\shopC_{il} - \dxvar_{|l}\shopC_{ij}\right) - \dxvar\left( \shopC_{jl|i} + \shopC_{il|j} - \shopC_{ij|l} + 2 \Gamma_{ij}^{k}\shopC_{kl}\right)\notag\\
    &= \dxvar_{|l}\shopC_{ij} - \dxvar_{|i}\shopC_{jl} - \dxvar_{|j}\shopC_{il} -\dxvar\left( \shopC_{jl|i} + 2 \Gamma_{ij}^{k}\shopC_{kl} \right) \notag\\
    &= \dxvar_{|l}\shopC_{ij} - \dxvar_{|j}\shopC_{il} - \left( \dxvar\shopC_{jl} \right)_{|i} - 2\dxvar\Gamma_{ij}^{k}\shopC_{kl}  \formComma
\end{align*}
where we have used that the shape operator is curl-free, \ie\ $\shopC_{il|j} - \shopC_{ij|l} = 0$.
Furthermore, with eq. \eqref{eq_dsurfgijinv}, it holds for the second kind Christoffel symbols that 
\begin{align*}
    \dsurf\Gamma_{ij}^{k} &= \dsurf\left( g^{kl}\Gamma_{ijl} \right) = 2\shopC^{kl}\Gamma_{ijl}\dxvar + g^{kl}\dsurf\Gamma_{ijl} =\dxvar^{|k}\shopC_{ij}  - \dxvar_{|j}\shopC_{i}^k - \left(\shopC_j^k\dxvar\right)_{|i} \formPeriod
\end{align*}
With these evaluations we obtain for an independent vector field $\dirf\in\tangent^{1}\surf$, \ie\ $\proj_{\surf}\dsurf\dirf = 0$ or $\dsurf\dirfC^{i} = [ \shop\dirf ]^{i}\dxvar$, respectively, see eq. \eqref{eq_codsurfdirf}, that
\begin{align*}
    \dsurf\tensor{\dirfC}{^i_{|k}} &=\partial_k\dsurf\dirfC^{i} + \Gamma_{kl}^i\dsurf\dirfC^{l} + \dirfC^{l}\dsurf\Gamma_{kl}^i = \left( \shopC^i_l\dirfC^l\dxvar \right)_{|k} - \dirfC^{l}\left( \left(\shopC_l^i\dxvar\right)_{|k} + \dxvar_{|l}\shopC_{k}^i - \dxvar^{|i}\shopC_{kl}\right)\notag\\
    &= \shopC^i_l\tensor{\dirfC}{^l_{|k}}\dxvar + \dirfC^{l}\left( \dxvar^{|i}\shopC_{kl} - \dxvar_{|l}\shopC_{k}^i\right)
\end{align*}
for the mixed components of $\GradSurf\dirf$. 
In a fully contravariant sense this reads 
\begin{align*}
    \dsurf\dirfC^{i|j} &= g^{jk}\dsurf\tensor{\dirfC}{^i_{|k}} + \tensor{\dirfC}{^i_{|k}}\dsurf g^{jk} = \left( \dirfC^{l|j}\shopC^{i}_{l} + 2 \dirfC^{i|l}\shopC^{j}_{l} \right)\dxvar + \dirfC^{l}\left( \dxvar^{|i}\shopC^j_l - \dxvar_{|l}\shopC^{ij}\right) \formPeriod
\end{align*}
In the embedding space we explicitly obtain
\begin{align*}
    \proj_{\surf}\dsurf\GradSurf\dirf &= \left( \dsurf\dirfC^{i|j} \right)\partial_i\para\otimes\partial_j\para  + \dirfC^{i|j}\left( \partial_i\normal\otimes\partial_j\para + \partial_i\para\otimes\partial_j\normal\right)\dxvar
\end{align*} 
for $\GradSurf\dirf = \dirfC^{i|j}\partial_i\para\otimes\partial_j\para\in\tangentT{^2}\surf$, which finally gives
\begin{align}
    \label{eq_distgraddirf}
    \forall\dirf\in\tangent^{1}\surf \text{ with } \proj_{\surf}\dsurf\dirf = 0:\quad
        \proj_{\surf}\dsurf\GradSurf\dirf &= \left( \GradSurf\dirf \right)\shop\dxvar + \GradSurf\dxvar\otimes\shop\dirf - \left( \levicivita{\dirf}{\dxvar} \right)\shop
\end{align}
where $\partial_i\normal = -\shopC^k_i\partial_k\para$ was used.
Similarly, for an independent Q tensor field $\qten\in\qspace$, \ie\ $\proj_{\qspace}\dsurf\qten = 0$
or $\dsurf\qtenC^{ij} = \left[ \shop\qten + \qten\shop \right]^{ij}\dxvar$, respectively, see \eqref{eq_codsurfqten}, we obtain
\begin{align*}
    \dsurf\tensor{\qtenC}{^{ij}_{|k}} &=\partial_k\dsurf\qtenC^{ij} + 2\Gamma_{kl}^{(i}\dsurf\qtenC^{j)l} + 2\qtenC^{l(j}\dsurf\Gamma_{kl}^{i)} \notag\\
    &= 2\left( \shopC^{(i}_l\qtenC^{j)l}\dxvar \right)_{|k} - 2\qtenC^{l(j}\left( \left(\shopC_l^{i)}\dxvar\right)_{|k} + \dxvar_{|l}\shopC_{k}^{i)} - \dxvar^{|i)}\shopC_{kl}\right)\notag\\
    &= 2\shopC^{(i}_l\tensor{\qtenC}{^{j)l}_{|k}}\dxvar + 2\qtenC^{l(j}\left( \dxvar^{|i)}\shopC_{kl} - \dxvar_{|l}\shopC_{k}^{i)}\right)
\end{align*}
for the mixed components of $\GradSurf\qten$.
In a fully contravariant sense this reads
\begin{align*}
    \dsurf\qtenC^{ij|k} &= g^{kl}\dsurf\tensor{\qtenC}{^{ij}_{|l}} + \tensor{\qtenC}{^{ij}_{|l}}\dsurf g^{kl} = 2 \left( \shopC^{(i}_{l}\qtenC^{j)l|k} + \qtenC^{ij|l}\shopC^{k}_{l} \right)\dxvar + 2\qtenC^{l(j}\left( \dxvar^{|i)}\shopC_{l}^k - \dxvar_{|l}\shopC^{i)k}\right) \formPeriod
\end{align*}
In the embedding space we explicitly obtain
\begin{align*}
    \proj_{\surf}\dsurf\GradSurf\qten &= \left( \dsurf\qtenC^{ij|k} \right)\partial_i\para\otimes\partial_j\para\otimes\partial_k\para \\
    &\quad  + \qtenC^{ij|k}\left( \partial_i\normal\otimes\partial_j\para\otimes\partial_k \para + \partial_i\para\otimes\partial_j\normal\otimes\partial_k\para + \partial_i\para\otimes\partial_j\para\otimes\partial_k\normal\right)\dxvar
\end{align*}
for $\GradSurf\qten = \qtenC^{ij|k}\partial_i\para\otimes\partial_j\para\otimes\partial_k\para\in\tangentT{^3}\surf$, which finally gives 
\begin{align}
    \label{eq_distgradqten}
    \forall\qten\in\qspace \text{ with } \proj_{\qspace}\dsurf\qten = 0:\quad
        \left[\proj_{\surf}\dsurf\GradSurf\qten\right]^{ijk}
            &= \shopC^{k}_{l}\left( \qtenC^{ij|l}\dxvar + 2\qtenC^{l(i}\dxvar^{|j)} \right) - 2\shopC^{k(i}\qtenC^{j)l}\dxvar_{|l}
\end{align}
where again $ \partial_i\normal = -\shopC^k_i\partial_k\para $ was used.

To determine the behavior of extrinsic curvature quantities under surface variations we first consider the normal vector $\normal$.
The normal part of $\dsurf\normal$ vanishes and the tangential part is determined by
$\left\langle \dsurf\normal, \partial_i\para \right\rangle = -\left\langle \normal, \partial_i\dsurf\para \right\rangle = -\left\langle \normal, \dxvar\partial_i\normal + \normal\partial_i\dxvar \right\rangle = -\partial_i\dxvar$ and thus
\begin{align}
    \label{eq:dsurNormal}
    \dsurf\normal &= -\GradSurf\dxvar \formPeriod
\end{align}
For the covariant components of the shape operator we obtain
\begin{align*}
    \dsurf\shopC_{ij} &= -\dsurf\left\langle \partial_i\para, \partial_j\normal \right\rangle = -\left\langle \partial_i(\dxvar\normal), \partial_j\normal \right\rangle + \left\langle \partial_i\para, \partial_j\GradSurf\dxvar \right\rangle \notag \\
    &=-\left\langle \partial_i\normal, \partial_j\normal \right\rangle\dxvar + \left\langle \partial_i\para, (\partial_j\dxvar^{|k})\partial_k\para + \dxvar^{|k}\partial_j\partial_k\para \right\rangle \notag \\
    &=-\left[ \shop^2 \right]_{ij}\dxvar + g_{ik}\partial_j\dxvar^{|k} + \Gamma_{jki}\dxvar^{|k} = -\left[ \shop^2 \right]_{ij}\dxvar + \dxvar_{|i|j} = \left[ \GradSurf^2\dxvar - \shop^2\dxvar \right]_{ij}
\end{align*}
and for the contravariant components
\begin{align*}
    \dsurf\shopC^{ij} &= \left[ \GradSurf^2\dxvar + 3\shop^2\dxvar \right]^{ij}
\end{align*}
where $\dsurf\shopC^{ij} = \dsurf(g^{ik}g^{jl}\shopC_{kl}) $ and eq. \eqref{eq_dsurfgijinv} was used.
Furthermore, for the embedded tensor $\shop = \shopC^{ij}\partial_i\para\otimes\partial_j\para$ we get
\begin{align}
    \label{eq_dsurfshop}
    \proj_{\surf}\dsurf\shop &= \GradSurf^2\dxvar + 3\shop^2\dxvar + \shopC^{ij}\left( \partial_i\normal\otimes\partial_j\para + \partial_i\para\otimes\partial_j\normal \right)\dxvar = \GradSurf^2\dxvar + \shop^2\dxvar \formPeriod
\end{align}
Finally, consider the mean curvature $\meanc$, for which $\dsurf\meanc = \shopC_{ij}\dsurf g^{ij} + g^{ij}\dsurf\shopC_{ij}$ can be observed, 
we obtain
\begin{align}
    \label{eq:p:eq_distmeanc}
    \dsurf\meanc &= \laplaceBeltrami\dxvar + \left\| \shopC \right\|^2\dxvar = \laplaceBeltrami\dxvar + \left( \meanc^2 - 2\gaussc \right)\dxvar \formPeriod
\end{align}

\subsection{Functional Derivatives}
\label{sec:funcderivatives}
    
With the tools from above we are able to compute the functional derivatives of free energies depending on independent scalar field $\xvar\in\tangentT{^0}\surf$ and vector field $\dirf\in\tangentT{^1}\surf$, \ie\ $\proj_{\surf}\dsurf\dirf=0$ or Q tensor field $\qten\in\qspace$, \ie\ $\proj_{\qspace}\dsurf\qten=0$.
Note that we consider a surface without boundary, which drastically reduces the complexity of the derivations.
    
\subsubsection{Helfrich Energy}
With eqs. \eqref{eq:transport} and \eqref{eq:p:eq_distmeanc} the \helfrich\ energy $\helfrichEnergy$ yields, \cf\ \cite[pp.\ 280--282]{Willmore_book_1996},
\begin{align*}
    \dsurf\helfrichEnergy &=\alpha\int_{\surf}\meanc\dsurf\meanc - \frac{\meanc^3}{2}\dxvar\mu = \alpha\int_{\surf} \meanc\laplaceBeltrami\dxvar + \meanc\left( \frac{\meanc^2}{2} - 2\gaussc \right)\dxvar\mu \formPeriod
\end{align*}
Thus, by using integration by parts we get 
\begin{align}\label{eq:dhelfrichdxi}
    \frac{\delta\helfrichEnergy}{\dxvar} &= \alpha\left(\laplaceBeltrami\meanc + \meanc\left( \frac{\meanc^2}{2} - 2\gaussc \right)\right)\formPeriod
\end{align}
    
\subsubsection{\frankOseen\ Energy}
The \frankOseen\ energy $\frankOseenEnergy$ is split into an intrinsic and an extrinsic part, \ie
\begin{align*}
    \PotEP_{\text{I}} =  \PotEP_{\text{I}}[\xvar, \dirf] = \frac{K}{2}\int_{\surf} \left\| \GradSurf\dirf \right\|^2\mu \formComma \qquad
    \PotEP_{\text{E}} = \PotEP_{\text{E}}[\xvar, \dirf] =\frac{K}{2}\int_{\surf} \left\| \shop\dirf \right\|^2\mu \formComma
\end{align*}
respectively. The intrinsic \frankOseen\ energy measures the distortion of the vector field $\dirf\in\tangent^{1}\surf$. With eqs. \eqref{eq:transport}, \eqref{eq_distgraddirf} and the fact that $\GradSurf\dirf\in\tangentT{^2}\surf$ is tangential by construction of the connection $\levicivitaPlain$, we get
\begin{align*}
    \dsurf\PotEP_{\text{I}} &= K\int_{\surf} \left\langle \GradSurf\dirf, \proj_{\surf}\dsurf\GradSurf\dirf \right\rangle - \frac{\meanc}{2}\left\| \GradSurf\dirf \right\|^2\dxvar\mu \\
    &= K\int_{\surf}\left\langle (\GradSurf\dirf)^{T}\GradSurf\dirf,\shop \right\rangle\dxvar + \left\langle \levicivita{\shop\dirf}{\dirf},\GradSurf\dxvar \right\rangle - \levicivita{\shop}{\dirf}\levicivita{\dirf}{\dxvar} - \frac{\meanc}{2}\left\| \GradSurf\dirf \right\|^2\dxvar\mu \\
    &= K\int_{\surf}\left\langle \levicivita{\shop\dirf}{\dirf} - \left(\levicivita{\shop}{\dirf}\right)\dirf, \GradSurf\dxvar \right\rangle + \left\langle (\GradSurf\dirf)^{T}\GradSurf\dirf, \proj_{\qspace}\shop \right\rangle\dxvar\mu
\end{align*}
and therefore
\begin{align*}
    \frac{\delta\PotEP_{\text{I}}}{\dxvar} = K\left( \DivSurf \left( \left( \levicivita{\shop}{\dirf} \right)\dirf - \levicivita{\shop\dirf}{\dirf} \right) + \left\langle (\GradSurf\dirf)^{T}\GradSurf\dirf, \proj_{\qspace}\shop \right\rangle \right), \qquad
    \frac{\delta\PotEP_{\text{I}}}{\delta\dirf} = -K\laplaceBochner\dirf\formPeriod
\end{align*}
Here, we have used $\levicivita{\shop}{\dirf} = \left\langle \GradSurf\dirf, \shop \right\rangle$, the orthogonal Q tensor projection $\proj_{\qspace}\shop = \shop - \frac{\meanc}{2}\g$ of the shape operator and the Bochner-Laplacian $\laplaceBochner\dirf=\GradSurf^*\GradSurf\dirf = \DivSurf\GradSurf\dirf$ of $\dirf$ \wrt\ the connection $\levicivitaPlain$. The extrinsic part is a leftover of the distortion energy in a thin film with homogeneous Neumann conditions at the upper and lower boundaries, see \cite{Nestleretal_JNS_2018}. 
We obtain by using eqs. \eqref{eq:transport} and \eqref{eq_dsurfshop}
\begin{align*}
    \dsurf\PotEP_{\text{E}} &= K\int_{\surf} \left\langle \shop\dirf, \shop\proj_{\surf}\dsurf\dirf + \left( \proj_{\surf}\dsurf\shop \right)\dirf \right\rangle - \frac{\meanc}{2}\left\| \shop\dirf \right\|^2\dxvar\mu \\
    &= K\int_{\surf} \left\langle \shop\dirf, \left( \GradSurf^2\dxvar \right)\dirf \right\rangle + \left\langle \shop\dirf,\shop^2\dirf \right\rangle\dxvar - \frac{\meanc}{2}\left\| \shop\dirf \right\|^2\dxvar\mu \\
    &= K\int_{\surf} -\left\langle \left( \DivSurf\dirf \right)\shop\dirf + \levicivita{\dirf}{\left(\shop\dirf\right)},\GradSurf\dxvar \right\rangle + \left\langle \shop^2\proj_{\qspace}\shop,\dirf\otimes\dirf \right\rangle\dxvar\mu
\end{align*}
and thus
\begin{align*}
    \frac{\delta\PotEP_{\text{E}}}{\dxvar} = K\left( \DivSurf \left( \left( \DivSurf\dirf \right)\shop\dirf + \levicivita{\dirf}{\left(\shop\dirf\right)} \right) + \left\langle \shop^2\proj_{\qspace}\shop,\dirf\otimes\dirf \right\rangle \right), \qquad
    \frac{\delta\PotEP_{\text{E}}}{\delta\dirf} = K\shop^2\dirf \formPeriod
\end{align*}
The functional derivatives of the intrinsic and the extrinsic \frankOseen\ energies fit well together.
With $\mathcal{L}_{\dirf}(\shop\dirf) =  \levicivita{\dirf}{\left(\shop\dirf\right)} -  \levicivita{\shop\dirf}{\dirf}$, the fact that the orthogonal projection in the product $\shop^2\proj_{\qspace}\shop$ is permutable, \ie\ $\shop^2\proj_{\qspace}\shop = \shop\left( \proj_{\qspace}\shop \right)\shop$, and Proposition \autoref{prop_divterm} we obtain for the sum of these energies
\begin{align*}
    \frac{\delta\frankOseenEnergy}{\dxvar} = \frac{\delta(\PotEP_{\text{I}}+\PotEP_{\text{E}})}{\dxvar} &= K\Big( \DivSurf\left( \mathcal{L}_{\dirf}(\shop\dirf) + (\DivSurf\dirf)\shop\dirf + (\levicivita{\shop}{\dirf})\dirf\right)\notag \\
    &\quad\quad\quad+ \left\langle (\GradSurf\dirf)^{T}\GradSurf\dirf, \proj_{\qspace}\shop \right\rangle + \left\langle \dirf\otimes\dirf, \shop^2\proj_{\qspace}\shop \right\rangle\Big) \\
    &= K\left( \DivSurf\left( \left( \levicivita{\dirf}{\meanc} + 2\levicivita{\shop}{\dirf} \right)\dirf \right) + \left\langle \sigma^{\text{E}}_{\surf}, \proj_{\qspace}\shop \right\rangle\right)
\end{align*}
with $\sigma^{\text{E}}_{\surf} = (\GradSurf\dirf)^{T}\GradSurf\dirf + \shop\dirf\otimes\shop\dirf\in\tangentT{^2}\surf$ the extrinsic surface Ericksen stress tensor, see \cite{Nitschkeetal_PRF_2019}.
    
\subsubsection{Normalization Penalty Energy}
The normalization penalty energy $\normalizationEnergy$
forces $\dirf$ to be approximately normalized, with $\lim_{\pnorm\rightarrow\infty}\left\| \dirf \right\| = 1$ almost everywhere.
With eq. \eqref{eq:transport} we get
\begin{align*}
    \dsurf\normalizationEnergy  &= \pnorm\int_{\surf} \left( \left\| \dirf \right\|^2 - 1 \right)\left\langle \dirf, \proj_{\surf}\dsurf\dirf   \right\rangle - \frac{\meanc}{4}\left( \left\| \dirf \right\|^2 - 1 \right)^2\dxvar \mu = -\frac{\pnorm}{4} \int_{\surf} \meanc\left( \left\| \dirf \right\|^2 - 1 \right)^2\dxvar \mu
\end{align*}
and therefore
\begin{align*}
    \frac{\delta\normalizationEnergy}{\dxvar} = -\frac{\pnorm}{4}\meanc\left( \left\| \dirf \right\|^2 - 1 \right)^2, \qquad
    \frac{\delta\normalizationEnergy}{\delta\dirf} = \pnorm\left( \left\| \dirf \right\|^2 - 1 \right)\dirf \formPeriod
\end{align*}
   
\subsubsection{Elastic Part of the \landauDeGennes\ Energy}
Consider the elastic part of the \landauDeGennes\ energy, \ie\ eq. \eqref{eq_elenrgy}, in the equivalent representation
$\elasticEnergy = \frac{L}{2} \int_{\surf} \left\| \GradSurf\qten \right\|^2 + \left\langle \shop^2, \rten \right\rangle \mu$, 
where $\rten = (\|\qten\|^2 + \nicefrac{\orderp^2}{2})\Id_{\surf} + 2\orderp\qten$.
Similarly to the \frankOseen\ energy above, we call the first summand the intrinsic part and the second one the extrinsic part. 
Considering the intrinsic part, by eq. \eqref{eq_distgradqten} and symmetry arguments, variation of the integrand yields
\begin{align*}
    \frac{1}{2}\dsurf\left\| \GradSurf\qten \right\|^2 &= \left\langle \GradSurf\qten,\proj_{\surf}\dsurf\GradSurf\qten \right\rangle 
            = \qtenC_{ij|k} \left(\shopC^{k}_{l}\left( \qtenC^{ij|l}\dxvar + 2\qtenC^{l(i}\dxvar^{|j)} \right) - 2\shopC^{k(i}\qtenC^{j)l}\dxvar_{|l}\right)\\
            &= \qtenC_{ij|k} \left(\shopC^{k}_{l}\left( \qtenC^{ij|l}\dxvar + 2\qtenC^{li}\dxvar^{|j} \right) - 2\shopC^{ki}\qtenC^{jl}\dxvar_{|l}\right) \\
    &= \left\langle (\GradSurf\qten)^{T_{(123)}}:\GradSurf\qten,\shop \right\rangle\dxvar + 2\left\langle \levicivita{\qten\shop}{\qten} - \qten\levicivita{\shop}{\qten} , \GradSurf\dxvar \right\rangle \formPeriod
\end{align*}
Hence, with eq. \eqref{eq:transport} we obtain
\begin{align*}
    \frac{1}{2}\dsurf\int_{\surf}\left\| \GradSurf\qten \right\|^2\mu 
        &= \frac{1}{2}\int_{\surf} \dsurf\left\| \GradSurf\qten \right\|^2 - \left\langle (\GradSurf\qten)^{T_{(123)}}:\GradSurf\qten, \Id_{\surf}\right\rangle\meanc\dxvar\mu\\
    &= \int_{\surf}\left\langle (\GradSurf\qten)^{T_{(123)}}:\GradSurf\qten,\proj_{\qspace}\shop \right\rangle\dxvar + 2\left\langle \levicivita{\qten\shop}{\qten} - \qten\levicivita{\shop}{\qten} , \GradSurf\dxvar \right\rangle \\
    &=\int_{\surf} \left( \left\langle (\GradSurf\qten)^{T_{(123)}}:\GradSurf\qten,\proj_{\qspace}\shop \right\rangle + 2\DivSurf\left[ \qten\levicivita{\shop}{\qten} - \levicivita{\qten\shop}{\qten} \right]\right)\dxvar\mu\formPeriod
\end{align*}
For the extrinsic part, we observe that the 2-tensor $\rten\in\tangentT{^2}\surf$ is also a surface independent tensor quantity, \ie\ $\proj_{\surf}\dsurf\rten = 0$,
as $\proj_{\surf}\dsurf\qten = 0$ and $\proj_{\surf}\dsurf\Id_{\surf} = \proj_{\surf}\dsurf( g^{ij}\partial_i\para\otimes\partial_j\para) = 0$. Therefore, by using the product rule, the symmetric behaviors and eq. \eqref{eq_dsurfshop}, we obtain $\frac{1}{2}\dsurf\langle \shop^2,\rten \rangle = \langle \shop\dsurf\shop, \rten \rangle = \langle \rten\shop, \dsurf\shop \rangle
     = \langle \rten\shop,  \GradSurf^2\dxvar + \shop^2\dxvar \rangle = \langle \rten\shop,  \GradSurf^2\dxvar \rangle + ( \meanc^2-\gaussc )\langle \rten , \shop \rangle \dxvar - \meanc\gaussc\Tr\rten\dxvar$, 
where we have used $\shop^3 = (\meanc^2-\gaussc)\shop - \meanc\gaussc\Id_{\surf}$.
Furthermore, with eq. \eqref{eq:transport} and $\shop^2 = \meanc\shop - \gaussc\Id_{\surf}$, this results in
\begin{align}
    \frac{1}{2}\dsurf\int_{\surf} \left\langle \shop^2,\rten \right\rangle \mu &= \frac{1}{2}\int_{\surf} \dsurf\left\langle \shop^2,\rten \right\rangle - \meanc^2\left\langle \rten, \shop \right\rangle\dxvar + \meanc\gaussc\Tr\rten\dxvar\mu \notag\\
    &= \int_{\surf} \left( \DivSurf\Div_{\surf,2}\left( \rten\shop \right) + \left( \frac{\meanc^2}{2} - \gaussc \right)\left\langle \shop, \rten \right\rangle - \frac{\meanc\gaussc}{2}\Tr\rten\right)\dxvar\mu \notag\\
    &=  \int_{\surf} \left( \DivSurf\left( \levicivita{\shop}{\rten} + \rten\GradSurf\meanc \right) + \left( \frac{\meanc^2}{2} - \gaussc \right)\left\langle \shop, \rten \right\rangle - \frac{\meanc\gaussc}{2}\Tr\rten\right)\dxvar\mu \formPeriod \label{eq_tmp000}
\end{align}
Next, we reinsert $\rten = \left( \left\| \qten \right\|^2 + \frac{\orderp^2}{2} \right)\Id_{\surf} + 2\orderp\qten$ term by term, \ie
\begin{align*}
    \left[ \levicivita{\shop}{\rten} \right]_i &= \shopC^{kl}\left[ \left(\qtenC^{mn}\qtenC_{mn} + \frac{\orderp^2}{2}\right)g_{ik} + 2\orderp\qtenC_{ik}\right]_{|l} = 2 \shopC^{kl}\left( \qtenC^{mn}\qtenC_{mn|l}g_{ik} + \orderp\qtenC_{ik|l} \right)\notag\\
    &= 2 \left[ \qten:(\GradSurf\qten)\shop + \orderp\levicivita{\shop}{\qten} \right]_{i} \\
    \rten\GradSurf\meanc &= \left(  \Tr\qten^2 + \frac{\orderp^2}{2} \right)\GradSurf\meanc + 2\orderp\qten\GradSurf\meanc \\
    \left\langle \shop, \rten \right\rangle &= \left(  \Tr\qten^2 + \frac{\orderp^2}{2} \right)\meanc + 2\orderp\left\langle \shop, \qten \right\rangle\\
    \Tr\rten &= 2 \Tr\qten^2 + \orderp^2 \formPeriod
\end{align*}
Therefore, eq. \eqref{eq_tmp000} results in
\begin{align*}
  \frac{1}{2}\dsurf\int_{\surf} \left\langle \shop^2,\rten \right\rangle \mu
         &= \int_{\surf} \bigg( 2\DivSurf\left( \qten:(\GradSurf\qten)\shop + \orderp\levicivita{\shop}{\qten} + \frac{1}{2}\Tr\qten^2\GradSurf\meanc + \orderp\qten\GradSurf\meanc \right)\\ 
         &\quad\quad  +\frac{\orderp^2}{2}\laplaceBeltrami\meanc + \orderp\left\| \shop \right\|^2\left\langle \shop, \qten \right\rangle + \meanc\left( \frac{\meanc^2}{2} - 2\gaussc \right)\left( \Tr\qten^2 + \frac{\orderp^2}{2} \right)\bigg)\dxvar\mu\formPeriod
\end{align*}
Finally, we sum up the results for the intrinsic and extrinsic part and get
\begin{align*}
    \frac{\delta\elasticEnergy}{\dxvar} &= L\bigg[2\DivSurf\left( \qten\levicivita{\shop}{\qten} - \levicivita{\qten\shop}{\qten} + \qten:(\GradSurf\qten)\shop + \orderp\levicivita{\shop}{\qten} + \frac{1}{2}\Tr\qten^2\GradSurf\meanc + \orderp\qten\GradSurf\meanc \right)\\
    &\quad\quad + \orderp\left\| \shop \right\|^2\left\langle \shop, \qten \right\rangle + \meanc\left( \frac{\meanc^2}{2} - 2\gaussc \right)\Tr\qten^2 +  \left\langle (\GradSurf\qten)^{T_{(123)}}:\GradSurf\qten,\proj_{\qspace}\shop \right\rangle\\
    &\quad\quad + \frac{\orderp^2}{2}\left( \laplaceBeltrami\meanc + \meanc\left( \frac{\meanc^2}{2} - 2\gaussc \right) \right)\bigg]\\
    \frac{\delta\elasticEnergy}{\delta\qten} &= L\left[ -\laplaceBochner\qten + \left\| \shop \right\|^2\qten + \orderp\meanc\proj_{\qspace}\shop \right]\formPeriod
\end{align*}

\subsubsection{Bulk Energy Part of the \landauDeGennes\ Energy}
For the bulk energy part of the \landauDeGennes\ energy $\bulkEnergy$, by using $\dsurf\Tr\qten^2 = 2\left\langle \qten , \dsurf\qten \right\rangle = 0$ and 
$\dsurf\Tr\qten^4 = \Tr\qten^2\dsurf\Tr\qten^2 = 0$ it can be easily seen that the energy density is independent \wrt\ perturbation. 
Hence, eq. \eqref{eq:transport} yields
\begin{align*}
    \dsurf\bulkEnergy &= -\int_{\surf} \meanc\left(a'\Tr\qten^2 + c\Tr\qten^4 + C_1\right)\dxvar\mu
\end{align*} 
and therefore
\begin{align*}
    \frac{\delta\bulkEnergy}{\dxvar} = -\meanc\left(a'\Tr\qten^2 + c\Tr\qten^4 + C_1\right), \qquad
    \frac{\delta\bulkEnergy}{\delta\qten} = 2\left( a' + c\Tr\qten^2 \right)\qten \formPeriod
\end{align*}
      
\subsubsection{Surface Area Penalization Energy}
Considering eq. \eqref{eq:transport} and the surface area penalization energy $\areaEnergy$ with $\area:=\int_{\surf}\mu$. It follows that
\begin{align*}
    \dsurf\areaEnergy &= \frac{\areaPen}{\areaZero^2}\left(\area-\areaZero\right)\dsurf\area = -\frac{\areaPen}{\areaZero^2}\left(\area-\areaZero\right)\int_{\surf}\meanc\dxvar\mu \formPeriod
\end{align*}
Thus, the functional derivative of the penalization energy is given by
\begin{align*}
    \frac{\delta\areaEnergy}{\dxvar} &= -\frac{\areaPen}{\areaZero^2}\left(\area-\areaZero\right)\meanc\formPeriod
\end{align*}

\section{Operators}
\label{sec:opkunde}

\begin{proposition}\label{prop_divterm}
    It holds for $ \dirf\in\tangent^{1}\surf $
    \begin{align*}
        \DivSurf\left( \mathcal{L}_{\dirf}(\shop\dirf) + (\DivSurf\dirf)\shop\dirf + (\levicivita{\shop}{\dirf})\dirf\right) &= \DivSurf\left( \left( \Tr\Lie_{\dirf}\shop \right)\dirf \right) \formComma
    \end{align*}
    where $\Lie_{\dirf}\shop = \levicivita{\dirf}{\shop} + \left( \GradSurf\dirf \right)^{T}\shop + \shop\GradSurf\dirf$
    is the Lie derivative of the fully covariant shape operator $\shop\in\tangentT{_2}\surf$ in direction of 
    $\dirf\in\tangent^{1}\surf$, \ie\ $\Tr\Lie_{\dirf}\shop = \levicivita{\dirf}{\meanc} + 2\levicivita{\shop}{\dirf}$.
\end{proposition}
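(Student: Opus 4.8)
The plan is to reduce the whole identity to one scalar divergence identity for a Lie bracket and then match terms. The structural observation I would make first is that, since $[\dirf,\dirf]=0$, the Lie derivative of the vector field $\shop\dirf$ along $\dirf$ is simply the bracket $\mathcal{L}_{\dirf}(\shop\dirf)=[\dirf,\shop\dirf]$. For any tangential vector field $Y$ one has the standard divergence--bracket identity
\begin{align*}
    \DivSurf[\dirf,Y] &= \levicivita{\dirf}{(\DivSurf Y)} - \levicivita{Y}{(\DivSurf\dirf)} \formPeriod
\end{align*}
I would record this first; the only subtlety is that commuting the two covariant derivatives produces Ricci terms, but on a surface the Ricci contraction is $\gaussc\g$, so the resulting $\gaussc\langle\dirf,Y\rangle$ contributions cancel against each other and the identity holds verbatim.

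The step carrying the actual geometric content, and the one I expect to be the main obstacle, is the evaluation
\begin{align*}
    \DivSurf(\shop\dirf) &= \levicivita{\dirf}{\meanc} + \levicivita{\shop}{\dirf} \formPeriod
\end{align*}
To get this I would expand $\DivSurf(\shop\dirf) = \shopC^i_{k|i}\dirfC^{k} + \shopC^i_k\dirfC^k_{\phantom{k}|i}$ and invoke the contracted Codazzi relation, i.e.\ the curl-freeness $\shopC_{il|j}=\shopC_{ij|l}$ already used in the paper, which gives $\shopC^i_{k|i}=(\DivSurf\shop)_k=\meanc_{|k}$, hence the first term $\levicivita{\dirf}{\meanc}$; the second term is by definition $\levicivita{\shop}{\dirf}=\left\langle\GradSurf\dirf,\shop\right\rangle$. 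This is the only place where geometry, as opposed to bookkeeping, enters.

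With these two ingredients the remainder is a Leibniz-rule computation. Applying the bracket identity with $Y=\shop\dirf$ yields $\DivSurf\mathcal{L}_{\dirf}(\shop\dirf)=\levicivita{\dirf}{(\levicivita{\dirf}{\meanc}+\levicivita{\shop}{\dirf})}-\levicivita{\shop\dirf}{(\DivSurf\dirf)}$; expanding $\DivSurf((\DivSurf\dirf)\shop\dirf)$ and $\DivSurf((\levicivita{\shop}{\dirf})\dirf)$ with $\DivSurf(fX)=\levicivita{X}{f}+f\DivSurf X$ and reinserting $\DivSurf(\shop\dirf)$, the two copies of $\levicivita{\shop\dirf}{(\DivSurf\dirf)}$ cancel and one is left with
\begin{align*}
    \levicivita{\dirf}{(\levicivita{\dirf}{\meanc})} + 2\levicivita{\dirf}{(\levicivita{\shop}{\dirf})} + (\DivSurf\dirf)\levicivita{\dirf}{\meanc} + 2(\DivSurf\dirf)\levicivita{\shop}{\dirf} \formPeriod
\end{align*}
A single application of the Leibniz rule to the right-hand side, together with $\Tr\Lie_{\dirf}\shop=\levicivita{\dirf}{\meanc}+2\levicivita{\shop}{\dirf}$, produces exactly the same expression, which closes the argument. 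The difficulty is therefore not conceptual but resides entirely in getting the Codazzi contraction and the index bookkeeping right; once $\DivSurf(\shop\dirf)$ is known, the matching of the two sides is forced.
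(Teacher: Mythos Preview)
Your argument is correct. The divergence--bracket identity, the Codazzi contraction $\DivSurf\shop=\GradSurf\meanc$, and the Leibniz expansion all go through as you describe, and the final matching is exact.

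The paper's proof takes a different route: rather than expanding both sides to a common scalar expression, it shows that the \emph{difference} of the two vector fields inside the divergences is itself a surface curl, namely $\BigRotSurf\langle\shop\dirf,*\dirf\rangle$, and then invokes $\DivSurf\circ\BigRotSurf=0$. Concretely, they compute $\LCC^{ij}\LCC_{km}(\shopC^k_l\dirfC^m\dirfC^l)_{|j}$ and recognise it as $\mathcal{L}_{\dirf}(\shop\dirf)+(\DivSurf\dirf)\shop\dirf-(\levicivita{\dirf}{\meanc})\dirf-(\levicivita{\shop}{\dirf})\dirf$. Both proofs ultimately rely on Codazzi at the same point (to turn $\shopC^j_{l|j}$ into $\meanc_{|l}$), so the geometric content is identical. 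Your approach is more direct and avoids the Hodge star and Levi--Civita tensor machinery; the paper's approach gives slightly more information, since it identifies the vector-level discrepancy as exact rather than merely divergence-free. One minor remark: the aside ``since $[\dirf,\dirf]=0$'' is unnecessary, as $\mathcal{L}_X Y=[X,Y]$ is just the definition of the Lie derivative on vector fields.
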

\begin{proof}
    With Levi-Civita tensor $\LC=\left\{ \sqrt{\det\g}\ \epsilon_{ij} \right\} \cong \mu$, Hodge dual vector field $*\dirf=-\LC\dirf$ and the function curl $\BigRotSurf f = -\LC\GradSurf f$, subtracting the right-hand side from the left-hand side gives
    \begin{align*}
        \DivSurf\left( \mathcal{L}_{\dirf}(\shop\dirf) + (\DivSurf\dirf)\shop\dirf  - \left( \levicivita{\dirf}{\meanc} \right)\dirf - (\levicivita{\shop}{\dirf})\dirf \right) &= \DivSurf\BigRotSurf\left\langle \shop\dirf, *\dirf \right\rangle =0 \formComma
    \end{align*}
    since $ \DivSurf\circ\BigRotSurf=0 $ on functions and
    \begin{align*}
        \left[\BigRotSurf\left\langle \shop\dirf, *\dirf \right\rangle\right]^i &= \LCC^{ij}\LCC_{km}\left( \shopC^k_l \dirfC^m \dirfC^l \right)_{|j}\\
        &= \left( \delta^i_k\delta^j_m - \delta^i_m\delta^j_k \right) \left( \shopC^k_l \dirfC^m \dirfC^l \right)_{|j} = \left( \shopC^i_l\dirfC^j\dirfC^l - \shopC^j_l\dirfC^i\dirfC^l \right)_{|j}\\
        &= \left[ \levicivita{\dirf}{\left( \shop\dirf \right)} + (\DivSurf\dirf)\shop\dirf - \left( \levicivita{\dirf}{\meanc} \right)\dirf - \levicivita{\shop\dirf}{\dirf} - (\levicivita{\shop}{\dirf})\dirf\right]^i\formComma
    \end{align*}
    where the Hodge compatibility $\GradSurf\LC=0$ was used and $\levicivita{\dirf}{\left( \shop\dirf \right)} - \levicivita{\shop\dirf}{\dirf} = \mathcal{L}_{\dirf}(\shop\dirf)$ is the Lie derivative of the contravariant vector field $\shop\dirf\in\tangent^{1}\surf$ in direction of $\dirf$.
\end{proof}
Additionally, the contravariant components of some of the used operators are considered, \ie 
$[ \laplaceBochner\qten]^{ij} = \tensor{\qtenC}{^{ij|k}_{|k}}$, $[ \qten\levicivita{\shop}{\qten}]^i = \qtenC^{ij}\shopC^{kl}\qtenC_{jk|l}$, 
$[ \levicivita{\qten\shop}{\qten}]^i = \qtenC_{jk}\shopC^{kl}\tensor{\qtenC}{^{ij}_{|l}}$, $[ \qten:(\GradSurf\qten)\shop]^i = \qtenC^{jk}\shopC^{li}\qtenC_{jk|l}$, $[ \levicivita{\shop}{\qten}]^i = \shopC_{jk}\qtenC^{ij|k}$ and $[ (\GradSurf\qten)^{T_{(123)}}:\GradSurf\qten]^{ij} = \qtenC^{kl|i}\tensor{\qtenC}{_{kl}^{|j}}$.

\section{Kinetic Considerations}
\label{sec:kinetic}

We consider kinetic properties, \ie\ time depending behaviours of some quantities such as material time derivative, rate of normal vector or energy dissipation.

\subsection{Time Derivative}
        
Since particle motion is considered only in normal direction, the Lagrangian and the transversal or intrinsic Eulerian observer coincide due to vanishing relative velocity.
For simplicity, we here only consider these equivalent observers and define the material time derivative covariantly by the tangential part of the time derivative in the embedding space and denote it by a dot. 
Thus, for scalar fields $f\in\tangentT{^0}\surf$, vector fields $\dirf\in\tangentT{^1}\surf$ and 2-tensor fields $\rten\in\tangentT{^2}\surf$ we state that
\begin{align}
    \dot{f} &:= \partial_t f\\
    \dot{\dirf} &:= \proj_{\surf}\left[\partial_t\dirf\right]
                  = g^{ik}\left< \partial_t\dirf,\partial_k\para\right>\partial_i\para\\
    \dot{\rten} &:= \proj_{\surf}\left[\partial_t\rten\right]
                  = g^{ik}g^{jl}\left< \partial_t\rten,\partial_k\para\otimes\partial_l\para\right>\partial_i\para\otimes\partial_j\para 
\end{align}
With chain rule, $\normal\bot\partial_k\para$ and \eqref{eq:paraDot} the covariant components of $\dot{\dirf}$ read
\begin{align}
    \left[\dot{\dirf}\right]_k 
        &= \left< \partial_t\dirf,\partial_k\para\right>
         = \left< \left(\partial_t\dirfC^i\right)\partial_i\para + \dirfC^i\partial_i\dot{\para}, \partial_k\para\right>
         = g_{ik}\partial_t\dirfC^i - \vnor\dirfC^i\shopC_{ik} \formComma
\end{align}
which eventually gives the time derivative of the vector filed $\dirf$ in terms of contravariant proxy rates, \ie
\begin{align}\label{eq:vecdot}
    \dot{\dirf} &= \left(\partial_t\dirfC^i\right)\partial_i\para - \vnor\shop\dirf \formPeriod
\end{align}
Analogously we obtain for the 2-tensor field $\rten$ that
\begin{align}\label{eq:twotendot}
    \dot{\rten} &= \left(\partial_t\rtenC^{ij}\right)\partial_i\para\otimes\partial_j\para - \vnor\left(\shop\rten + \rten\shop\right) \formPeriod
\end{align}
Similar to the surface variation $\dsurf$, the time derivative for 2-tensors is closed \wrt\ symmetry and trace-free behaviour.
This means, it holds for all $\qten\in\qspace$ that $\dot{\qten}\in\qspace$, since $\partial_t g_{ij} = -2\vnor\shopC_{ij}$ and thus $g_{ij}\partial_t\qtenC^{ij}= 2v\left<\shop,\qten\right> $.
    
\subsection{Normal Vector Rate}
Eq. \eqref{eq:paraDot}, $\normal\bot\partial_j\para$ and $\|\normal\|=1$ yield
\begin{align}
    \partial_t\normal
        &= \left<\partial_t \normal, \normal\right>\normal + g^{ij}\left<\partial_t \normal, \partial_j \para\right>\partial_i \para
         = -g^{ij}\left<\normal, \partial_j \dot{\para}\right>\partial_i \para
         = -g^{ij}\left<\normal, \left(\partial_j \vnor\right)\normal\right>\partial_i \para\notag\\
        &= -\GradSurf\vnor \formPeriod \label{eq:normalevolution}
\end{align}
A similar evolving geometric quantity has been considered in \cite{Huisken_JDG_1984} in the context of mean curvature flow.
    
\subsection{Dissipation of Energy}\label{sec:dissipationenergy}
With both $L^2$-gradient flows \eqref{eq:gradientflow:p:vnor}--\eqref{eq:gradientflow:q:qten}, their assumptions of independence and the theorem below the energy dissipation rates are given by 
\begin{align}
    \ddt\PotEP &= -\left(\kinConst\left\|\vnor\right\|^2_{L^2(\surf,\tangentT{^0}\surf)} + \kinConstP\left\|\dot{\dirf}\right\|^2_{L^2(\surf,\tangentT{^1}\surf)}\right)
               \le 0 \label{eq:dissennergyP}\\
    \ddt\PotEQ &= -\left(\kinConst\left\|\vnor\right\|^2_{L^2(\surf,\tangentT{^0}\surf)} + \kinConstQ\left\|\dot{\qten}\right\|^2_{L^2(\surf,\qspace)}\right)
               \le 0 \formComma\label{eq:dissennergyQ}
\end{align}
\ie\ the energy decrease in a $L^2$-manner, similar to classical geometric evolution equations of $L^2$-gradient flows structure and $L^2$-gradient flows on stationary surfaces, as considered in \cite{Nestleretal_JNS_2018,Nitschkeetal_PRSA_2018}.
\begin{theorem}
    Assume that a tensor field $\rten\in\tangentT{^n}\surf$ is independent of a surface $\surf=\surf[\xvar]$ in the sense of $\proj_{\surf}\dsurf\rten =0 $ or $\frac{\textup{d}\rten}{\textup{d}\xvar}= 0$, equivalently.
    Then it holds for an energy functional $\PotE=\PotE[\xvar,\rten]$ the chain rule
    \begin{align}\label{eq:generalenergydissipation}
        \ddt\PotE &= \nabla^*_{\!(\dot{\xvar},\dot{\rten})}\PotE
               := \left(\frac{\delta\PotE}{\dxvar}\right)^*\left[\dot{\xvar}\right]
                + \left(\frac{\delta\PotE}{\delta\rten}\right)^*\left[\dot{\rten}\right] \formComma
    \end{align}
    where the star $*$ denotes the dual element \wrt\ the $L^2$-inner product.
\end{theorem}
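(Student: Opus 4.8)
The plan is to regard $\PotE$ as a genuine function of the two \emph{independent} arguments $\xvar$ and $\rten$ and to reduce the claim to the ordinary chain rule for a function of two variables. The hypothesis that $\rten$ is independent of the surface, $\proj_{\surf}\dsurf\rten = 0$ (equivalently $\frac{d\rten}{d\xvar} = 0$), is exactly the statement that the shape variable $\xvar$ and the intrinsic tensor data of $\rten$ constitute uncoupled coordinates on configuration space, so that the two may be perturbed separately. Writing the evolution as a path $t\mapsto(\xvar(t),\rten(t))$, I would split $\ddt\PotE$ into a shape contribution (surface moving, with $\rten$ carried along independently) and a field contribution (surface frozen, intrinsic data of $\rten$ evolving), and then identify each with the corresponding functional derivative and velocity.

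For the shape contribution I would hold $\proj_{\surf}\dsurf\rten = 0$ and invoke the consistency between the surface variation $\dsurf$ and normal time evolution recorded in \eqref{eq:paraDot}: because the admissible perturbation $\delta\para = \dxvar\normal$ has the same form as the flow $\dot\para = \vnor\normal$, differentiating along the flow in the shape direction amounts to evaluating $\dsurf\PotE$ with the replacement $\dxvar\to\dot\xvar=\vnor$. By the very definition of $\frac{\delta\PotE}{\dxvar}$ together with the transport theorem \eqref{eq:transport}, this contribution is $\left(\frac{\delta\PotE}{\dxvar}\right)^{\!*}[\dot\xvar]$. For the field contribution I would freeze the surface and vary only the contravariant proxies of $\rten$; since $\frac{\delta\PotE}{\delta\rten}$ is defined as the $L^2$-dual of such fixed-surface variations and the material derivative $\dot\rten$ is again a tangential tensor field in $\tangentT{^n}\surf$ (the rate being closed under symmetry and trace, \cf\ \eqref{eq:vecdot}--\eqref{eq:twotendot}), this contribution is $\left(\frac{\delta\PotE}{\delta\rten}\right)^{\!*}[\dot\rten]$. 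Summing the two gives \eqref{eq:generalenergydissipation}.

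The step I expect to be the main obstacle is verifying that the field contribution carries precisely the material derivative $\dot\rten$ and not the raw component rates $\partial_t\rtenC^{i_1\cdots i_n}$ or the full ambient rate $\partial_t\rten$. The point is a bookkeeping of the $\vnor$-proportional shape-operator corrections, \eg\ $-\vnor(\shop\rten + \rten\shop)$ in \eqref{eq:twotendot}: these same corrections are already produced by the independence transport in the shape channel, where the contravariant components must change by $\dsurf\rtenC^{ij} = [\shop\rten + \rten\shop]^{ij}\dxvar$ to keep the ambient tangential tensor fixed. One must check that the shape-operator terms generated by the moving frame inside $\partial_t\rten$ split cleanly into the part absorbed by the independence condition (shape channel) and the remainder $\dot\rten$ (field channel), so that $\partial_t\rtenC^{\cdots}$ is reconstructed without double counting. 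Once this cancellation is confirmed the two channels are genuinely disjoint and the chain rule applies verbatim, completing the proof.
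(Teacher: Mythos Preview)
Your proposal is correct and follows essentially the same route as the paper: reduce to the ordinary chain rule for a function of the two independent variables $(\xvar,\rten)$, and verify that the shape-operator corrections built into the independence condition $\dsurf\rtenC^{i_1\cdots i_n}=[\text{shape-op}]^{i_1\cdots i_n}\dxvar$ on the shape channel exactly match the corrections $\dot\rten=\{\partial_t\rtenC^{i_1\cdots i_n}\}-\vnor[\text{shape-op}]$ on the field channel. The paper carries this out concretely by passing to a local chart $Y\subset\R^2$, writing the energy density as $\tilde f(\xvar,\rtenC^{1},\ldots)$ in terms of contravariant proxies, and showing (for $n=1$, with the obvious extension) that $\left(\tfrac{d\tilde f}{d\xvar}\right)\dot\xvar+\tfrac{\partial\tilde f}{\partial\rtenC^{i}}[\dot\rten]^i=\tfrac{\partial\tilde f}{\partial\xvar}\dot\xvar+\tfrac{\partial\tilde f}{\partial\rtenC^{i}}\tfrac{d\rtenC^{i}}{dt}=\ddt\tilde f$; this is precisely the cancellation you flag as the main obstacle.
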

\begin{proof}
    By linearity of the integral, it is sufficient to show this statement on a subset    $\Sigma:=\para[\xvar](Y)\subseteq\surf$ with $Y=(y^1,y^2)\subset\R^2$.
    To use common differential calculus locally, we formulate the energy functional as
    \begin{align*}
        \PotE_{\Sigma}\left[\xvar,\rten\right]
            &= \int_{\Sigma} u\left(\xvar,\rten\right)\mu
             = \int_{Y}f\left(\xvar,\rten\right)\dup Y\formComma
    \end{align*}
    where $\dup Y= \textup{d}y^1\!\wedge\!\textup{d}y^2$, \ie\ the function $f$ equals energy density $u$ times area density function $\sqrt{\det\g}=\sqrt{\det\g(\xvar)}$.
    By definitions of functional and function derivative and the independence between $\xvar$ and $\rten$ the right-hand side of \eqref{eq:generalenergydissipation} reads
    \begin{align*}
        \nabla^*_{\!(\dot{\xvar},\dot{\rten})}\PotE_{\Sigma}
            &= \lim_{\epsilon_{\xvar}\rightarrow 0} \frac{\PotE_{\Sigma}\left[\xvar+ \epsilon_{\xvar}\dot{\xvar},\rten\right] - \PotE_{\Sigma}\left[\xvar,\rten\right]}{\epsilon_{\xvar}}
              +\lim_{\epsilon_{\rten}\rightarrow 0} \frac{\PotE_{\Sigma}\left[\xvar,\rten+ \epsilon_{\rten}\dot{\rten}\right] - \PotE_{\Sigma}\left[\xvar,\rten\right]}{\epsilon_{\xvar}} \\
            &=  \int_{Y} \lim_{\epsilon_{\xvar}\rightarrow 0} \frac{f\left(\xvar+ \epsilon_{\xvar}\dot{\xvar},\rten\right) - f\left(\xvar,\rten\right)}{\epsilon_{\xvar}}
              +\lim_{\epsilon_{\rten}\rightarrow 0} \frac{f\left(\xvar,\rten+ \epsilon_{\rten}\dot{\rten}\right) - f\left(\xvar,\rten\right)}{\epsilon_{\xvar}} \dup Y \\
            &= \int_{Y} \left(\frac{\partial f}{\partial \xvar}\right)\dot{\xvar} + \left(\frac{\partial f}{\partial \rten}\right)^* \left(\dot{\rten}\right) \dup Y
             = \int_{Y} \left(\frac{\dup f}{\dup \xvar}\right)\frac{\dup \xvar}{\dup t} + \left<\frac{\dup f}{\dup \rten}, \frac{\dup \rten}{\dup t}\right>_{\R^3} \dup Y\\
            &= \int_{Y} \ddt f\dup Y
             = \ddt\PotE_{\Sigma}\formComma
    \end{align*}
    since $\frac{\dup f}{\dup \rten}\in\tangentT{^n}\surf$.
    
    Actually, the circuit over the local $\R^3$-inner product is not necessary, if we reformulate $f$ such that it depends on the contravariant proxy functions of the tensor field $\rten$, partially. 
    To prevent confusing about interpretations of the differential expressions, let us consider the situation for $n=1$, where we set $\rten=\dirf\in\tangentT{^1}\surf$.
    Moreover, we define $\tilde{f}(\xvar,\dirfC^1,\dirfC^2):= f(\xvar,\dirf)$ with proxy functions $\dirfC^i=\dirfC^i(\xvar)=\dirfC^i(t,\xvar(t))$.
    Therefore, it holds
    \begin{align*}
        \frac{\dup \tilde{f}}{\dup \xvar}
            &= \frac{\partial \tilde{f}}{\partial \xvar} + \frac{\partial \tilde{f}}{\partial\dirfC^i}\frac{\partial\dirfC^i}{\partial\xvar}
        &&\text{and} 
        &\left(\frac{\dup \tilde{f}}{\dup \dirf}\right)^*
            &= \frac{\partial \tilde{f}}{\partial \dirfC^i}\dup y^i \formPeriod
    \end{align*}
    By definition of the variation $\dsurf$, we notice that 
    $\dsurf\dirfC^i = \frac{\partial\dirfC^i}{\partial\xvar}\dxvar$, \ie\ with arbitrariness of $\dxvar$, assumption $\proj_{\surf}\dsurf\dirf =0 $ and its representation
    \eqref{eq_codsurfdirf} the dependency of $\xi$ for the proxy functions is given by $\frac{\partial\dirfC^i}{\partial\xvar} = [\shop\dirf]^i$.
    Hence, the components of the time derivative \eqref{eq:vecdot} can be written as
    \begin{align*}
       \left[\dot{\dirf}\right]^i
            &= \frac{\dup \dirfC^i}{\dup t} - \frac{\partial\dirfC^i}{\partial\xvar}\frac{\dup \xvar}{\dup t} \formPeriod
    \end{align*}
    To sum up, we have
    \begin{align*}
        \left(\frac{\dup \tilde{f}}{\dup \xvar}\right)\dot{\xvar} + \left(\frac{\dup \tilde{f}}{\dup \dirf}\right)^* \left(\dot{\dirf}\right)
            &= \frac{\partial \tilde{f}}{\partial \xvar}\frac{\dup \xvar}{\dup t} + \frac{\partial \tilde{f}}{\partial \dirfC^i}\frac{\dup \dirfC^i}{\dup t}
             =\ddt\tilde{f}
             =\ddt f \formPeriod
    \end{align*}
    Note that for $n>1$ the argumentation would be exactly the same, but with $2^n$ contravariant proxy functions.
    For $n=0$ the situation is much simpler, since $\tilde{f}=f$. Note that in this case the projection $\proj_{\surf}$ stated in the assumption is just the scalar identity.
\end{proof}

\section{Numerical Tests}
\label{sec:numericaltests}

\subsection{Numerical Computation of Geometric Quantities}

To numerically solve eqs. \eqref{eq:gradientflow:p:vnor} and \eqref{eq:gradientflow:q:vnor} requires a \textit{sound approximation} of all geometric quantities. 
They all can be computed from the normal vector, which is obtained through $\partial_t\normal = -\GradSurf\vnor$, see eq. \eqref{eq:normalevolution}. 
This approach turns out to be much more accurate than the often used (weighted) element normal vector. 
We show the numerical convergence (by means of the timestep width $\tau$) of this equation for the  \helfrich\ and the surface area penalization energy for an ellipsoidal shape with axes parameters $(1,1,1.25)$ as initial geometry. 
The ellipsoid converges to a sphere, for which the normal vector is analytically known. 
Thus, we compare the computed normal vector according to eq. \eqref{eq:normalevolution} and the analytical one in the steady state regime according to the $H^1$ semi-norm. 
\autoref{fig:convergence_study} shows the results, where a linear convergence rate \wrt\ the timestep width $\tau$ can be observed. 
It is noted that we use a much smaller timestep width -- according to \autoref{tab:parameters} -- for the simulation as for the convergence evaluations in \autoref{fig:convergence_study} in order to reduce the temporal error made in the normal velocity equation. 

\begin{figure}[!h]
    \centering
    \ifthenelse{\boolean{useSnapshots}}
    {
		\ifthenelse{\boolean{forArxiv}}
		{
			\includegraphics[width=0.675\textwidth]{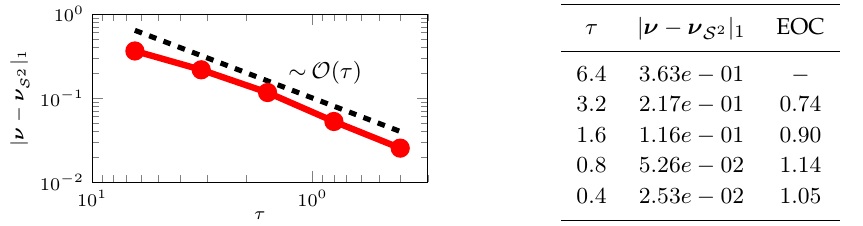}
		}
		{
			\includegraphics[width=0.9\textwidth]{snapshots/fig_10.png}
		}
    }
    {
		\ifthenelse{\boolean{forArxiv}}
		{
			\begin{minipage}{0.75\textwidth}
		}
		{
			\begin{minipage}{\textwidth}
		}
				\centering
				\begin{minipage}{0.49\textwidth}
					\centering
					\inputTikzPic{pics/convstudy/normalVectorError.tex}
				\end{minipage}
				\begin{minipage}{0.49\textwidth}
					\centering
					\begin{tabular}{ccc}
						\hline\noalign{\smallskip}
						$\tau$ & $|\normal-\normal_{\mathcal{S}^2}|_1$ & $\textup{EOC}$ \\
						\noalign{\smallskip}\hline\noalign{\smallskip}
						$6.4$ & $3.63e-01$ & $-$ \\
						$3.2$ & $2.17e-01$ & $0.74$ \\
						$1.6$ & $1.16e-01$ & $0.90$ \\
						$0.8$ & $5.26e-02$ & $1.14$ \\
						$0.4$ & $2.53e-02$ & $1.05$ \\
						\noalign{\smallskip}\hline
					\end{tabular}
				\end{minipage}
			\end{minipage}
	}
    \caption{Left: Normal vector error against timestep width $\tau$. Thereby, $|\cdot|_1$ denotes the spatial $H^1$ semi-norm and $\normal_{\mathcal{S}^2}$ is the normal vector of the unit sphere. Right: Normal vector error together with experimental order of convergence (EOC) values for various timestep widths $\tau$.}
    \label{fig:convergence_study}
\end{figure}

\subsection{Surface Area Conservation}

In the \frankOseenHelfrich\ model and the \landauDeGennesHelfrich\ model, the surface area penalization energy is included in order to achieve constant surface area over time. 
This approximation depends on the penalty parameter $\areaPen$. 
In the following the convergence rate \wrt\ $\areaPen$ is numerically investigated.
Here, we use qualitatively the same setup as in \autoref{fig:surface_response:results}, but with slightly different parameters, see \autoref{tab:parameters}.
The results are shown in \autoref{fig:surfaceAreaTest}, where a linear convergence rate \wrt\ $\areaPen$ can be observed.

\begin{figure}[!h]
    \centering
    \ifthenelse{\boolean{useSnapshots}}
    {
		\ifthenelse{\boolean{forArxiv}}
		{
			\includegraphics[width=0.7125\textwidth]{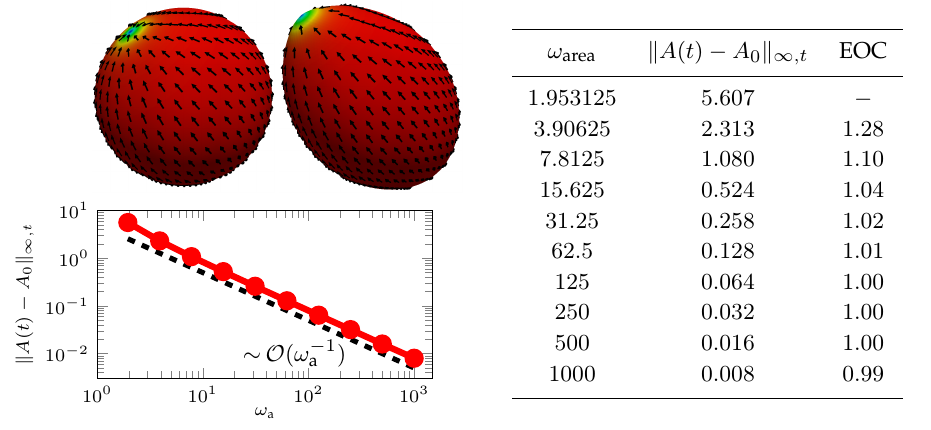}
		}
		{
			\includegraphics[width=0.95\textwidth]{snapshots/fig_11.png}
		}
    }
    {
		\ifthenelse{\boolean{forArxiv}}
		{
			\begin{minipage}{0.75\textwidth}
		}
		{
			\begin{minipage}{\textwidth}
		}
				\centering
				\begin{minipage}{0.49\textwidth}
					\raggedleft
					\includegraphics[width=0.38\textwidth]{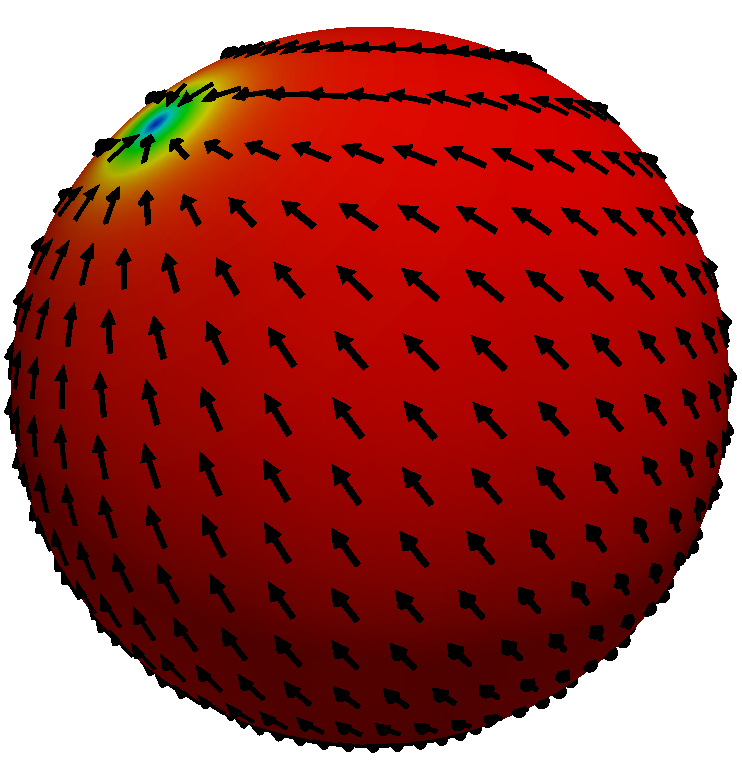}
					\includegraphics[width=0.38\textwidth]{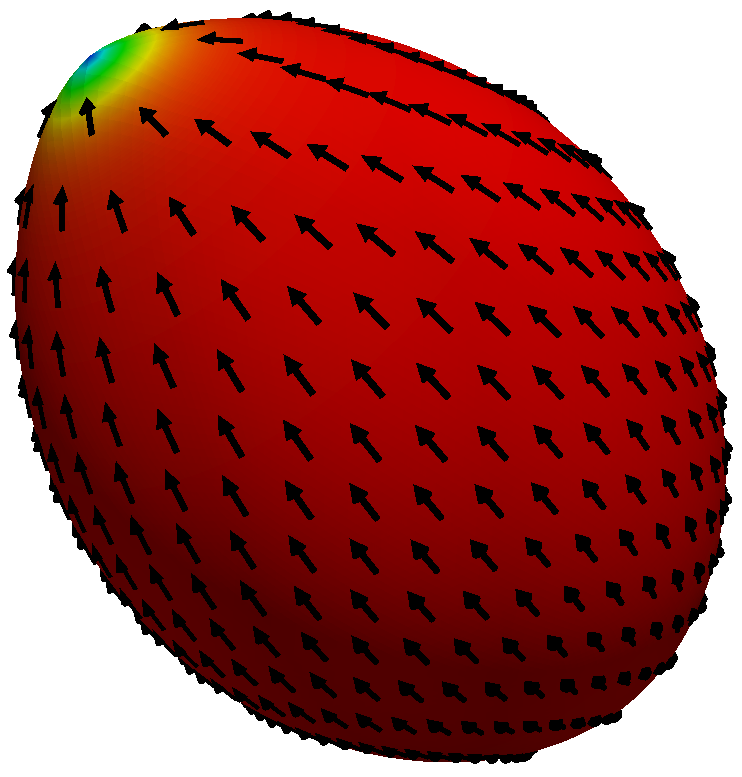}\\
					\centering
					\inputTikzPic{pics/omegaAreaTest/surfaceAreaErrorVsOmegaArea.tex}
				\end{minipage}
				\begin{minipage}{0.49\textwidth}
					\centering
					\begin{tabular}{ccc}
						\hline\noalign{\smallskip}
						$\omega_{\textup{area}}$ & $\|A(t)-A_0\|_{\infty,t}$ & $\textup{EOC}$ \\
						\noalign{\smallskip}\hline\noalign{\smallskip}
						$1.953125$ & $5.607$ & $-$ \\
						$3.90625$  & $2.313$ & $1.28$ \\
						$7.8125$   & $1.080$ & $1.10$ \\
						$15.625$   & $0.524$ & $1.04$ \\
						$31.25$    & $0.258$ & $1.02$ \\
						$62.5$     & $0.128$ & $1.01$ \\
						$125$      & $0.064$ & $1.00$ \\
						$250$      & $0.032$ & $1.00$ \\
						$500$      & $0.016$ & $1.00$ \\
						$1000$     & $0.008$ & $0.99$ \\
						\noalign{\smallskip}\hline
					\end{tabular}
				\end{minipage}
			\end{minipage}
	}
    \caption{Left: Initial condition as stable configuration on the unit sphere (top left), reached steady state solution on the deformed surface (top right) and surface area error against surface area penalization parameter $\areaPen$ (bottom) with the temporal maximum norm $\|\cdot\|_{\infty,t}$. Right: Surface area error together with experimental order of convergence (EOC) values for various surface area penalization parameters $\areaPen$.}
    \label{fig:surfaceAreaTest}
\end{figure}

\bibliographystyle{RS}
\bibliography{bib}
\end{document}